\LetLtxMacro{\ORIGselectlanguage}{\selectlanguage}
\DeclareRobustCommand{\selectlanguage}[1]{%
  \@ifundefined{alias@\string#1}
    {\ORIGselectlanguage{#1}}
    {\begingroup\edef\x{\endgroup
       \noexpand\ORIGselectlanguage{\@nameuse{alias@#1}}}\x}%
}
\newcommand{\definelanguagealias}[2]{%
  \@namedef{alias@#1}{#2}%
}
\renewcommand\onecolumngrid{
\do@columngrid{one}{\@ne}%
\def\set@footnotewidth{\onecolumngrid}
\def\footnoterule{\kern-6pt\hrule width 1.5in\kern6pt}%
}
\renewcommand\twocolumngrid{
        \def\footnoterule{
        \dimen@\skip\footins\divide\dimen@\thr@@
        \kern-\dimen@\hrule width.5in\kern\dimen@}
        \do@columngrid{mlt}{\tw@}
}%
\newcommand{\be}{\begin{equation}}
\newcommand{\ee}{\end{equation}}
\newcommand{\bea}{\begin{eqnarray}}
\newcommand{\eea}{\end{eqnarray}}
\newcommand{\pdv}[3][]{\frac{\partial^{#1} #2}{\partial^{#1} #3}}
\NewDocumentCommand\differential{ o g d() }{ 
	\IfNoValueTF{#2}{
		\IfNoValueTF{#3}
			{\mathrm{d}\IfNoValueTF{#1}{}{^{#1}}}
			{\mathinner{\mathrm{d}\IfNoValueTF{#1}{}{^{#1}}\argopen(#3\argclose)}}
		}
		{\mathinner{\mathrm{d}\IfNoValueTF{#1}{}{^{#1}}#2} \IfNoValueTF{#3}{}{(#3)}}
	}
\NewDocumentCommand\dd{}{\differential} 
\newcommand{\rr}{\mathbb{R}}
\newcommand{\cc}{\mathbb{C}}
\newcommand{\pp}{\mathbb{P}}
\newcommand{\Abb}{\mathbb{A}}
\newcommand{\est}[1]{\hat{#1}^{(N)}}
\newcommand{\estn}[1]{{#1}^{(N)}}
\newcommand{\rmT}{\mathrm{T}}
\newcommand{\rmD}{\mathrm{D}}
\newcommand{\projH}{\mathsf{P}(\mathcal{H})}
\newcommand{\tg}{\tilde{g}}
\newcommand{\tM}{\tilde{M}}
\newtheorem{theorem}{Theorem}
\newtheorem{proposition}{Proposition}
\newtheorem{lemma}{Lemma}
\newtheorem{corollary}{Corollary}
\newtheorem{definition}{Definition}
\newtheorem{fact}{Fact}
\newtheorem{assumption}{Assumption}
\newcommand{\tr}{\mathop{\mathrm{Tr}}}
\newcommand{\re}{\mathop{\mathrm{Re}}}
\newcommand{\sgn}{\mathop{\mathrm{sgn}}}
\newcommand{\abs}[1]{\lvert#1\rvert}
\newcommand{\norm}[1]{\lVert#1\rVert}
\newcommand{\Tr}{\mathrm{Tr}}
\DeclareMathOperator{\GL}{GL}
\DeclareMathOperator{\Expect}{\mathbb{E}}
\newcommand{\braket}[2]{\mbox{$ \langle #1 | #2 \rangle $}}
\newcommand{\ket}[1]{\mbox{$ | #1 \rangle $}}
\newcommand{\bra}[1]{\mbox{$ \langle #1 | $}}
\newcommand{\ev}[1]{\langle #1 \rangle}
\newcommand{\dyad}[1]{\rvert #1 \rangle \! \langle #1 \lvert}
\newcommand{\ketbra}[2]{\lvert #1 \rangle \! \langle #2 \rvert}
\newcommand{\expval}[2]{\langle #2 \lvert #1 \rvert  #2 \rangle}
\newcommand{\matrixel}[3]{\langle #1 \lvert #2 \rvert  #3 \rangle}
\definecolor{codegreen}{rgb}{0,0.6,0}
\definecolor{codegray}{rgb}{0.5,0.5,0.5}
\definecolor{codepurple}{rgb}{0.58,0,0.82}
\definecolor{backcolour}{rgb}{0.95,0.95,0.92}
\lstdefinestyle{mystyle}{
    backgroundcolor=\color{backcolour},   
    commentstyle=\color{codegreen},
    keywordstyle=\color{magenta},
    numberstyle=\tiny\color{codegray},
    stringstyle=\color{codepurple},
    basicstyle=\ttfamily\footnotesize,
    breakatwhitespace=false,         
    breaklines=true,                 
    captionpos=b,                    
    keepspaces=true,                 
    numbers=left,                    
    numbersep=5pt,                  
    showspaces=false,                
    showstringspaces=false,
    showtabs=false,                  
    tabsize=2
}
\newcommand{\equref}[1]{Eq.~\eqref{#1}}
\newcommand{\figref}[1]{Fig.~\ref{#1}}
\newcommand{\supfigref}[1]{Supplementary Fig.~\ref{#1}}
\newcommand{\secref}[1]{Sec.~\ref{#1}}
\newcommand{\refcite}[1]{Ref.~\cite{#1}}
\newcommand{\appref}[1]{Appendix~\ref{#1}}
\newcommand{\wwc}[1]{#1}
\begin{document}

\title{Learning topological states from randomized measurements using \\
variational tensor network tomography}
\date{\today}
\author{Yanting Teng}
\affiliation{Department of Physics, Harvard University, Cambridge, MA 02138, USA}
\affiliation{
\wwc{Institute of Physics, Ecole Polytechnique Fédéderale de Lausanne (EPFL), CH-1015, Lausanne, Switzerland}}
\author{Rhine Samajdar}
\affiliation{Department of Physics, Princeton University, Princeton, NJ 08544, USA}
\affiliation{Princeton Center for Theoretical Science, Princeton University, Princeton, NJ 08544, USA}
\author{Katherine Van Kirk}
\affiliation{Department of Physics, Harvard University, Cambridge, MA 02138, USA}
\author{Frederik Wilde}
\affiliation{Dahlem Center for Complex Quantum Systems, Freie Universit{\"a}t Berlin, 14195 Berlin, Germany}
\author{Subir Sachdev}
\affiliation{Department of Physics, Harvard University, Cambridge, MA 02138, USA}
\author{Jens Eisert}
\affiliation{Dahlem Center for Complex Quantum Systems, Freie Universit{\"a}t Berlin, 14195 Berlin, Germany}
\author{Ryan Sweke}
\affiliation{IBM Quantum, Almaden Research Center, San Jose, CA 95120, USA}
\author{Khadijeh Najafi}
\affiliation{IBM Quantum, IBM T.J. Watson Research Center, Yorktown Heights, NY 10598 USA}
\affiliation{MIT-IBM Watson AI Lab, Cambridge, MA 02142, USA
}


 \begin{abstract}

Learning faithful representations of quantum states is crucial to fully characterizing the variety of many-body states created on quantum processors. 
While various tomographic methods such as classical shadow and MPS tomography have shown promise in characterizing a wide class of quantum states, they face unique limitations in detecting topologically ordered two-dimensional states. 
To address this problem, we implement and study a heuristic tomographic method that combines variational optimization on tensor networks with randomized measurement techniques. 
Using this approach, we demonstrate its ability to learn the ground state of the surface code Hamiltonian as well as an experimentally realizable quantum spin liquid state. 
In particular, we perform numerical experiments using MPS ans\"atze and systematically investigate the sample complexity required to achieve high fidelities for systems with sizes of up to $48$ qubits. 
In addition, we provide theoretical insights into the scaling of our learning algorithm by analyzing the statistical properties of maximum likelihood estimation. 
Notably, our method is sample-efficient and experimentally friendly, only requiring snapshots of the quantum state measured randomly in the $X$ or $Z$ bases. Using this subset of measurements, our approach can effectively learn any real pure states represented by tensor networks, and we rigorously prove that random-$XZ$ measurements are tomographically complete for such states.
 \end{abstract}

\maketitle

\section{Introduction}
In recent years, quantum simulators have proven instrumental to studying a diverse range of correlated quantum many-body systems, facilitating the investigation of phenomena that are often difficult to probe in conventional solid-state settings~\cite{CiracZollerSimulation}. Today, there exist a host of candidate quantum simulation architectures, including ultracold atoms in optical lattices~\cite{BlochSimulation}, neutral atom arrays~\cite{ebadiQuantumPhasesMatter2021}, trapped ions~\cite{BlattSimulator}, photonic devices~\cite{madsen2022quantum}, and superconducting quantum processors~\cite{KochSimulator}; additionally, many of these platforms can be operated in either analog or digital modes. A common use case for such simulators is preparing the ground (or low-energy) states of some desired Hamiltonian. Importantly, going beyond simple symmetry-breaking ground states, quantum simulation has also enabled the experimental realization of highly entangled states with topological order~\cite{semeghiniProbingTopologicalSpin2021,satzingerRealizingTopologicallyOrdered2021}, characterized by fractionalized excitations and emergent gauge fields~\cite{Wen91,Sachdev92}.  

Tomographic and other recovery methods for extracting information from these experiments are crucial, in particular, when they provide 
detailed diagnostic information about the prepared quantum state~\cite{BenchmarkingReview}.
Indeed, recent approaches using randomized measurements have shown promise for measuring specific properties, such as the R\'enyi entropy~\cite{satzingerRealizingTopologicallyOrdered2021, bluvsteinQuantumProcessorBased2022, brydgesProbingRenyiEntanglement2019, hu2024demonstration}. However, extracting the von Neumann entanglement entropy across arbitrary subsystem sizes using these methods might require multiple copies, substantially more experimental control, or more samples than what is currently available. 
\emph{Quantum state tomography} (QST) is designed to reconstruct a quantum state from informationally complete measurements and is useful for systems where the desired observables are not known \textit{a priori}~\cite{vankirkHardwareefficientLearningQuantum2022} or cannot be measured due to the constraints of the experimental setup. Although QST is generally not scalable~\cite{grossQuantumStateTomography2010, haahSampleOptimalTomographyQuantum2017}, it becomes significantly more effective when applied with priors for particular classes of states. This makes QST a valuable alternative for extracting many properties from such systems, particularly when traditional methods are limited.
Established state tomographic methods like \emph{matrix product state} (MPS) tomography~\cite{cramerEfficientQuantumState2010,  baumgratzScalableReconstructionDensity2013,Wick_MPS} provide a natural description with robust guarantees for one-dimensional systems but fall short in accurately capturing two-dimensional quantum states, especially when they exhibit noninjective properties~\cite{perez-garciaMatrixProductState2007} and complex entanglement patterns~\cite{AreaReview}. While \emph{neural network quantum state tomography}  (NNQST)~\cite{torlaiNeuralnetworkQuantumState2018, carrasquillaReconstructingQuantumStates2019} offers yet another option, its efficacy for topological states also remains unclear due to their intrinsic long-range entanglement  and intricate phase structures across different components of the wavefunction~\cite{huangPredictingManyProperties2020}.

\begin{figure*}[t!]
    \centering
    \includegraphics[scale=0.3]{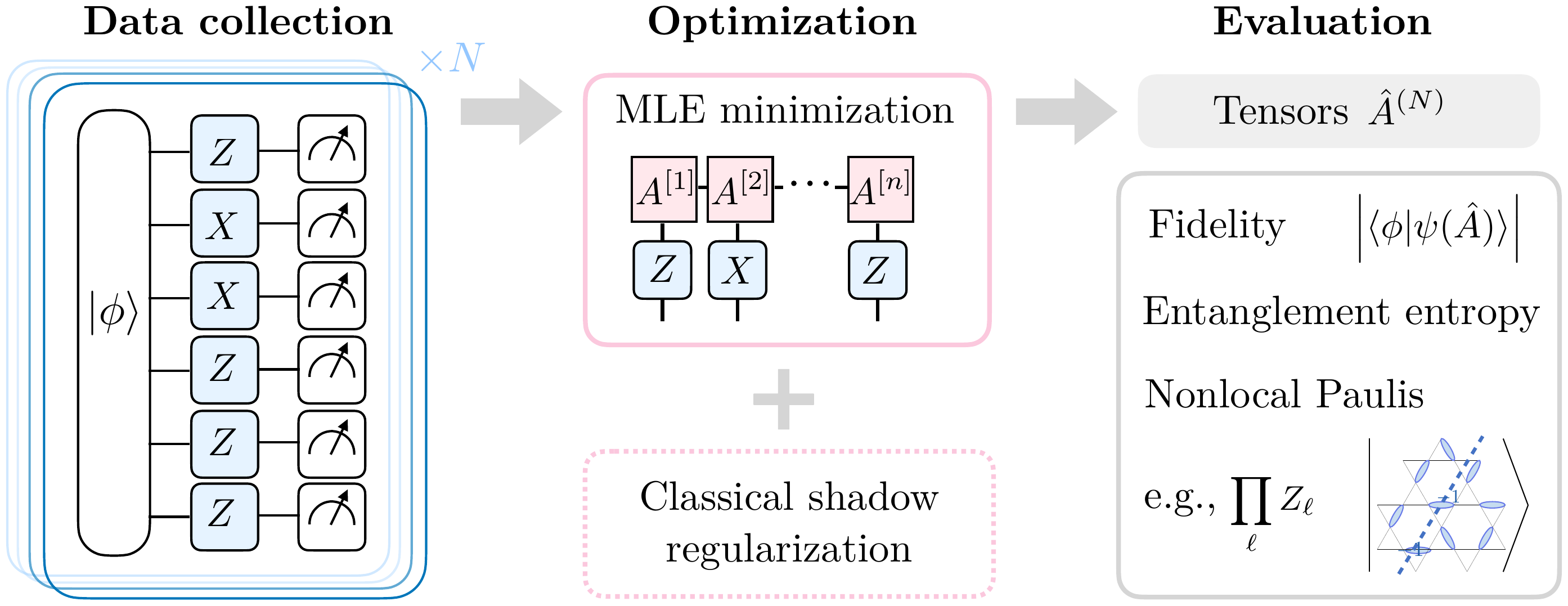}
    \caption{\textbf{Schematic illustration of variational tensor network tomography.} \textbf{Data collection}: A target state vector $\ket{\phi}$ is measured in multiple bases for a collection of $N$ samples defined in ~\equref{eq:data}. For instance, the first panel shows measurements in a random $X$ or $Z$ basis for each qubit. \textbf{Optimization}: A tensor-network model is variationally optimized from $N$ samples using gradient descent, which outputs tensors $\est{A}$ representing the 
    state vector $\ket{\psi(\est{A})}$. The middle panel illustrates an MPS model prior, as numerically investigated in this work. The loss function given by Eq.~\eqref{eq:loss_function_mps} includes the \emph{negative log-likelihood} (NLL), typically used in \emph{maximum likelihood estimation} (MLE). It is optionally regularized by physical observables estimated from the dataset (e.g., using classical shadow tomography represented by the pink dashed box). \textbf{Evaluation}: 
    The algorithm outputs a tensor network or, in the current work, an MPS representation $\est{A}$ as the learned state. We sometimes omit the number of samples and abbreviate the output tensor as $\hat{A}$. In the right panel, we evaluate the performance of our learning protocol by computing the fidelity $F=\abs{\braket{\phi}{\psi(\hat{A})}}$. From the reconstructed state, we can extract arbitrary physical observables, including the von Neumann entanglement entropy and highly nonlocal Pauli strings (the inset shows an example of the nonlocal $Z$ string operators considered in Fig.~\ref{fig:4_spin_liquid}).}
    \label{fig:1_schematics}
\end{figure*}

To address the need for viable tomographic techniques for exotic two-dimensional systems~\cite{semeghiniProbingTopologicalSpin2021}, here, we demonstrate a heuristic state tomography method combining a tensor-network prior with \emph{maximum likelihood estimation}  (MLE)~\cite{wangScalableQuantumTomography2020, gomezReconstructingQuantumStates2022, kurmapuReconstructingComplexStates2022, torlai2023quantum} and randomized measurements~\cite{huangPredictingManyProperties2020, huangProvablyEfficientMachine2021}. 
Previous works have considered learning one-dimensional quantum states from uniformly random single-qubit measurements~\cite{wangScalableQuantumTomography2020} or two-dimensional quantum channels with random \wwc{Pauli} measurements~\cite{torlai2023quantum}. In this work, we focus on two-dimensional real pure states and show that they are completely characterized by only randomized $XZ$ measurements, which randomly measure each qubit in either the $X$ or the $Z$ basis. The states considered here are not only potentially more complex than in previous studies but also easy to implement with high fidelity in experimental systems~\cite{bluvstein2024logical}. Moreover, we prove the tomographic completeness of randomized $XZ$ measurements on the space of pure and real quantum states, allowing the method to be extendable to nonstoquastic real Hamiltonians' eigenstates~\cite{bravyiComplexityStoquasticFrustrationfree2008}. Thus, our method provides a pathway for characterizing two-dimensional states on near-term quantum devices, making use of only easy-to-implement measurements. 
\wwc{For comparison, we also provide a detailed discussion of related works in Appendix~\ref{app:related_works}.}
\wwc{While our main results assume perfect measurements, in \appref{app:noise}, we review the errors that are important when measuring states in Rydberg atom arrays and numerically investigate the effect of noise.}

In our numerical experiments, we use an MPS ansatz as our model prior. While MPS-based representations of two-dimensional states are inherently limited, our approach sets the stage for higher-dimensional tensor networks, such as \emph{projected entangled pair states} (PEPS), for future investigations. In addition, our method provides a scalable approach that aligns with the existing capabilities of modern simulators, thereby providing a benchmark for more advanced tomographic techniques. 
To supplement our numerical results, we leverage the statistical properties of MLE to provide insights into the sample complexity of our approach. More specifically, we provide a pedagogical overview of standard techniques for analyzing MLE~\cite{neweyChapter36Large1994, wildeScalablyLearningQuantum2022, scholten2018behavior}, and discuss where they fail in this context, due to the gauge freedom in MPS representations of quantum states. We then sketch how these difficulties can be circumvented by considering MLE on a manifold~\cite{hajriMaximumLikelihoodEstimators2017}, and apply these results to the variational manifold given by our MPS prior~\cite{haegeman_geometry_2014}.

The rest of this work is structured as follows. In Sec.~\ref{sec:method}, we present our protocol in conjunction with an in-depth discussion of our state-tomographic measurement schemes. Then, in Sec.~\ref{sec:results}, we demonstrate the effectiveness of our protocol when applied to two topological states: a ground state of the surface code Hamiltonian and a $\mathbb{Z}_2$ quantum spin liquid state of Rydberg atoms arrayed on the ruby lattice. Our results are immediately relevant to current experiments since we only require feasible $XZ$ measurements and consider system sizes achievable with state-of-the-art simulators. Finally, we conclude by discussing the prospects and potential directions for improvement of our methodology in Sec.~\ref{sec:outlook}. 

\section{Variational tensor network tomography}\label{sec:method}
In this section, we introduce our variational \emph{tensor network} (TN) tomography protocol. 
To begin, we briefly review the framework of randomized measurements, wherein randomly chosen unitaries are applied to a target state before measuring it in the computational basis. Then, we utilize this 
data to variationally find the optimal tensor-network approximation of the target state. We do so via maximum likelihood estimation, a standard technique for statistical inference; we also have the ability to regularize the MLE solution with estimations of physical quantities via classical shadows. This is done by adding a term to the MLE cost function. These steps are described below and represented by the schematic in Fig.~\ref{fig:1_schematics}.
\paragraph{Dataset and measurements.}
We follow the notation commonly used in randomized measurements~\cite{elbenRandomizedMeasurementToolbox2022, huangPredictingManyProperties2020}. Let $\ket{\phi}$ be the unknown $n$-qubit pure state prepared by the quantum simulator, and let $\mathcal{U}$ be the unitary ensemble with a uniform distribution over a set $\{U\}$. We apply a unitary $U \sim \mathcal{U}$ before measuring in the computational basis. 
Performing this process repeatedly, we collect $N$ unitary-bitstring tuples as our dataset  
\begin{equation}\label{eq:data}
\{(U_i,\,b_i)\}_{i=1}^{N},
\end{equation}
where each $(U, b)$ is effectively sampled with probability 
\begin{equation}\label{eq:b_U_probability}
\frac{1}{\abs{\{U\}}} \times \expval{U\dyad{\phi} U^\dagger}{b}.
\end{equation}
In this work, motivated by the available prior knowledge of the target state (which we discuss in the next subsection) as well as what measurements are experimentally feasible with high precision~\cite{bluvstein2024logical, bluvsteinQuantumProcessorBased2022}, we consider two measurement schemes: (1) $\mathcal{U}_{\textnormal{Global}}$: measuring every qubit in the $X$ basis or every qubit in the $Z$ basis, and (2) $\mathcal{U}_{XZ}$: randomly measuring each qubit in either the $X$ or the $Z$ basis.
For the ensemble $\mathcal{U}_{\textnormal{Global}}$, the unitaries would be uniformly sampled from
\begin{align}\label{eq:fixed_paulis}
    \left\lbrace
        \prod_i U_{i}^{X}, \, \prod_i U_{i}^{Z}
    \right\rbrace.
\end{align}
The unitary $U_{i}^{a}$ rotates the measurement basis of the $i$-th qubit to the $a = X, Z$ basis.
Note that the most general global controls have been considered in earlier work~\cite{vankirkHardwareefficientLearningQuantum2022}, in which the same SU$(2)$ unitary is applied to each qubit. Here, we only focus on a discrete subset, of Pauli $X^{\otimes n}$ and $Z^{\otimes n}$ measurements, also considered in previous work~\cite{gomezReconstructingQuantumStates2022}. 
For our second ensemble $\mathcal{U}_{XZ}$, the unitaries are uniformly sampled from
\begin{align}\label{eq:xz_paulis}
     \left\lbrace
        \prod_i U_{i}^{a}, \, a = X, Z
    \right\rbrace.
\end{align}

In general, one can exploit priors on the target state to reduce the set of random unitaries required~\cite{hadfieldMeasurementsQuantumHamiltonians2020, vankirkHardwareefficientLearningQuantum2022}. 
For instance, consider learning the eigenstates of 
a real Hamiltonian. These states can be chosen to be real, and random-$XZ$ measurements can learn any property of a state within such a space of real and pure states. This can be formally stated as
follows.

\begin{theorem}[\textbf{Tomographic completeness}]\label{thm:real_pure_states}
    Random-$XZ$ measurements $\mathcal{U}_{XZ}$ are tomographically complete on the space of real and pure states.
\end{theorem}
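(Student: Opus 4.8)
The plan is to reduce the claim to an injectivity statement and then reconstruct the state explicitly. First I would show that the POVM effects generated by $\mathcal{U}_{XZ}$ span, over the reals, the subspace $\mathcal{S}$ of Pauli operators containing no $Y$. Indeed, each single-qubit $X$- or $Z$-basis projector equals $\tfrac12(I\pm X)$ or $\tfrac12(I\pm Z)$, so tensoring and taking real linear combinations of the effects $U_\alpha^\dagger\ketbra{b}{b}U_\alpha$ (with $\alpha\in\{X,Z\}^n$, $b\in\{0,1\}^n$) yields every product operator in $\{I,X,Z\}^{\otimes n}$. Consequently, the measurement distribution of $\ket{\phi}$ under $\mathcal{U}_{XZ}$ determines exactly the expectation values $\{\Tr[\dyad{\phi}\,P] : P \in \{I,X,Z\}^{\otimes n}\}$, i.e. the orthogonal projection of $\dyad{\phi}$ onto $\mathcal{S}$. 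Tomographic completeness on real pure states is therefore equivalent to injectivity of the map $\ket{\phi}\mapsto \Pi_{\mathcal{S}}(\dyad{\phi})$ on rays of real pure states.

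Second, I would establish this injectivity constructively. Writing $\ket{\phi}=\sum_b \phi_b \ket{b}$ with $\phi_b\in\mathbb{R}$, the all-$Z$ measurement returns every $\phi_b^2$, hence all magnitudes $|\phi_b|$ and the support of $\phi$. Measuring a single qubit $j$ in $X$ and the rest in $Z$ returns the products $\phi_b\,\phi_{b\oplus e_j}$, which fix the relative sign of any two amplitudes whose labels differ in a single bit. Since the Boolean hypercube is connected under single-bit flips, these relative signs propagate to pin down $\ket{\phi}$ up to a global sign whenever its support is connected in the single-flip graph; equivalently, the rank-one relation $\phi_b\phi_{b'}=(\phi_b\phi_c)(\phi_c\phi_{b'})/\phi_c^2$ lets me bootstrap all matrix elements of $\dyad{\phi}$ along paths through the support.

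The main obstacle is the case in which the support is disconnected under single-bit flips, where single-flip data alone cannot fix the relative signs of distinct components. Here I would invoke the higher-weight correlators: measuring a set $D$ of qubits in $X$ and the rest in $Z$ gives access to block sums $\sum_{u}\phi_{(x,u)}\phi_{(x,\bar u)}$ over bit patterns $u$ on $D$ (with $\bar u$ the complement on $D$), which couple amplitudes at Hamming distance $|D|$ and thereby bridge components. To make the uniqueness rigorous I would argue by contradiction: suppose two real pure states $\ket{\phi},\ket{\psi}$ yield identical $\mathcal{U}_{XZ}$ statistics and set $M=\dyad{\phi}-\dyad{\psi}$. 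Then $M$ is a real symmetric, trace-zero operator of rank at most two that is orthogonal to all of $\mathcal{S}$, i.e. a real combination of Pauli strings containing at least two $Y$'s. Orthogonality to $\mathcal{S}$ forces $M$ to have vanishing diagonal and vanishing single-flip entries, while the weight-$|D|$ conditions are Walsh--Hadamard transforms of its distance-$|D|$ block sums and hence force those block sums to vanish too.

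The crux is then to show that these linear constraints, together with the rank-$\le 2$ condition, admit only $M=0$ — equivalently, that the subspace of operators carrying at least two $Y$'s contains no nonzero operator of rank $\le 2$. I expect this last linear-algebraic step, ruling out low-rank operators supported entirely away from the $Y$-free sector, to be the heart of the argument; it is most cleanly attacked by induction on $n$, in which each fixed-$x$ slice reduces to the same reconstruction problem on the $|D|$ flipped qubits.
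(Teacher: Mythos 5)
Your first two steps are sound and essentially reproduce the paper's setup: the $\mathcal{U}_{XZ}$ statistics determine exactly the expectations of Pauli strings in $\{I,X,Z\}^{\otimes n}$ (the paper's ``visible space''), the all-$Z$ data gives the magnitudes $|\phi_b|$, and the single-$X$ data gives the products $\phi_b\,\phi_{b\oplus e_j}$, which settles the claim whenever the support of $\phi$ is connected under single-bit flips. But the case you yourself identify as the main obstacle --- disconnected support --- is precisely where your proposal stops being a proof. Your reduction by contradiction to the statement ``no nonzero real symmetric, trace-zero operator of rank at most two lies in the span of Pauli strings containing at least two $Y$'s'' is not a simplification: any such operator can be written as $\lambda\bigl(\dyad{\phi}-\dyad{\psi}\bigr)$ for orthonormal real vectors, so this statement \emph{is} the theorem, merely rephrased. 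Deferring it with ``I expect this last linear-algebraic step\dots to be the heart of the argument'' leaves the heart unproven, and the sketched induction on fixed-$x$ slices does not engage the actual difficulty, namely that the weight-$|D|$ data only delivers block sums $\sum_u \phi_{(x,u)}\phi_{(x,u\oplus w)}$ rather than the individual cross products needed to fix a relative sign between components.

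For comparison, the paper closes exactly this hole with a concrete device inside an induction on the qubit number $k$. After learning (by the inductive hypothesis) the state restricted to each half-space $c_k=0$ and $c_k=1$, it bridges the two halves as follows: choose nonzero-amplitude strings $b^{(0)}$, $b^{(1)}$, one in each half, at \emph{minimal} Hamming distance among all such cross pairs, and estimate the visible observable that places $\ketbra{+}{+}$ on qubit $k$, the rank-one projector $\ketbra{b^{(0)}_i}{b^{(0)}_i}$ on every bit where the two strings agree, and $X$ on every bit where they disagree. The minimality of the Hamming distance forces the projector part to annihilate every other support string (any survivor would form a strictly closer cross pair), so the block sum your approach gets stuck on collapses to the single term $c_{b^{(0)}}c_{b^{(1)}}$, which determines the relative sign between the two components. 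Incorporating some such mechanism --- an explicit observable whose expectation isolates one cross product --- is what your argument needs in order to go through; without it, the proposal is a correct reformulation plus a conjecture, not a proof.
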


The intuition here is that when $\ket{\phi}$ is decomposed in terms of the computational basis states, all amplitudes can be learned by measuring every qubit in the computational ($Z$) basis. The (relative) signs can be learned by rotating some qubits to the $X$ basis. We refer interested readers to Appendices~\ref{app:xz_shadow} and \ref{app:reality_random_xz_measurements} for details of the proof. In Appendix~\ref{app:xz_shadow}, we formalize our random-$XZ$ measurement protocol and rigorously derive the associated measurement channel in the framework of classical shadow tomography. Furthermore, in Appendix~\ref{app:reality_random_xz_measurements}, we motivate our restricted random-$XZ$ measurement scheme, proving that it is tomographically complete in the space of real and pure states considered in this work. We note that this scenario is more general than that of a fully positive eigenstate of a stoquastic Hamiltonian, for which measurements in only the computational basis suffice~\cite{bravyiComplexityStoquasticFrustrationfree2008}. 

\paragraph{Loss function.}
To estimate the target state from the dataset $\{(U_i,\,b_i) \}_{i=1}^N$, we use a tensor-network ansatz $\ket{\psi({A})}$, where a sequence of local tensors is denoted by ${A} = \left({A}^{[1]}, \cdots, {A}^{[n]}\right)$. The advantage of this formulation is that one can efficiently compute the probability of a given measurement outcome in any basis by applying the corresponding unitary to the tensor-network model as shown in the middle panel of Fig.~\ref{fig:1_schematics}. This allows us to evaluate the loss function, whose minimum leads to our estimate of the target state. The loss function is
\begin{align}\label{eq:loss_function_mps}
   L({A}) &= -\frac{1}{N}\sum_{i=1}^{N} \log \abs{\matrixel{b_i}{U_i}{\psi({A})}}^2  + \beta R({A}),
\end{align}
where $R(A)$ denotes the regularizer and its coefficient $\beta$ describes how much we prioritize the regularization over MLE.
The first term in Eq.~\eqref{eq:loss_function_mps} above, the \emph{negative log-likelihood} (NLL), arises from the Kullback–Leibler divergence, which measures the difference between two probability 
distributions. In our case, these two probability distributions are over measurement outcomes in randomly selected bases, defined by the target and the model state, respectively. Minimizing the $N$-sample NLL leads to an MLE solution $\est{A}$. To improve our training, we add the regularization term $R(A)$: it penalizes the differences between our model's prediction of a physical observable $\expval{O}{\psi({A})}$ and the estimation $\ev{O}_{N}$ inferred directly from the dataset. For instance, we can choose $R(A) = \|\expval{O}{\psi({A})} - \ev{O}_{N}\|_2$ with a $k$-body Pauli string $O = X^{\otimes k}$, where the estimation $\ev{X^{\otimes k}}_{N}$ can be accurately determined from our global-$XZ$ measurements. 
Broadly speaking, if we are confident about certain properties of the target state, our model should reflect those properties. The choices of regularization observables, therefore, depend on the systems of interest, which we discuss in Sec.~\ref{sec:results} and Appendix~\ref{app:loss_function}, where we describe the loss function and physical  observables used for our numerical results.

 \begin{figure*}[t!]
    \includegraphics[scale=0.27]{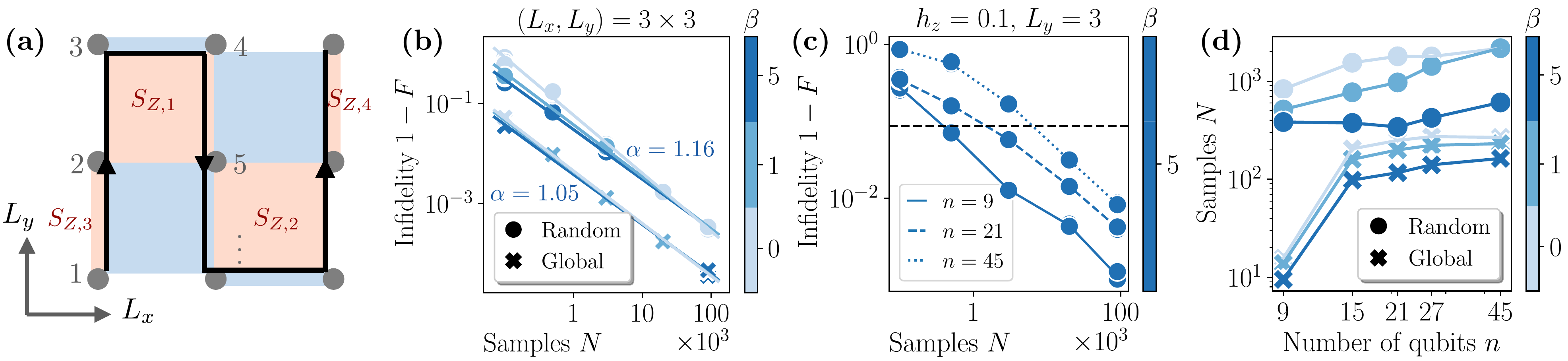}
    \caption{\textbf{Numerical results for the surface code.}  \textbf{(a)}~Schematic representation of the surface code of dimension $(L_x,\, L_y)$ with $Z$ stabilizers (red patches) and $X$ stabilizers (blue patches). The MPS ``snakes'' through the system along the path shown (black arrow). \textbf{(b)}~Scaling of the infidelity with the number of samples $N$ for a $3 \times 3$ surface code with global-$XZ$ (cross) or random-$XZ$
    (circle) measurements. Shades of blue indicate the regularization strength $\beta$.  \textbf{(c)}~To investigate larger system sizes, we focus on the surface code in a strip geometry ($L_y=3$) with a perturbation ($h_z = 0.1$). The three lines indicate the infidelity for different numbers of qubits, $n$, with random-$XZ$ measurements. Dark blue indicates a regularization $\beta=5$. The dashed line marks the fidelity threshold for $n=9$. \textbf{(d)}~Sample complexity \wwc{plotted on a $log$-$log$ scale} to reach a fixed local fidelity threshold $F_{\text{local}}=(0.99)^{n}$ extracted from \textbf{(c)}. \wwc{The number of samples required for achieving such a local fidelity is seen to scale at most polynomially with $n$.} Colors and markers are the same as in \textbf{(b)}.}
    \label{fig:2_surface_code}
\end{figure*}

\paragraph{MPS prior.} 
 Matrix product states, which are one of the most extensively studied families of tensor networks, have enjoyed remarkable success in accurately representing many-body quantum states. Thanks to an effective variational algorithm known as the \emph{density-matrix renormalization group} (DMRG)~\cite{schollwock2011density}, MPS ans\"atze can be efficiently employed to describe the ground states of various many-body Hamiltonians.  Strictly speaking, two-dimensional systems require the generalization of MPS, as the quasi-1D MPS description is limited to capturing short-range correlations and thus necessarily fails for sufficiently large and entangled systems. However, in practice, MPS ans\"atze are often used to represent ground states even in two dimensions, due to their relative advantage of efficient contractions. For the remainder of this work, we utilize an MPS as our model prior, and we variationally find a state that is most consistent with the observed measurements, as shown in Fig.~\ref{fig:1_schematics}. The optimization of the MPS is carried out by taking gradients simultaneously with respect to all the tensor components; we provide the details in Appendix~\ref{app:training}.

\section{Results}\label{sec:results}

In this section, we demonstrate the effectiveness of our variational tomography protocol for two topologically ordered states. While our eventual goal is to perform tomography on a state in the laboratory, for our results here, we perform numerical ``experiments'' using simulated target states obtained via DMRG, as detailed in Appendix~\ref{app:dmrg_state}. Our protocol assumes that the laboratory state is close to a pure state that could be approximated as an MPS (albeit with a possibly large bond dimension). After obtaining our simulated measurements by sampling the target state, we initiate our protocol with random initial parameters. These parameters are iteratively adjusted until they converge to the trained state. Then, we evaluate the efficacy of this training process by computing the fidelity between the trained and target states.  Here, we analyze $10$ such random initializations, which are shown in our numerical results below. 
Of course,
when using realistic experimental data, one would lack access to the ``ground truth'', i.e., a classical description of the target state, which is needed to compute the fidelity exactly.
In this case, we perform cross validation by comparing converged models on a test dataset [Appendix~\ref{app:training_schemes}], or alternatively, one could consider efficient certifications of high fidelity~\cite{ChoiEndres, huang2024certifying}.

\subsection{Learning the surface code}
In our first demonstration, we consider learning the ground state of the perturbed surface code Hamiltonian~\cite{kitaevFaulttolerantQuantumComputation2003}, $H_{\text{sc}} = - \sum_{l} S_{Z,\, l} - \sum_{l}S_{X,\, l} - h_z \sum_{i}Z_{i}$, defined on the square lattice. Here, $S_{P,\, l}=\prod_{i \in \square_l}P_i$ denotes the $l$-th stabilizer, which is a product of Pauli $P$ operators ($P = X$ or $P = Z$) on a plaquette or boundary [Fig.~\ref{fig:2_surface_code}(a)], and $h_z$ is a perturbative field. Such a perturbation drives the ground state away from the exactly solvable limit, but it can still be well approximated by a tensor network. Let $\ket{\phi}$ be the target MPS representation of the ground state along a ``snake'' path. 
Although our target state is possibly perturbed away from the exact stabilizer-state limit, the expectation values of the stabilizer operators should be smoothly connected to their fixed-point values ($\approx 1$) as a function of the perturbation strength. This knowledge informs our choice of the set of stabilizers as observables for regularization. We estimate these stabilizers from the finite dataset either  via direct measurement in the global-$XZ$ scheme or from classical shadows for random-$XZ$ measurements.

Our results for the surface code are presented in Fig.~\ref{fig:2_surface_code}. To begin, we consider the surface code without any external perturbation ($h_z=0$): for $n=9$ qubits arrayed  on a square geometry, our protocol outputs the state vector $\ket{\psi(\hat{A})}$, and we evaluate the error of our learning protocol using the infidelity, $1-F = 1 - \abs{\braket{\phi}{\psi(\hat{A})}}$. Figure~\ref{fig:2_surface_code}(b) shows that the infidelity improves with the number of samples $N$, for both the global-$XZ$ and random-$XZ$ measurements discussed in Sec.~\ref{sec:method}. For the latter, we find that regularization over the stabilizers, by increasing $\beta$, further reduces the infidelity in the sample-limited regime ($N < 1000$). We observe that both the global and random measurement schemes succeed in learning these states, and furthermore, the former outperforms the latter in terms of requiring fewer samples to achieve the same infidelity. However, while global-XZ measurements $\mathcal{U}_{\textnormal{Global}}$ are sufficient for certain highly structured states such as a surface code (or GHZ) state, they may not necessarily be suited for learning generic states, wherefore random-$XZ$ measurements $\mathcal{U}_{XZ}$ could be beneficial. Finally, we note that the infidelity scales polynomially with the number of samples $N$, similarly to  observations in earlier work on one-dimensional states~\cite{wangScalableQuantumTomography2020}. In particular, focusing on the exact surface-code ground state (for $h_z=0$) trained without regularization ($\beta=0$), we find an excellent fit of $1-F\propto 1/N^\alpha$, as shown in \figref{fig:2_surface_code}(b), with $\alpha = 1.16$ ($\alpha = 1.05$) for random-$XZ$ (global-$XZ$) measurements. 
To substantiate these numerical observations regarding the sample complexity needed to achieve a certain fidelity, we rigorously establish in Theorem~\ref{thm:mle-manifold-normality} that the infidelity can be probabilistically upper bounded such that
\begin{equation}\label{eq:infidelity-bound}
    \mathbb{P} \left[1 - \abs{\braket{\phi}{\psi(\est{A})}} \leq \epsilon(N) \right] \geq 1- \delta,
\end{equation}
where $\epsilon(N) = O\big( \big[\displaystyle n \chi_{\mathrm{max}}^2/ (N\delta) \big]^{1/2}\big)$ and $\chi_{\max}$ represents the maximum bond dimension.  
Interestingly, our numerical results, which exhibit $\alpha >0.5$, suggest better performance than our theoretical bound for arbitrary target states. Further details and substantial analytical and rigorous insights into the sample complexity, drawing from the statistical properties of MLE, are discussed in Appendix~\ref{app:mle_manifold}.

Next, we add a perturbation to the Hamiltonian, $h_z=0.1$, and analyze the performance of our protocol for various system sizes with up to $n=45$ qubits. Given the limitations of an MPS ansatz, which is inherently quasi-one-dimensional, here, we only increase the length of the system $L_x$ along the $x$-axis. In Fig.~\ref{fig:2_surface_code}(c), we illustrate how the infidelity scales for different system sizes (indicated by different lines) using random-$XZ$ measurements and find that the fidelity improves with $N$. Finally, we analyze how the sample complexity scales with the system size $n$. The requirement of achieving a constant fidelity is stringent as the overlap between two many-body states can vanish exponentially with $n$, a phenomenon known as the orthogonality catastrophe~\cite{anderson1967infrared}. For a finite system size, however, the overlap still serves as a useful diagnostic indicator~\cite{zanardi2006ground}. Therefore, we relax the requirement of a fixed global fidelity by considering instead the per-site local fidelity $F_{\text{local}} \equiv F^{1/n}$. Extrapolating from the data in (c), Fig.~\ref{fig:2_surface_code}(d) plots the number of samples required to achieve a local fidelity threshold, e.g., as indicated by the dashed line in (c) for $n=9$ qubits. For the largest system size, $n=45$, we find that only a few thousand measurements are necessary to attain a fidelity of $F_{\text{local}}= 0.99$.

\begin{figure}[tb]
    \centering
    \includegraphics[scale=0.33]{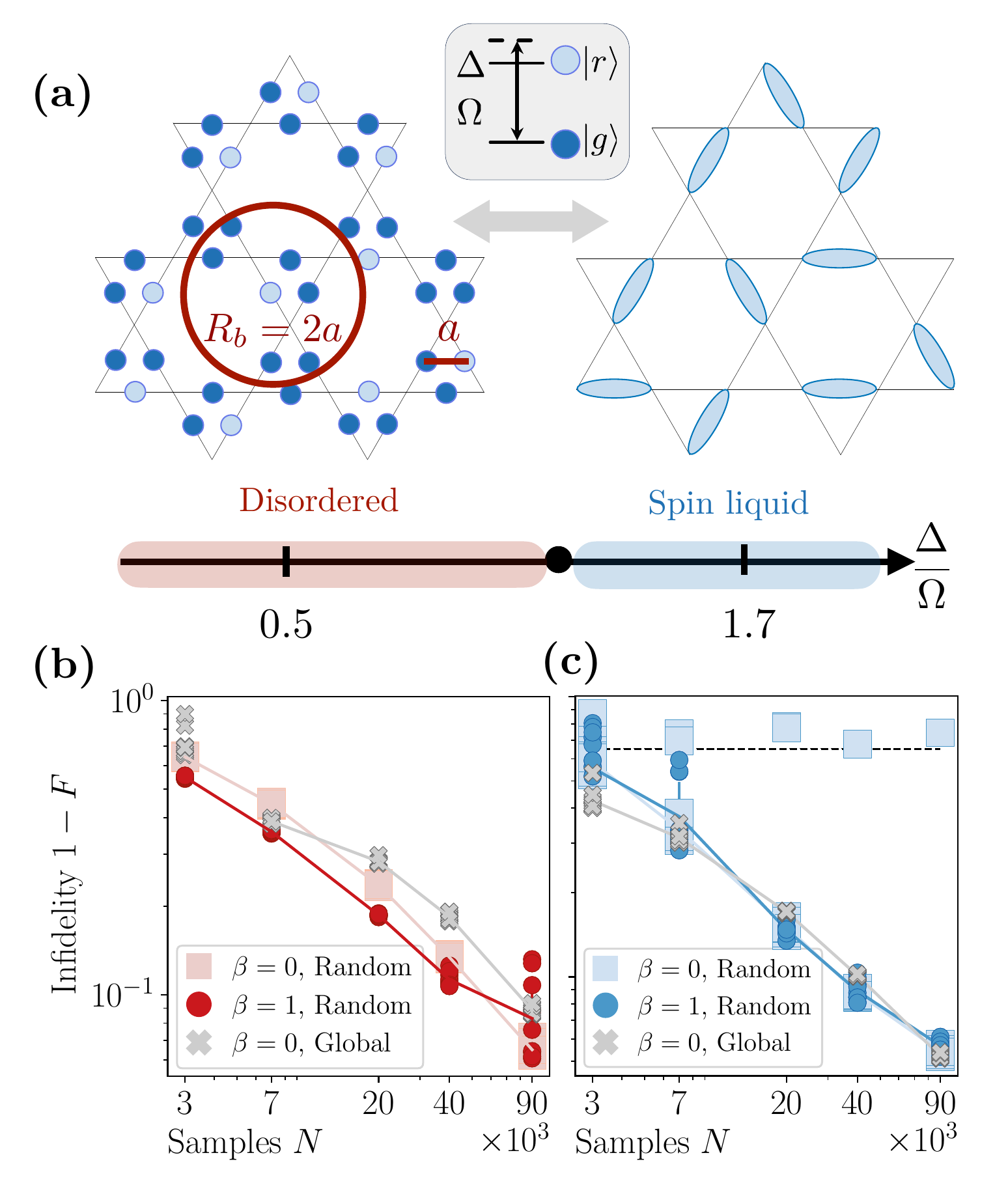}
    \caption{\textbf{Numerical results for learning quantum spin liquids.} 
    \textbf{(a)} The tuning parameters of the Rydberg Hamiltonian are the Rabi frequency $\Omega$, the detuning $\Delta$, and the blockade radius $R_b$. The gray arrow shows the mapping between a particular measurement bitstring on the ruby lattice and a dimer configuration on the kagome lattice: light (dark) blue circles correspond to an atom being in the excited $\ket{r}$ (ground $\ket{g}$) state and maps to the presence (absence) of a dimer.  A careful choice of the Rydberg blockade radius $R_b=2a$, where $a$ is the lattice spacing, ensures that the dimers do not overlap. 
    Bottom panel: The two phases as a function of the detuning ratio $\Delta/\Omega$. 
    \textbf{(b)} The ground state belongs to the trivial disordered phase at $\Delta / \Omega = 0.5$. The infidelity decreases with an increasing number of samples $N$. For random-$XZ$ measurements with $N \leq 40000$, a regularization of $\beta=1$ (red circles) further improves the fidelity compared to the case without regularization $\beta=0$ (red squares). Global-$XZ$ measurements (gray crosses) perform worse than their random-$XZ$ counterparts (red). \textbf{(c)} Same as in \textbf{(b)} but for the spin-liquid state at $\Delta / \Omega = 1.7$. Relative to the case with no regularization $\beta=0$ (light blue), regularization $\beta=1$ (dark blue) does not improve the fidelity but reduces the number of nonconvergent outliers (above the black dashed line). Global-$XZ$ (gray cross) measurements perform similarly to random-$XZ$ ones (blue). }
    \label{fig:3_rydberg_ruby}
\end{figure}

\subsection{Learning a Rydberg quantum spin liquid}
For a second application of our tomographic framework, we turn to learning ground states of strongly interacting arrays of neutral atoms. Over the last decade, these
Rydberg atom arrays have evolved into mature platforms for quantum simulation and have shown great potential for probing a variety of correlated quantum phases of matter~\cite{de2019observation,Samajdar_2020, ebadiQuantumPhasesMatter2021, scholl2021quantum,semeghiniProbingTopologicalSpin2021}. The effective Hamiltonian of this system can be written as 
\begin{alignat}{1}
\nonumber
H_{\rm Ryd} &= \frac{\Omega}{2} \sum_{\ell}X_{\ell} - \Delta \sum_{\ell} \frac{1}{2}(1 + Z_{\ell}) \\
 &+ \frac{1}{2}\sum_{\ell,\,\ell^\prime} \frac{V_{\ell,\,\ell^\prime}}{4} (1 + Z_{\ell})(1 + Z_{\ell^\prime}),
\end{alignat} 
where $\Omega$ represents the Rabi frequency, $\Delta$ is the detuning of the laser drive, and $V_{\ell,\,\ell^\prime}$ is the van der Waals interaction \wwc{potential between atoms at sites $\ell$ and $\ell'$}. The strong Rydberg-Rydberg interactions prevent neighboring atoms lying within a ``blockade radius'', $R_b$, from being simultaneously excited to the Rydberg state, thereby engendering strong quantum correlations. For an appropriate choice of the blockade radius, configurations of Rydberg atoms on the 
ruby lattice can be mapped to a set of ``dimers'' on the  kagome lattice [Fig.~\ref{fig:3_rydberg_ruby}(a)]; such quantum dimer models have long been known to host topological quantum spin liquids~\cite{moessner2008quantum,misguichQuantumDimerModel2002}. In the Rydberg system, depending on the detuning parameter $\Delta /\Omega$, the ground state (for this specifically chosen $R_b$) can be either a trivial disordered phase or a topologically ordered $\mathbb{Z}_2$ quantum spin liquid phase~\cite{samajdarQuantumPhasesRydberg2021,verresenPredictionToricCode2021,semeghiniProbingTopologicalSpin2021,PhysRevLett.130.043601}; see Fig.~\ref{fig:3_rydberg_ruby}(a). In the computational basis, these two phases are difficult to distinguish as they both lack any symmetry-breaking order. Here, we select two parameters in the phase diagram belonging to the spin-liquid ($\Delta /\Omega=1.7$) or the disordered ($\Delta /\Omega=0.5$) phase and perform DMRG simulations to find these ground states (see  Appendix~\ref{app:dmrg_state} for more details).

\begin{figure}[tb]
    \centering
    \includegraphics[width=0.48\textwidth]{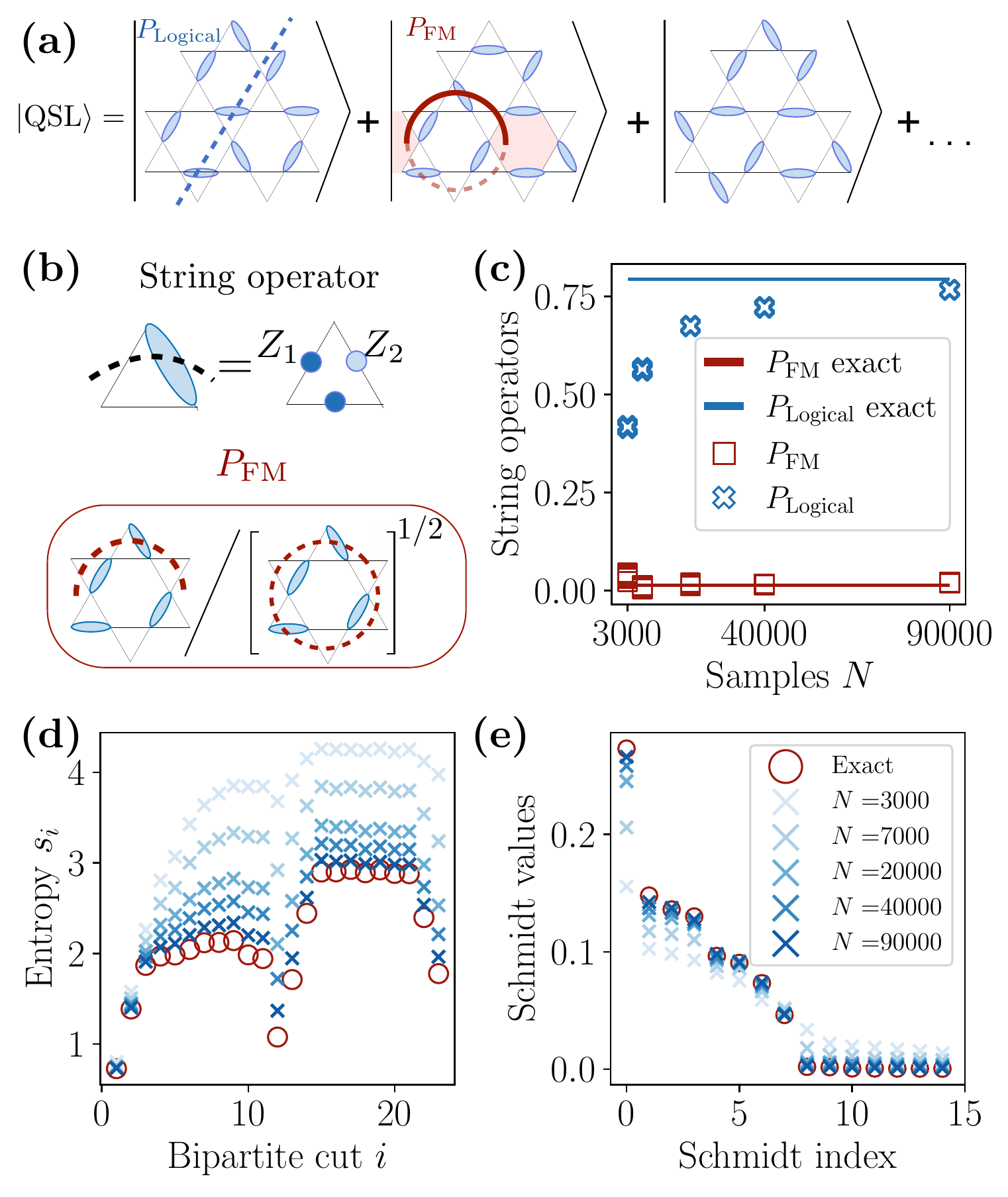}
    \caption{\textbf{Prediction of spin-liquid properties.}
    \textbf{(a)} A quantum spin liquid (QSL) state can be viewed as a  superpositions of dimers in the computational basis. \textbf{(b)} Definition of measurable diagnostic quantities such as the $Z$ string operators. Top panel: A dashed line across a bond is equivalent to applying the Pauli $Z$ operator to the qubit on the same bond. The $Z$ logical string $P_{\mathrm{Logical}}$ applies the $Z$ string operator across the system. Bottom panel: An open-string diagonal Fredenhagen-Marcu operator $P_{\mathrm{FM}}$, renormalized by the expectation value of a closed-loop operator; e.g., the semicircle corresponds to the product of $6$ Pauli $Z$ operators. \textbf{(c)} The expectation values of the $P_{\mathrm{FM}}$ (red) and $P_{\mathrm{Logical}}$ (blue) operators for different numbers of samples, $N$, converge to their exact values (lines). \textbf{(d)} Predictions (crosses) of the entanglement entropy for different sizes of the bipartition and \textbf{(e)} the Schmidt values (cut at site $i=21$) from MPS ans\"atze trained with $N$ samples converge to the exact values (open circles).}
    \label{fig:4_spin_liquid}
\end{figure}

Importantly, for this system, we do not have prior knowledge of which observables are important to regularize, unlike for the surface code studied above. Hence, analogously to MPS tomography, we consider estimations of all estimable Pauli strings that are supported on a local subsystem (taken to be 6 qubits here) and use them to regularize our MLE loss function. This subsystem should be chosen such that it captures the important correlations of the state and that the Paulis supported on this subsystem can be estimated accurately using randomized measurements~\footnote{Note that we could also consider regularization from global measurements, but then, we would only be able to estimate $X^{\otimes k}$ or $Z^{\otimes k}$.}. With regard to the task of learning the states of $n=48$ qubits, we now show that our protocol (see Appendix~\ref{app:loss_function}) is capable of learning both regimes in the phase diagram---i.e., the spin liquid as well as the disordered state---which cannot be distinguished by a local order parameter due to the topological nature of the former. 

Focusing first on the disordered state shown in Fig.~\ref{fig:3_rydberg_ruby}(b), we see that the fidelity improves with the number of measurements $N$ for all measurement schemes. Moreover, we observe a slight improvement with a regularization strength of $\beta=1$ in the sample-limited regime ($N \leq 40000$). 
Next, we turn to the spin liquid illustrated in \ref{fig:4_spin_liquid}(a). For this state as well, the infidelity generally reduces with $N$, as shown in \figref{fig:3_rydberg_ruby}(c), albeit with outliers that did not converge (indicated by squares above the dashed line). However, in contrast to the disordered phase, we observe that regularization, with strength $\beta=1$ (indicated by the dark blue circles), does not improve the fidelity beyond that obtained with no regularization $\beta=0$ (light blue squares), but the number of outliers (above the dashed line) is reduced. This lack of substantial improvement upon regularization suggests that the important operators capturing the correlations of the spin-liquid state are supported on a larger subsystem than used in our current numerical experiments.

Finally, we show that, using a successfully trained MPS, we can evaluate physical properties that may be difficult to measure directly given limited experimental controls. Certain diagnostic physical observables, such as the diagonal Fredenhagen-Marcu (FM) order parameter $P_{\mathrm{FM}}$  as well as the $Z$ logical loop operator $P_{\mathrm{logical}}$ and its $X$ counterpart, have been previously suggested for  the spin-liquid state~\cite{semeghiniProbingTopologicalSpin2021, verresenPredictionToricCode2021,cong2024enhancing}, as illustrated in Figs.~\ref{fig:4_spin_liquid}(a,b). These measurable observables can be used to verify the consistency of our learned states, and we see that the predictions from our trained MPS converge well to the target values [Fig.~\ref{fig:4_spin_liquid}(c)]. Moreover, our protocol allows for the extraction of additional quantities such as the von Neumann entanglement entropy, which is essential to determining the topological entanglement entropy---a key quantity that characterizes a gapped spin liquid~\cite{kitaevTopologicalEntanglementEntropy2006, levinDetectingTopologicalOrder2006}. Figure~\ref{fig:4_spin_liquid}(d) demonstrates that the bipartite entanglement entropy converges to the target's as the bipartition varies in subsystem size for $N\approx 90000$.  For a cut across roughly half of the system, the Schmidt values in Fig.~\ref{fig:4_spin_liquid}(e) also closely match the exact values of the target state.

\section{Discussion and outlook}\label{sec:outlook}

In this work, we combine randomized measurements with tensor networks to perform quantum state tomography via MLE, with optional regularization using classical shadows. While the randomized measurements toolbox is useful for estimating many physical \textit{properties}, tensor networks are known to efficiently represent physically relevant quantum \textit{states}. Here, we have considered a more restricted measurement setting, random-$XZ$ shadows, that is well motivated by both its tomographic completeness on real and pure states and its feasibility in near-term experiments. Using these $XZ$ measurements, we can regularize our MLE loss function with observables estimated via the classical shadow framework. We showcase the performance of our protocol by learning the topological ground states of the surface code Hamiltonian as well as the Rydberg Hamiltonian on the ruby lattice. For both of these states, our protocol with random-$XZ$ measurements can accurately reconstruct the target state up to a fidelity of over $0.95$ with under $10^5$  samples. In addition, we test our shadow regularization technique on these two applications: we find that requiring the stabilizer expectations values to agree with estimations from classical shadows further improves the fidelity for the surface code, whereas for the ruby-lattice Rydberg spin liquid, this only reduces the number of outliers observed during the optimization process. \wwc{Additionally, although our current work  focuses on perfect measurements, it will be important for future studies to consider measurement errors and carefully analyze their effects in our protocol, as the errors will---in practice---limit the best possible fidelity. While we do not systematically analyze such effects in this work, in \appref{app:noise}, we summarize in more details the common sources of  measurement errors in Rydberg atom arrays.}

Using higher-dimensional TNs combined with randomized measurements would, in principle, allow us to learn any states of interest. In the present work, we have numerically benchmarked our protocol using MPS as our model prior, given its advantages of exact contractions 
and sampling. An immediate extension would be to generalize this model's architecture to represent $2$D area-law entangled states, 
such as string bond states~\cite{glasserExpressivePowerTensornetwork2019} or \emph{projected entangled pair states} (PEPS)~\cite{verstraeteRenormalizationAlgorithmsQuantumMany2004}. \wwc{Another important generalization of our work, beyond pure states, is to adopting matrix product operator (MPO) ans\"atze, specifically, the locally purified density operator (LPDO) approach developed for learning quantum processes~\cite{torlai2023quantum}}.
We leave the numerical demonstrations for such tensor-network states to future work.

It constitutes an interesting perspective to extend these methods to the study of fermionic 
 quantum states, which are the focus of attention in the study of correlated electron systems. Our analysis could be extended to quantum gas microscope measurements of optical lattice Fermi-Hubbard systems~\cite{Koepsell_2021} which yield snapshots with the charge and spin on each site. 
 While in this setting, quench dynamics generated by noninteracting Hamiltonians are presumably most conceivable~\cite{Denzler},
it is possible that the spin degrees of freedom can be measured using a similar unitary ensemble such as our random-$XZ$ measurements. The variational optimization can be carried out using fermionic neural networks~\cite{Robledo_Moreno_2022} or auxiliary wavefunctions~\cite{Shackleton24,Yasir24}, replacing the MPS we have employed in the present work. To enhance the optimization process using neural-network ans{\"a}tze, it would be especially interesting to extend recent results~\cite{huang2024certifying} on locally randomized single-qubit measurements to such fermionic systems.

While our method can be generalized to quantum states realized in a wide class of quantum processors, more specifically, for Rydberg quantum simulators, it can be easily extended to learning logically encoded quantum states. In particular, our random-$XZ$ measurements are well suited for near-term encodings, such as the color codes recently implemented with neutral atom arrays~\cite{bluvstein2024logical}. In such a Rydberg array, every logical qubit, encoded, for example, in a Steane code,
can be measured in either the $X$ or the $Z$ basis, allowing for both error detection via post-selection and the ability to perform high-fidelity random-$XZ$ measurements. Even if current experiments do not aim to perform full logical quantum state tomography, our characterization and derivation of the tools required for random-$XZ$ classical shadows can be immediately applied to these logical encodings.

In settings where qubits are not logically encoded, our techniques are also experimentally amenable for physical qubits. Even as the degree of control over large quantum systems improves, single-qubit rotations will likely always remain preferable compared to many-qubit gates due to their orders-of-magnitude higher fidelity and 
short implementation times~\cite{evered2023high}. Therefore, the ability to accurately estimate a laboratory state from random measurements will remain attractive beyond the NISQ era. Our protocol can also be straightforwardly extended to random-Pauli measurements~\cite{huangPredictingManyProperties2020}. Moreover, this technique could be made even more attractive by incorporating adaptivity into our measurement protocol. Judiciously altering the measurement ensemble as more data is obtained allows for more efficient learning and thus reduces the sample complexity. To implement an adaptive learning scheme, one may incorporate ideas from Ref.~\cite{lange2022adaptive} into our learning technique in order to use fewer measurements to achieve the same infidelity.\\
\vspace{-0.001mm}
Finally, our work also lends itself to learning about experimental imperfections in state preparation. In this sense, our scheme goes beyond state verification: it learns not only the state prepared but also the unwanted deviations from the desired state, which constitute feedback to the experiment on how to better achieve a higher-fidelity preparation over many iterations. 
This variational state preparation is practical from a sample-complexity point of view when the underlying state has some efficient representation; in other words, an exponential number of degrees of freedom need not be learned. This is precisely the case in many condensed-matter contexts, where typical states exhibit entanglement structures that are well captured by tensor networks~\cite{AreaReview}.

\section{Code and data availability}
The numerical code used for producing results and generating the figures in this work is available at: \href{https://github.com/teng10/tn-shadow-qst}{https://github.com/teng10/tn-shadow-qst}. 

\begin{acknowledgments}
We thank Jacob Barandes, Christian Bertoni, Jonathan Conrad, Zo\"{e} Holmes, Robert Huang, Dmitrii Kochkov, Nathan Leitao, Alexander Nietner, Jakob Unfried, Guifre Vidal, and Manuel Rudolph for useful discussions. We thank Hong-Ye Hu and Leandro Aolita for feedback on the manuscript.
Y.T. is grateful to Dmitrii Kochkov for his inputs on the initial prototype of the numerical code base and systematic code reviews.
Y.T. also thanks the groups of Zo\"{e} Holmes and Jens Eisert for their hospitality during her visits, when part of this work was completed.
\wwc{We thank the anonymous referee for pointing out the subtlety of measurement noise for states prepared via analog simulation.}
R.S. is supported by the Princeton Quantum Initiative Fellowship.
K.V.K. acknowledges support from the Fannie and John Hertz Foundation and the National Defense Science and Engineering Graduate (NDSEG) Fellowship. The Berlin team acknowledges support from the BMBF (MuniQCAtoms), the DFG (CRC 183, FOR 2724), the Munich Quantum Valley, 
and the ERC
(DebuQC). Y.T. and S.S. have been supported by the U.S. Department of Energy under Grant DE-SC0019030. 

\end{acknowledgments}

\bibliography{Refs.bib}

\onecolumngrid  
\appendix

\begin{appendix}
\clearpage 

\vspace{2.0em}
\begin{center}
\textbf{\Large Appendix}
\end{center}
\tableofcontents

\renewcommand{\appendixname}{APPENDIX}
\renewcommand{\thesubsection}{\MakeUppercase{\alph{section}}.\arabic{subsection}}
\renewcommand{\thesubsubsection}{\MakeUppercase{\alph{section}}.\arabic{subsection}.\alph{subsubsection}}
\makeatletter
\renewcommand{\p@subsection}{}
\renewcommand{\p@subsubsection}{}
\makeatother

\renewcommand{\figurename}{Supplementary Figure}
\setcounter{figure}{0}
\setcounter{secnumdepth}{3}

\bigskip

\newpage

\newpage
\section{Relation to previous works}\label{app:related_works} 
This appendix aims to provide a comparison of our method to existing quantum state tomography techniques. We highlight the unique challenges and opportunities in characterizing complex quantum systems, particularly two-dimensional topological states. In addition to the references summarized below, we direct interested readers to Ref.~\cite{gebhartLearningQuantumSystems2023} for a comprehensive review of quantum state tomography. Finally, having discussed the advantages of our method for learning quasi-$2$D quantum states with topological order, we summarize the limitations faced by our method discussed in the main text.

\begin{enumerate}
    \item \textit{Matrix product state (MPS) tomography}~\cite{cramerEfficientQuantumState2010,  baumgratzScalableReconstructionDensity2013, 
    Efficient,Wick_MPS}
    is an effective method for reconstructing injective one-dimensional quantum states with short-range correlations associated with a local parent Hamiltonian, by finding the best MPS consistent with local reduced density matrices from randomized measurements. In principle, these reduced density matrices can be estimated from Pauli measurements using the techniques of classical shadows~\cite{nietner_unifying_2023}. Its application to $2$D topological states, however, is challenging: first of all, the size of the reduced density matrices required for uniquely determining a quantum state in $2$D is unclear, and secondly,  due to nontrivial ground-state degeneracies of topological states, they cannot be uniquely determined by local operators and thus, cannot be efficiently determined with this method.
    
    \item \textit{Variational MPS tomography}~\cite{wangScalableQuantumTomography2020, gomezReconstructingQuantumStates2022, kurmapuReconstructingComplexStates2022} extends MPS tomography to one-dimensional states without the injectivity constraint. It uses a variational approach to optimize the MPS representation through maximum likelihood estimation (MLE). Prior work also considered using matrix product operators (MPOs) for quantum process tomography~\cite{torlai2023quantum}.  Our work extends this protocol to $2$D states using a ``snake path'', commonly adopted in cylindrical DMRG calculations for $2$D systems. We further enhance our method by incorporating regularization techniques inspired by classical shadow tomography, combining the benefits of both approaches to better characterize two-dimensional quantum systems. We refer the reader to \secref{sec:method} for the details of our protocol.
    \item \textit{Neural network quantum state tomography} (NNQST)~\cite{torlaiNeuralnetworkQuantumState2018} 
    utilizes neural networks as variational ans\"atze within the MLE framework to model quantum states. Although particularly successful for states with fully positive amplitudes (for example, ground states of frustration-free stoquastic Hamiltonians~\cite{bravyiComplexityStoquasticFrustrationfree2008}), this method requires careful consideration when the target states have nontrivial signs and phases. Using measurement outcomes from different bases is therefore essential---a requirement directly addressed in MPS-based approaches, which can inherently relate measurements post local unitary transformations to learn relative phases. In this context, it may be useful to explore a recently proposed architecture that can accurately represent quantum spin liquids by incorporating physical symmetries~\cite{kufel2024approximately}.
    \item \textit{Generative modeling for density matrix reconstruction}~\cite{carrasquillaReconstructingQuantumStates2019}
    extends the use of neural networks to mixed-state tomography. This approach uses single-qubit positive operator-valued measure (POVM) measurements for information-complete reconstructions. While generative models can theoretically model measurement probability distributions across an exponentially large outcome space, practical challenges arise in ensuring convergence to the accurate density matrix, particularly given the sparse nature of such distributions in large systems. It is unclear how effective this approach is~\cite{huangPredictingManyProperties2020} for learning \wwc{highly} entangled quantum states such as the topological ordered states considered in this work.
    \item \textit{Classical shadow tomography}~\cite{ huangPredictingManyProperties2020, huangProvablyEfficientMachine2021} 
    efficiently estimates properties of quantum states by measuring them post application of random unitary rotations, followed by post-processing. This technique has demonstrated that a number of samples independent of the system size is sufficient for estimating the fidelity of specific projectors, such as ground states of the toric code Hamiltonian~\cite{huangPredictingManyProperties2020}, given prior knowledge of the target projector and the utilization of global Clifford measurements. A significant extension of this method~\cite{huangProvablyEfficientMachine2021, vankirkHardwareefficientLearningQuantum2022} employed a dimensional reduction technique along with a nonlinear kernel for post-processing to distinguish toric code states from trivial states using only random Pauli measurements. This unsupervised learning task, while innovative, requires the assembly of numerous distinct quantum states for shadow tomography, presenting a considerable experimental demand. Our method, utilizing random-$XZ$ measurements, sidesteps these requirements, offering a more experimentally feasible approach for a broader range of quantum systems.
    
    \item \textit{Shadow-NNQST}~\cite{weiNeuralShadowQuantumState2023} 
    combines a neural-network ansatz used in NNQST with ``regularization'' of fidelity estimations from classical shadow tomography. This regularization step necessitates global Clifford measurements for efficient fidelity estimation~\cite{huangPredictingManyProperties2020}, and an arbitrary Clifford unitary on $n$ qubits has nontrivial depth. In contrast, our approach leverages random-$XZ$ measurements, avoiding the need for such specific, experimentally costly measurement protocols. More recently, a quantity termed the ``shadow overlap'' has demonstrated faithful certification of fidelity requiring only polynomial sample complexity for a large set of quantum states~\cite{huang2024certifying}, which could significantly improve the training process of NNQST or our variational tensor network tomography. Notably, this technique also only uses single-qubit measurements that are experimentally feasible.
    \item \textit{Learning an unknown stabilizer state} can be carried out efficiently using single-copy measurements~\cite{ chialEfficientLearningDoped2023}. 
    Efficient learning of stabilizer states through single-copy measurements offers a direct method for those topological states that are also stabilizer states. Our protocol expands its applicability to nonstabilizer states amenable to representation as an MPS, which allows for broader quantum state characterization beyond the stabilizer formalism.
    A recent work has considered combining MPS with the stabilizer formalism~\cite{masot-llimaStabilizerTensorNetworks2024a}, which allows for the simulations of states beyond the scope of either method individually---this framework could be also explored in the current context of quantum state tomography.

    \item \textit{Predicting specific properties of quantum states}. If one knows the targeted properties that one wants to learn, there are a plethora of alternative methods, of which we list the most relevant approaches below. We will not try to compare our protocol to all of these directly, since such methods differ in their flexibility with regard to experimental measurements as well as the number of samples required to achieve accurate predictions. 
    \begin{enumerate}
        \item R\'enyi entropy:
        \begin{itemize}
            \item A multicopy approach~\cite{chenHierarchyReplicaQuantum2021}, based on measuring two copies of a quantum state, has been demonstrated experimentally~\cite{islamMeasuringEntanglementEntropy2015, bluvsteinQuantumProcessorBased2022}. However, this requires the preparation of multiple identical copies of a state while demanding advanced control over the quantum system to ensure the fidelity of the copies.
            \item Randomized measurements~\cite{brydgesProbingRenyiEntanglement2019, elbenEnyiEntropiesRandom2018, hu2024demonstration} leverage the statistical properties of the outcomes to estimate the R\'enyi entropy without needing multiple copies of a system. This is both similar to and encompasses classical shadow tomography.
        \end{itemize}
        \item von Neumann entropy:
        \begin{itemize}
            \item Entanglement Hamiltonian tomography~\cite{kokailEntanglementHamiltonianTomography2021} assumes that the reduced density matrices of a subsystem can be approximated by  an ansatz  involving local operators in the Hamiltonian. By learning the coefficients of this ansatz, the technique aims to reconstruct the entanglement Hamiltonian and thereby predict the von Neumann entropy.
        \end{itemize}
        \item Topological invariants:
        \begin{itemize}
            \item Direct tomography can be applied to small subsystem sizes, from which the topological entanglement entropy can be extracted~\cite{kitaevTopologicalEntanglementEntropy2006, levinDetectingTopologicalOrder2006}; this approach was demonstrated experimentally for a surface code~\cite{satzingerRealizingTopologicallyOrdered2021}. Nevertheless, the applicability of this technique for general systems remains unclear, particularly with regard to the required size of the subsystem for accurate measurement of the entanglement entropy.
            \item Randomized measurements can also be used to predict topological invariants of systems such as \emph{symmetry protected topological} (SPT) phases~\cite{elbenManybodyTopologicalInvariants2020} or fractional quantum Hall states~\cite{cianManyBodyChernNumber2021}. 
        \end{itemize}        
    \end{enumerate}
\end{enumerate}

\subsection{Potential advantages and limitations}
Our variational tensor network tomography, aimed at diagnostics and recovery of quantum states, offers potential for extracting various properties such as the von Neumann entanglement entropy, difficult to learn using other techniques. This method combines maximum likelihood estimation of the quantum state with randomized measurements and is detailed in previous works such as \cite{wangScalableQuantumTomography2020, kurmapuReconstructingComplexStates2022, torlai2023quantum}. We demonstrate this approach using simplified $XZ$ measurements, which proves sufficient for the two-dimensional topological quantum states considered, with complex long-range entanglement. Such quantum states are represented by noninjective MPS, and can not be learned via MPS tomography~\cite{cramerEfficientQuantumState2010,  baumgratzScalableReconstructionDensity2013}. 

One significant limitation is that the target state must admit an efficient tensor-network representation. In numerical experiments, we use an MPS ansatz, recognizing that while it effectively approximates one-dimensional states, it may not capture the full complexity of two-dimensional systems. This works sets the stage for future exploration of more sophisticated tensor networks. We also discuss the adaptation of statistical properties of MLE for quantum tomography, outlining both its potential and the challenges posed by the gauge freedom in MPS representations.

Another limitation is that real laboratory states can be mixed, which requires using generalizations of tensor networks such as locally purified matrix product operators~\cite{werner2016positive}. Additionally, one generally does not have access to the ground-truth fidelity with the lab state. This necessitates methods like cross-validation as discussed in \appref{app:training} or efficient fidelity certifications~\cite{huang2024certifying} to validate the trained model against empirical data.

\newpage
\section{Asymptotic MLE error via MPS manifold}
\label{app:mle_manifold}

Our variational algorithm in the framework of \emph{maximum likelihood estimation} (MLE) aims to find the MPS representation $\ket{\psi({A})}$ that is most consistent with measurements of the target state vector $\ket{\phi}$. Intuitively, we expect  that the accuracy of the optimal MLE solution improves with more samples. The goal of this appendix is to analyze the accuracy of such a solution as a function of the number of samples $N$.
More precisely, we provide a probabilistic upper bound of the infidelity $1-F$ by deriving a concentration inequality using the statistical properties of MLE. First, we summarize our main finding of the probabilistic bound in the theorem below:
\begin{theorem}[\textbf{Probabilistic bound}]\label{thm:mle-manifold-normality}
Let $\ket{\psi(\est{A})}$ be the MLE solution with a maximum bond dimension $\chi_{\mathrm{max}}$ for an $n$-qubit target state vector $\ket{\phi}$. With probability greater than $1-\delta$, the infidelity is upper bounded asymptotically in the limit of large number of samples $N$ as
\begin{equation}\label{eq:mle-manifold-informal}
    \mathbb{P} \left[1 - \abs{\braket{\phi}{\psi(\est{A})}} \leq \epsilon(N) \right] \geq 1- \delta,
\end{equation}
where $\epsilon(N) = O\left( \sqrt{\displaystyle\frac{n \chi_{\mathrm{max}}^2}{N\delta} }\right)$.
\end{theorem}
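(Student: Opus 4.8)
The plan is to recognize $\hat A^{(N)}$ as a maximum likelihood estimator for the i.i.d.\ statistical model in which each datum $(U_i,b_i)$ is drawn from the Born distribution $p_A(U,b) = \abs{\{U\}}^{-1}\abs{\matrixel{b}{U}{\psi(A)}}^2$, and then to convert the asymptotic fluctuations of the MLE into an infidelity bound. First I would write down the negative log-likelihood (the $\beta = 0$ case of Eq.~\eqref{eq:loss_function_mps}) and recall the textbook statement that, under the usual regularity conditions, $\sqrt N\,(\hat A^{(N)} - A_0)$ converges in distribution to a centered Gaussian with covariance $I(A_0)^{-1}$, the inverse Fisher information evaluated at a true parameter $A_0$ with $\ket{\psi(A_0)} = \ket{\phi}$.

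The central obstruction, and the step I expect to be hardest, is that this recipe fails verbatim because the MPS parametrization is gauge redundant: the map $A \mapsto \ket{\psi(A)}$ is many-to-one, so $I(A_0)$ is singular along the gauge directions and cannot be inverted. I would resolve this exactly as the paper signals, by lifting the problem to maximum likelihood estimation on the manifold of MPS states, quotienting out the gauge group and working in the physical tangent space supplied by the MPS geometry of Ref.~\cite{haegeman_geometry_2014}. On this manifold the restricted Fisher information $I_{\mathrm{red}}(A_0)$, with gauge directions projected out, is non-degenerate precisely because random-$XZ$ measurements are tomographically complete on real pure states (Theorem~\ref{thm:real_pure_states}), so $\ket{\phi}$ is locally identifiable. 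Invoking the manifold MLE results of Ref.~\cite{hajriMaximumLikelihoodEstimators2017} then yields asymptotic normality for the tangent deviation, $\sqrt N\, \xi^{(N)} \to \mathcal{N}(0, I_{\mathrm{red}}(A_0)^{-1})$, where $\xi^{(N)}$ parametrizes $\ket{\psi(\hat A^{(N)})}$ relative to $\ket{\phi}$ in a gauge-fixed chart.

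With asymptotic normality in hand, the remaining steps are geometric. I would use the exact identity $1 - \abs{\braket{\phi}{\psi(\hat A^{(N)})}} = \tfrac12 \norm{\Delta^{(N)}}^2$ for optimally phased normalized states, where $\Delta^{(N)} = \ket{\psi(\hat A^{(N)})} - e^{i\theta}\ket{\phi}$, and note that to leading order $\norm{\Delta^{(N)}}^2$ equals the Fubini--Study quadratic form $\xi^{(N)\dagger} G\, \xi^{(N)}$. Taking expectations against the asymptotic law gives $\mathbb{E}\norm{\Delta^{(N)}}^2 \approx \Tr[G\, I_{\mathrm{red}}(A_0)^{-1}]/N$, and I would bound this trace by the number of physical parameters of the ansatz, which for an $n$-site MPS of maximum bond dimension $\chi_{\max}$ is $O(n\chi_{\max}^2)$, yielding $\mathbb{E}\norm{\Delta^{(N)}}^2 = O(n\chi_{\max}^2/N)$. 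A second-moment (Markov/Chebyshev) inequality applied to the nonnegative $\norm{\Delta^{(N)}}^2$ then gives $\norm{\Delta^{(N)}}^2 \le \mathbb{E}\norm{\Delta^{(N)}}^2/\delta$ with probability at least $1-\delta$, and combining this with the loose linear bound $1 - F \le \norm{\Delta^{(N)}}$ produces exactly $\epsilon(N) = O(\sqrt{n\chi_{\max}^2/(N\delta)})$. The $1/\sqrt\delta$ is the fingerprint of this second-moment argument, and replacing the tight quadratic relation $1-F = \tfrac12\norm{\Delta}^2$ by the linear bound is precisely what degrades the scaling from $1/N$ to $1/\sqrt N$, consistent with the numerically observed exponent $\alpha > 1/2$ outperforming the bound. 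The technical care needed lies in justifying the trace estimate uniformly over the gauge-fixing and in controlling the higher-order remainder of the infidelity expansion against the Gaussian tails.
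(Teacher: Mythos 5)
Your proposal is correct and follows essentially the same route as the paper's proof: both resolve the gauge redundancy by passing to the projective MPS manifold of Ref.~\cite{haegeman_geometry_2014}, invoke the manifold-MLE asymptotic normality of Ref.~\cite{hajriMaximumLikelihoodEstimators2017}, apply a Markov-type inequality to the limiting Gaussian quadratic form (whence the $1/\delta$), and bound the resulting trace by the manifold dimension $O(n\chi_{\max}^{2})$. The only difference is cosmetic and lies in the final geometric step---the paper bounds the infidelity by the Fubini--Study geodesic length via $1-\cos d\le d$ (Lemma~\ref{lem:delta-bound}), whereas you use the chord identity $1-F=\tfrac{1}{2}\lVert\Delta\rVert^{2}$ and then deliberately loosen it to a linear bound---and both versions incur the same square-root loss, which, as you correctly observe, is exactly why the stated bound scales as $1/\sqrt{N}$ rather than $1/N$.
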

We provide a detailed and rigorous proof of Theorem~\ref{thm:mle-manifold-normality} concerning the \emph{optimal} MLE solution, focusing on conditions under which it represents a global minimum. Achieving such a global minimum may not always be practical, and our theorem does not address the complexity of reaching such a solution. Moreover, the theorem is specifically applicable in the \emph{asymptotic limit} as $N \rightarrow \infty$. For the validity of our theorem, we assume that the target state is within the variational space used for MLE, i.e., it is an MPS of known bond dimension.
In practice this condition might not be satisfied.

Ideally, one would like to prove a theorem that applies to our variational MPS tomography \emph{in practice}. There are a wide variety of technical obstacles to achieving this; thus, our goal in this work is to take the first steps toward this direction and provide a thorough introduction of the required mathematical tools. In this process,  we (a) present an initial theorem under a specific set of assumptions, and (b) offer pedagogical insights into the technical tools and conceptual challenges involved in proving our theorem and potentially more comprehensive statements.
The success of our numerical results, combined with these theoretical explorations, provides both motivation and foundational insights for developing more specialized analyses.

To build some intuition, we first review the maximum likelihood estimation (MLE) within traditional settings. It turns out that these results cannot be directly applied to our context due to the gauge freedom of MPS.
To generalize the MLE analysis to our MPS-estimation setting, we first elaborate on the gaps that we are filling in and how our statement differs from conventional results.

Let us begin with a standard MLE problem, where the goal is to estimate a single parameter. Considering a sequence of $N$ independent and identically distributed (i.i.d.) samples $(x_i)_{i=1}^{N}$ drawn from a target Gaussian distribution $\mathcal{P}_{\theta_0}$ with mean $\theta_0$ and a known variance $\sigma^2$,
\begin{equation}\label{eq:gaussian}
    \mathcal{P}_{\theta_0}(x)=\frac{1}{\sqrt{2\pi}\sigma} \exp\left[-\frac{(x - \theta_0)^2}{2 \sigma^2}\right].
\end{equation}
Since the variance is known, our goal is to find the estimator $\est{\theta}$ of the mean such that the likelihood of the observed samples is maximized. The \emph{negative log-likelihood} (NLL) 
of observing the $N$ samples is
\begin{equation}
    \mathcal{L}(\theta) = - \frac{1}{N}\sum_{i=1}^{N} \log\left( \exp\left[-\frac{(x_i - \theta)^2}{2 \sigma^2}\right] \right) = \frac{1}{N}\sum_{i=1}^{N} \frac{(x_i - \theta)^2}{2 \sigma^2}.
\end{equation}
For this simple case of using a Gaussian distribution as our model prior, the minimization of the NLL $\mathcal{L}$ is the same as minimizing the least-squared error. Differentiating $\mathcal{L}$ with respect to $\theta$ we readily find the extremum $\est{\theta} = \frac{1}{N}\sum_{i} x_i$, which is the well-known empirical mean.

Note that the estimator $\est{\theta}$ is a random variable, as it depends on the random samples $(x_i)_{i=1}^{N}$.
Hence, we can describe the probability distribution of $\est{\theta}$.
Under certain conditions this distribution tends to the normal distribution as the number of samples goes to infinity.
This property is known as \emph{asymptotic normality} and has been studied rigorously~\cite{neweyChapter36Large1994}.
We will state the precise technical conditions for asymptotic normality later (in the context of MLE on manifolds) in Lemma~\ref{lem:asymptotic-normality}, but emphasize here already that, in general, they entail two crucial requirements: 1) the MLE solution is consistent such that $\theta_0$ is the unique solution---the unique minimizer of $\mathcal{L}$ in the infinite-sample limit---describing $\mathcal{P}_{\theta_0}$, i.e., there exists no other $\theta$ for which $\mathcal{P}_\theta = \mathcal{P}_{\theta_0}$; 2) the landscape of the loss function $\mathcal{L}$ is well behaved for all $N$, and in particular, in the limit $N \rightarrow \infty$, the Hessian needs to be well behaved.
To be precise, asymptotic normality means that in the limit of a large number of samples $N$, the estimator $\est{\theta}$ converges to a normal distribution in the sense
\begin{equation}\label{eq:mle-normality}
    \sqrt{N}(\est{\theta} - \theta_0) \overset{\mathrm{d.}}{\longrightarrow} \mathcal{N}(0, \sigma^2).
\end{equation}
Here, $\overset{\mathrm{d.}}{\rightarrow}$ denotes \emph{convergence in distribution}, i.e., as $N$ increases, the sequence of distributions of the random variables $\sqrt{N}(\est{\theta} - \theta_0)$ for growing $N$ approximates the normal distribution increasingly well.
With the normality condition, we can write down a concentration inequality to probabilistically upper bound the error $\abs{\est{\theta} - \theta_0}$.
As $N$ increases, the error reduces with a rate of $1/\sqrt{N}$ so that in the infinite-$N$ limit, the convergence $\est{\theta} \overset{\mathrm{p.}}{\longrightarrow} \theta_0$ is guaranteed by the consistency condition.
To illustrate how this condition could fail, let us consider a modified Gaussian distribution periodic in the parameter
\begin{equation}\label{eq:gaussian}
    \mathcal{P}_{\theta_0} = \frac{1}{\sqrt{2\pi}\sigma} \exp\left[-\frac{(x - \sin(\theta_0))^2}{2 \sigma^2}\right].
\end{equation}
Now, a parameter shifted by a multiples of $2\pi$ still represents the same distribution $\mathcal{P}_{\theta_0} = \mathcal{P}_{\theta_0 + 2 \pi}$. Therefore, $\est{\theta}$ need not converge to $\theta_0$. In physics, the phenomenon where multiple parameters describe the same function is known as \emph{gauge redundancy}. In this simple scenario, we can explicitly account for the periodicity in the optimal estimator, or equivalently impose ``gauge fixing'' conditions to select a unique representative, by restricting the parameter domain to $[0, 2\pi)$. However, for distributions parametrized by MPS, it is challenging in practice to fix the gauge degrees of freedom consistently.
Specifically, transforming MPS to canonical forms using singular value decomposition only partially fixes the gauge. Additional unitary-gauge degrees of freedom remain in such a decomposition. Moreover, degeneracies in the singular values introduce redundancy within the subspace spanned by degenerate eigenvectors. 

Consequently, the MPS tensors that minimize our loss function do not satisfy the consistency condition. Nevertheless, if our measurement scheme is tomographically complete [see \appref{app:reality_random_xz_measurements}], the physical state itself meets the consistency condition. This implies that even though multiple MPS tensor configurations may solve the MLE problem, they all correspond to the same physical state. Thus, our goal is to investigate the asymptotic properties of MLE estimators across the set of physical states represented by MPS. These states, represented by a well-defined class of MPS described further in this section, form a manifold. Therefore, our analysis aims to understand the asymptotic properties of MLE estimators on this manifold.
Such a generalized asymptotic normality on a manifold has been established in previous work~\cite{hajriMaximumLikelihoodEstimators2017}. What remains is to relate the normality condition to infidelity, which we discuss by lifting the distance on a manifold to the state distance in the embedding Hilbert space via Lemma~\ref{lem:delta-bound}. 

In the rest of this self-contained appendix, we provide the necessary background on MLE and differential geometry for readers interested in technical details and a proof of the theorem, provided in \appref{sec:infidelity-bound}.
The first five sections review existing results,  which we then use in our proof.  First, we briefly review notations for MPS commonly used for the rest of the section in \appref{sec:mps}. We provide a minimal list of technical definitions in differential geometry in \appref{sec:diff-geo}, which can be skipped for readers already familiar with the topic. Equipped with this knowledge of differential geometry, we review how the projective Hilbert space can be viewed as a manifold in \appref{sec:projective-hilbert-space}. Then, given a subspace of the projective Hilbert space defined by an MPS, we review properties of the projective MPS manifold in \appref{sec:manifold}. 
Finally, our main result relies on a theorem establishing asymptotic normality of the MLE estimator on a manifold, which we restate in \appref{sec:mle-manifold}. We conclude this section with a summary of our main result as well as our interpretation of its relations to our numerical results in \appref{app:mle-discussions}.

\vspace{2mm}
\newpage
\subsection{Matrix product states}\label{sec:mps}
In this subsection, we introduce the notations and definitions used for \emph{matrix product states} (MPS) throughout the rest of the subsection, following \refcite{haegeman_geometry_2014}. 
To represent a quantum state across $n$ sites with physical dimension $d$, we define the MPS tensors over complex numbers as
\begin{equation}\label{eq:complex-tensors}
    \Abb_\chi = \bigoplus_{i=1}^n \cc^{\chi_i \times d \times \chi_{i+1}} \cong \cc^l.
\end{equation}
Here, $\chi = (\chi_1, \ldots, \chi_{n+1})$ represents the bond dimensions and sums to a total tensor dimension of $l = d \sum_{i=1}^n \chi_i\chi_{i+1}$. Specifically, the tensor at the $i$-th site is denoted by $A^{[i]}$. The tensor elements are $A_{\alpha, \beta}^{[i] \delta}$, with $1 \leq \alpha \leq \chi_i$, $1 \leq \beta \leq \chi_{i+1}$, and $1 \leq \delta \leq d$. For this work, we limit the physical dimension to $d=2$ (representing qubits) and consider open boundary conditions such that $\chi_1 = \chi_{n+1} = 1$.
We define an MPS via a map from the space of MPS tensors $\Abb_\chi$ to the Hilbert space $\mathcal{H} \cong \cc^{d^n}$ of $n$ sites.
\begin{align}\label{eq:mps_map}
    \psi_\chi: \Abb_\chi &\longrightarrow \mathcal{H}, \\
    A &\longmapsto \sum_{b \in {\lbrace 0, 1\rbrace}^n} A^{[1]b_1} \cdots A^{[n]b_n} \ket{b}.
\end{align}
The set \(\{\ket{b}\}\) represents the computational basis, where the bitstring \(b \in \{0, 1\}^n\), unless specified otherwise. Note that in the main text, we also use the notation $\ket{\psi(A)}$ to refer to the physical state, which is inferred from the context.

Physically distinct quantum states are identified up to a global phase and normalization factor, and are unique elements in the \emph{projective Hilbert space} $\projH$. The elements in this space are equivalence classes, often called ``rays'' of vectors in $\mathcal{H}$. 

\begin{definition}[\textbf{Projective Hilbert space \cite{haegeman_geometry_2014}}] 
    Associated with a complex Hilbert space $\mathcal{H}$, the projective Hilbert space is defined as
    \begin{equation}\label{eq:projective_H}
        \projH:= \mathcal{H}/\GL(1, \cc),
    \end{equation}
    which contains all rays represented by the equivalence classes $[\ket{\psi}]$.
    Two nonzero vectors $\ket{\psi}, \ket{\phi} \in \mathcal{H}$ are equivalent if and only if $\ket{\psi} = \lambda \ket{\phi}$ for some $\lambda \in \GL(1, \cc)$, the group of nonzero complex numbers.  For simplicity, we use the notation $[\psi]$ to denote $[\ket{\psi}]$.
\end{definition}

We can restrict MPS to the projective Hilbert space $\projH$ by defining a new mapping $\tilde{\psi}_{\chi}$. This projection maps an MPS tensor onto a ray, effectively associating each MPS with a unique physical state representation. While it is possible to consider the manifold of MPS without this projection~\cite{haegeman_geometry_2014}, using the projective MPS simplifies our MLE analysis by removing ambiguious normalization or phase factors that are not physically observable.

\begin{definition}[\textbf{Projective MPS \cite{haegeman_geometry_2014} (Definition 13)}] 
The map $\tilde{\psi}_{\chi}$ from the space of MPS tensors $\Abb_\chi$
  to the projective Hilbert space $\projH$ is
\begin{align}
        \tilde{\psi}_\chi: \Abb_\chi &\longrightarrow \projH, \\
        A &\longmapsto [\psi_\chi(A)], 
\end{align}
where $[\psi_\chi(A)]$ denotes the ray representing the equivalence class of the state vector $\psi_\chi(A)$ under rescaling by nonzero complex numbers. 
\end{definition}    
Throughout the rest of this section, what we refer to by $\ket{\psi}$ or $[\psi]$ should be understood from the context as denoting either a normalized state vector or a ray in the projective Hilbert space, in distinction to the projective mapping $\tilde{\psi}_{\chi}$. Note that the image $\tilde{\psi}_\chi(\Abb_\chi)$ represents only a subspace of $\projH$ and does not possess the structure of a vector space~\footnote{For instance, consider the sum of two MPS of bond dimension $\chi$ as $({\ket{\psi_{\chi}} + \ket{\phi_{\chi}}})/{\sqrt{2}}$. This new state has bond dimension up to $2\chi$, which is beyond the original space defined for bond dimension $\chi$.}. In the following sections, we will provide constraints on the domain of MPS tensors such that their image under $\tilde{\psi}_\chi$ (and $\psi_\chi$) forms a differentiable manifold within $\projH$. 
Equipping the set of MPS with such an additional structure, we will be able to rigorously analyze the accuracy of the MLE solution on the space of a manifold, as detailed in \appref{sec:mle-manifold}. 

\vspace{2mm}
\subsection{Differential geometry}\label{sec:diff-geo}

This subsection introduces the essential concepts of differential geometry required to navigate through the subsequent discussions in Appendix~\ref{sec:projective-hilbert-space}, \ref{sec:manifold}, and \ref{sec:mle-manifold}. Assuming no prior expertise in differential geometry, we aim to outline here the key definitions and provide intuitive explanations for these concepts that can be found in standard textbooks~\cite{kuehnel_diffgeo_2008}. While this section is in principle self-consistent, we highly encourage readers who would like to fully understand this material to consult Refs.~\cite{kuehnel_diffgeo_2008, nakahara_geometry_2003} for more context.
Readers who are well familiar with differential geometry of Riemannian manifolds may choose to skip most of this section and only briefly review Def.~\ref{def:geodesic} and Fact~\ref{fact:constant-velocity}, which introduces terminologies used specifically in the context of our theorem.

Let us begin by briefly recalling the concept of an  $r$-dimensional differentiable manifold $M$. We use $\theta \in M$ to denote elements on the manifold, where, in the subsequent sections, $\theta$ parametrizes some underlying probability distribution. Formally, any local neighborhood $M_i\subset M$ can be related to Euclidean space $\rr^r$. This relationship is established through an injective map $\varphi_i:M_i \rightarrow \rr^r$, known as a \emph{chart}. A chart provides a local coordinate system near each point in its neighborhood. Intuitively, the manifold is covered by a collection of neighborhoods 
$M_i$ such that the union $\bigcup_{i\in I} M_i = M$ represents the entire manifold.

The collection of charts, known as an \emph{atlas}, allows us to study functions defined on $M$ using the usual machinery from calculus defined on $\rr^r$.
Moreover, for $M$ to be called differentiable, any composition $\varphi_i^{-1} \circ \varphi_j$ defined on any two overlapping patches $M_i \cap M_j \neq \emptyset$ must be differentiable.
This allows us to define the tangent vectors $X$ at any point $\theta \in M$.
In alignment with Ref.~\cite{haegeman_geometry_2014}, we adopt a geometric perspective of tangent vectors by representing them as column vectors $(X_1, \cdots, X_r)^\top$. 
The components of 
the vector $X$ depend on the local coordinates in $\rr^r$, defined by the chosen 
chart $\varphi$.\footnote{Tangent vectors to $M$ at points $\theta$ can be naturally seen as directional derivatives. For a smooth function
$f:M \rightarrow \rr$ defined on the manifold, if $X$ is a tangent vector to the manifold $M$ at $\theta$,
then the directional derivative (defined in Def.~\ref{def:directional-derivative}) is given by $D_X(f) = X(f)$. For what follows, it is convenient to directly refer to tangent vectors.
}
The space of all tangent vectors at $\theta$ is called the tangent space $\rmT_{\theta}M$ and it is isomorphic to $\rr^r$. Consequently, each tangent vector can be expressed as $X = \sum_{a=1}^{r} X_a e_a$ where $\lbrace e_a\rbrace$ is a basis of $\rmT_\theta M$.

To compute the lengths and angles of the tangent vectors, it is essential to define an inner product for each tangent space, called \emph{metric}. 
In general, a metric provides an inner product for each tangent space $\rmT_\theta M$ which explicitly depends on the point $\theta$, i.e., as one moves across the manifold, the inner product \emph{changes}.
When the inner product is real-valued and its change is gradual, such that it is differentiable (as precisely defined below), the metric is called a \emph{Riemannian} metric.
This allows us to track the change of lengths and angles between tangent vectors as one moves across the manifold so that we can define distances between points on the manifold.

\begin{definition}[\textbf{Riemannian metric~\cite{kuehnel_diffgeo_2008}}]\label{def:rie-metric}
    A Riemannian metric $g$ is a smooth family of inner products $\lbrace g_\theta \vert \theta \in M\rbrace$ on the tangent spaces of $M$. The metric $g_\theta$ at each point $\theta$, defined by
    \begin{equation}
        g_\theta: \rmT_\theta M \times \rmT_\theta M \longrightarrow \rr,
    \end{equation}
    is bilinear and positive definite. Represented in the basis $\lbrace e_a\rbrace$, its matrix elements $(g_\theta)_{a, b} = g_\theta(e_a, e_b)$ vary smoothly with respect to $\theta$.
\end{definition}

\vspace{2mm}
In the discussion that follows, we will use $g$ to denote the metric $g_{\theta}$, as it is understood that the metric is defined for the tangent space at each point $\theta \in M$. Additionally, we adopt the notation $\braket{\bullet}{\bullet}_g$ to represent the inner product of any two tangent vectors, which is computed according to the metric $g$. 

\vspace{1mm}
\begin{definition}[\textbf{Riemannian manifold}]\label{def:rie-manifold}
    A differentiable manifold $M$ together with a Riemannian metric $g$ forms a Riemannian manifold $(M, g)$.
\end{definition}

\vspace{2mm}
To be able to discuss curves connecting two points on a manifold (Def.~\ref{def:geodesic}), we introduce the concept of a Riemannian connection, which generalizes the notion of the covariant derivative.

\vspace{1mm}
\begin{definition}[\textbf{Riemannian connection~\cite{kuehnel_diffgeo_2008}}]\label{def:rie-connection}
    Let $\Gamma(\rmT M)$ be the set of differentiable vector fields $\theta \mapsto X \in \rmT_\theta M$ on a Riemannian manifold $(M, g)$.
    A map $\nabla: \Gamma(\rmT M) \times \Gamma(\rmT M) \rightarrow \Gamma(\rmT M)$ is called a Riemannian connection, typically expressed as $\nabla_X Y$ representing $\nabla(X, Y)$. It holds that
    \begin{enumerate}
        \item $\nabla_{fX + hY}Z = f\nabla_X Z + h \nabla_Y Z$,
        \item $\nabla_X(Y + Z) = \nabla_X Y + \nabla_X Z$,
        \item $\nabla_X fZ = f\nabla_X Z + (\rmD_X f) Z$,
        \item $\rmD_X g(Y, Z) = g(\nabla_XY, Z) + g(Y, \nabla_XZ)$,
        \item $\nabla_XY - \nabla_YX - [X, Y] = 0$,
    \end{enumerate}
    for all $X, Y, Z \in \Gamma(\rmT M)$ and all smooth functions $f, h: M \rightarrow \rr$ defined on the manifold.
    $[\bullet, \bullet]$ is the Lie bracket defined by $[X, Y](f) := \rmD_X\rmD_Y f - \rmD_Y\rmD_X f$.
\end{definition}
\vspace{2mm}
Here we use $\rmD_X f$ to denote the directional derivative of a scalar-valued function $f$, also commonly denoted by $X(f)$ in the literature.
The directional derivative has a direct correspondence with the tangent vectors, and can be formally defined as follows:
\begin{definition}[\textbf{Directional derivative}]
    \label{def:directional-derivative}
    Let $M$ be a differentiable manifold, $\theta \in M$ and $X \in \rmT_\theta M$.
    Given a differentiable function $f:M \rightarrow \rr$, the directional derivative $\rmD_X f$ (also denoted $X(f)$) at $\theta$ is defined as
    \begin{equation}
        \rmD_X f \big\vert_\theta = \frac{\dd}{\dd t} f(\gamma(t))\Big\vert_{t=0},
    \end{equation}
    where $\gamma: \rr \rightarrow M$ is a differentiable curve such that $\gamma(0) = \theta$ and the velocity $\dot{\gamma}(0) = X$.
\end{definition}
\vspace{2mm}
The Riemannian connection is sometimes also referred to as the \emph{Levi-Civita connection} and for each Riemannian manifold, there exists exactly one Riemannian connection~\cite{kuehnel_diffgeo_2008}. As mentioned above and as the notation suggests, the Riemannian connection provides us with a notion of directional (or covariant, to be more precise) derivatives on a manifold.
This allows us to describe how vectors change as they are moved along a curve on the manifold.
Specifically, we can now define a notion of \emph{parallelism} along curves, a concept that is very intuitive in Euclidean space but much less so in curved spaces.
\vspace{2mm}
\begin{definition}[\textbf{Parallel transport}]\label{def:transport}
    Let $\alpha: [0, 1] \rightarrow M$ be a smooth curve on the Riemannian manifold $(M, g)$ and $\dot{\alpha}(t)$ be its velocity vector in the tangent space $\rmT_{\alpha(t)}M$. The parallel transport of a tangent vector $X_0 \in \rmT_{\alpha(0)}M$ along the curve $\alpha$ under a Riemannian connection $\nabla$ is defined by the equation
    \begin{equation}
        \nabla_{\dot{\alpha}(t)} X(t) = 0,
    \end{equation}
    where $X(t) \in \rmT_{\alpha(t)} M$ for all $t \in [0, 1]$ and $X(0) = X_0$.
    The vector field $X$ is said to be parallel along $\alpha$ if it satisfies the above condition.
\end{definition}
\vspace{2mm}
A special case of parallel vector fields is given by \emph{geodesic curves}.
The velocity vector $\dot{\gamma}$ of a geodesic curve $\gamma: I \rightarrow M$ (with $I \subseteq \rr$ some parameter interval) is parallel along $\gamma$ itself. For simplicity, we will take the unit interval $I = [0, 1]$ because a curve with any other interval can always be reparametrized. 
The notion of a geodesic extends the concept of a straight line from Euclidean space to curved spaces like manifolds.
In general, there can be multiple geodesic curves connecting two points on a manifold\footnote{For example, the great circle on the sphere $S^2$ is a geodesic curve. Two points on the great circle separate the circle into two arcs. The shorter arc is the shortest connecting geodesic, and the longer arc is the longer connecting geodesic.}. For our purpose, we only consider the curves that represent the \emph{shortest} path 
between two points on the manifold, parametrized by arc length on the unit interval.
\vspace{2mm}
\begin{definition}[\textbf{Shortest connecting geodesic}]\label{def:geodesic}
    Let $(M, g)$ be a Riemannian manifold and  $\nabla: \Gamma(\rmT M) \times \Gamma(\rmT M) \rightarrow \Gamma(\rmT M)$ the associated Riemannian connection. Given two points $\theta_1, \theta_2 \in W$ where $W \subset M$ is a sufficiently small neighborhood, the curve
    \begin{align}
        \gamma: [0, 1] &\longrightarrow W ,\\
        t &\longmapsto \gamma(t), 
    \end{align}
    is the shortest connecting geodesic of $\theta_1$ and $\theta_2$ if $\gamma(0) = \theta_1$ and $\gamma(1) = \theta_2$, and for all $t \in [0, 1]$ the tangent vectors $\dot{\gamma}$ satisfy
    \begin{equation}
        \nabla_{\dot{\gamma}}\dot{\gamma} = 0.
    \end{equation}
    This implies that $\dot{\gamma}$ is parallel transported along the geodesic curve $\gamma$.
\end{definition}

Note, that in general, one can loosen the condition for geodesics to 
$\nabla_{\dot{\gamma}}\dot{\gamma} = \lambda(t) \dot{\gamma}$ where $\lambda$ is some scalar function and $\gamma$ is required to be regular ($\dot{\gamma}(t) \neq 0$ for all $t$).
However, by setting $\nabla_{\dot{\gamma}}\dot{\gamma} = 0$, we enforce that the velocity vector $\dot{\gamma}$ of the geodesic cannot change in the direction of the curve, so that the velocity vector has a constant length throughout the curve. This is formally stated in Fact~\ref{fact:constant-velocity} below.
\vspace{2mm}

\begin{fact}[\textbf{Velocity vector}]\label{fact:constant-velocity}
    Consider a geodesic $\gamma$ (defined in Def.~\ref{def:geodesic}) on a Riemannian manifold $(M, g)$ of dimension $r$. If $\{e_a(t)\}_{a=1}^{r}$ represents the set of basis vectors parallel transported along $\gamma$, the velocity vector $\dot{\gamma}$ can be expressed as
    \begin{equation}
        \dot{\gamma} = \sum_{a=1}^{r} \Delta_a e_a(t),
    \end{equation}
    where the decomposition coefficients $\{\Delta_a\}_{a=1}^{r}$ are constant due to the constraint $\nabla_{\dot{\gamma}}\dot{\gamma} = 0$.
    Moreover, the length of the velocity vector is given by
    \begin{align}
        \Vert \dot{\gamma}(t)\Vert &:=  \sqrt{ \braket{\dot{\gamma}}{\dot{\gamma}}_g} \\
        & = \sqrt{\sum_{a,b}\Delta_a\, g_{a,b}\, \Delta_b},
        \nonumber
    \end{align}
    which is a constant for all $t \in [0, 1]$, due to the aforementioned constraint $\nabla_{\dot{\gamma}}\dot{\gamma} = 0$ representing the constant speed at which $\gamma$ traverses the manifold.
    The metric tensor components $g_{a,b}(t) = \braket{e_a(t)}{e_b(t)}_g$ in the basis $\{e_a(t)\}_{a=1}^{r}$ along the curve are constant as well.
\end{fact}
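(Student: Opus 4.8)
The plan is to derive all three claims from just two inputs: the metric compatibility of the Levi-Civita connection (property~4 of Def.~\ref{def:rie-connection}) and the vanishing of the relevant covariant derivatives along $\gamma$. Parallel transport of the frame (Def.~\ref{def:transport}) gives $\nabla_{\dot\gamma} e_a = 0$ for every $a$, while the geodesic equation of Def.~\ref{def:geodesic} gives $\nabla_{\dot\gamma}\dot\gamma = 0$. Throughout, I would identify the directional derivative $\rmD_{\dot\gamma}$ of a scalar quantity along the curve with $\frac{\td}{\td t}$ acting on its value at $\gamma(t)$, as sanctioned by Def.~\ref{def:directional-derivative} with $X = \dot\gamma$; the only mild technicality is that the frame $\{e_a(t)\}$ is defined along $\gamma$ rather than on all of $M$, so property~4 is invoked in its standard ``along the curve'' form.

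First I would show that the metric components are constant. Taking $X = \dot\gamma$, $Y = e_a$, $Z = e_b$ in property~4 of Def.~\ref{def:rie-connection} yields
\[
\frac{\td}{\td t}\, g_{a,b}(t) = g(\nabla_{\dot\gamma} e_a,\, e_b) + g(e_a,\, \nabla_{\dot\gamma} e_b) = 0,
\]
since both terms vanish by parallel transport; hence each $g_{a,b}(t) = \braket{e_a(t)}{e_b(t)}_g$ is constant on $[0,1]$. Next I would expand $\dot\gamma = \sum_a \Delta_a(t)\, e_a(t)$ in the transported frame and differentiate covariantly, using the Leibniz rule (property~3 of Def.~\ref{def:rie-connection}) together with $\nabla_{\dot\gamma} e_a = 0$:
\[
0 = \nabla_{\dot\gamma}\dot\gamma = \sum_a \big[ (\rmD_{\dot\gamma}\Delta_a)\, e_a + \Delta_a\, \nabla_{\dot\gamma} e_a \big] = \sum_a \dot\Delta_a\, e_a.
\]
Linear independence of the basis $\{e_a\}$ then forces $\dot\Delta_a = 0$, so each coefficient $\Delta_a$ is constant.

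The constancy of the speed is then immediate: $\braket{\dot\gamma}{\dot\gamma}_g = \sum_{a,b}\Delta_a\, g_{a,b}\,\Delta_b$ is a fixed combination of quantities already shown to be constant, so $\Vert\dot\gamma(t)\Vert$ does not depend on $t$. (Equivalently, one may skip the frame expansion and set $Y = Z = \dot\gamma$ directly in the metric-compatibility identity, giving $\frac{\td}{\td t}\braket{\dot\gamma}{\dot\gamma}_g = 2\,g(\nabla_{\dot\gamma}\dot\gamma,\, \dot\gamma) = 0$.) I do not anticipate a genuine obstacle, as this is a textbook consequence of the defining axioms of the Riemannian connection; the only place demanding care is the consistent bookkeeping of the Leibniz and metric-compatibility rules and the justified identification of $\rmD_{\dot\gamma}$ with differentiation in $t$ along the curve.
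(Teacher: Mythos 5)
Your proof is correct and follows essentially the same route as the paper's: expand $\dot{\gamma}$ in the parallel-transported frame, use the Leibniz rule together with $\nabla_{\dot{\gamma}} e_a = 0$ and linear independence to get constant $\Delta_a$, and use metric compatibility (Property~4 of Def.~\ref{def:rie-connection}) with parallel transport to get constant $g_{a,b}$, from which constancy of the speed follows. The only differences are cosmetic: you establish the metric constancy first and add the optional shortcut $\tfrac{\td}{\td t}\braket{\dot{\gamma}}{\dot{\gamma}}_g = 2\,g(\nabla_{\dot{\gamma}}\dot{\gamma},\dot{\gamma}) = 0$, neither of which changes the substance of the argument.
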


\begin{proof}
We begin by expanding the condition for the geodesics for a given coordinate system
\begin{align}
    0 &= \nabla_{\dot{\gamma}}\dot{\gamma} \\
    \nonumber
    &= \sum_a \nabla_{\dot{\gamma}} \Delta_a e_a \\
      \nonumber
    &= \sum_a \Delta_a \nabla_{\dot{\gamma}} e_a
    + \sum_a (\rmD_{\dot{\gamma}}\Delta_a) e_a.
      \nonumber
\end{align}
Using the definition of the parallel transport of the basis vectors, $\nabla_{\dot{\gamma}} e_a(t) = 0$, we conclude that $\sum_a (\rmD_{\dot{\gamma}}\Delta_i) e_a = 0$.
Since the basis vectors $\lbrace e_a \rbrace$ are linearly independent, the coefficients $\rmD_{\dot{\gamma}}\Delta_a$ must all be zero individually, hence $\Delta_a$ is a constant in $t$ for all $a$.
Moreover, any Riemannian connection preserves the metric (Property 4 in Def.~\ref{def:rie-connection}) and thus, we have
\begin{equation}
    \rmD_{\dot{\gamma}} g_{a,b}(t) = \braket{\nabla_{\dot{\gamma}} e_a(t)}{e_b(t)} + \braket{e_a(t)}{\nabla_{\dot{\gamma}} e_b(t)},
\end{equation}
which is zero since $\nabla_{\dot{\gamma}} e_a(t) = 0$ for all $a = 1, \ldots, r$, by definition.
\end{proof}

Given a connecting geodesic curve on a Riemannian manifold, the length of the curve can serve as a distance measure between the two end points $\theta_1$ and $\theta_2$.
The length of the curve depends on the metric along the path.
Let $\gamma$ be a connecting geodesic as in Def.~\ref{def:geodesic}.
The arc length $L_\gamma(0, 1)$ of any segment of of the curve between $\gamma(0)$ and $\gamma(1)$ is given by
\begin{align}
    \label{eq:arc-length}
    L_\gamma(0, 1) &= \int_0^1 \dd t \sqrt{ \braket{\dot{\gamma}}{\dot{\gamma}}_g}\\
    \nonumber
    &= \sqrt{ \braket{\dot{\gamma}}{\dot{\gamma}}_g}, \label{eq:velocity-length}
\end{align}
which is the constant length of the velocity vector along $\gamma$.

\vspace{2cm}
\subsection{Projective Hilbert space as a manifold}
\label{sec:projective-hilbert-space}
In this subsection, we apply the previously discussed concepts of differential geometry to describe quantum systems. 
In particular, we review how the projective Hilbert space $\projH$ can be understood as a manifold~\cite{haegeman_geometry_2014}.
These descriptions allow us to relate the length of geodesics within this manifold to the quantum state overlap---the fidelity---that we aim to analyze.
\vspace{2mm}
\begin{definition}[\textbf{Projective Hilbert space as a manifold \cite{haegeman_geometry_2014}}]\label{def:hilbert-space-manifold}
    The projective Hilbert space $\projH \simeq \cc \pp^{2^n-1}$ is a differentiable manifold, whose elements correspond to vectors in the Hilbert space modulo a global phase and
        normalization.
\end{definition}
\vspace{2mm}

Note that $\projH$ can be naturally seen as a complex manifold of complex dimension $2^n-1$, although it could, at the same time, also be defined as a real manifold of dimension $2(2^n - 1)$.
Following Ref.~\cite{nakahara_geometry_2003} (Sec.~8.2) and Ref.~\cite{haegeman_geometry_2014} (Sec.~IIA), we can take the tangent space of the \emph{complex} manifold $\projH$ at some point $[\psi] \in \projH$ as the \emph{real} vector space $\rmT_{[\psi]}\projH$ of dimension $2(2^n - 1)$.
To avoid confusion, we remark that it can be useful to consider the \emph{complexification} $\rmT^\cc_{[\psi]}\projH$ (i.e., 
the \emph{complex} span of any basis of $\rmT_{[\psi]}\projH$) of this vector space to define a set of so-called holomorphic tangent vectors.
Note, however, that $\rmT^\cc_{[\psi]}\projH$ has complex dimension $2(2^n - 1)$ and is therefore over-parametrized.

For our result, we only rely on the manifold $\projH$ having the structure of a Riemannian manifold and therefore, we will only work with $\rmT_{[\psi]}\projH$.
However, as we will see, $\rmT_{[\psi]}\projH$ is closely related to the Hilbert space $\mathcal{H}$ itself.
Hence, it is more natural to view the $2(2^n - 1)$-dimensional real tangent vectors as $2^n - 1$-dimensional complex vectors.
Those correspond to vectors in $\mathcal{H}$ that are orthogonal to $[\psi]$.
We stress that these vectors are \emph{not} elements of the complexified tangent space (which would be complex vectors of dimension $2(2^n - 1)$).
\vspace{2mm}
\begin{definition}[\textbf{Tangent space of projective Hilbert space~\cite{haegeman_geometry_2014}}]\label{def:tangent-hilbert-manifold}
    The tangent space $\rmT_{[\psi]}\projH = \mathcal{H}/\!\sim$ at a base point denoted by $[\psi]$ is the space of all tangent vectors that satisfy the following equivalence relation. Let $\ket{\psi_1}, \ket{\psi_2} \in \mathcal{H}$ be two state vectors in the Hilbert space, 
    \begin{equation}
        \ket{\psi_1} \sim \ket{\psi_2} :\!\iff \ket{\psi_1} - \ket{\psi_2} =\alpha \ket{\psi}, 
    \end{equation}
    for some constant $\alpha \in \cc$.
\end{definition}
The equivalence relation removes one dimension from the Hilbert space so that the tangent space is isomorphic to the Euclidean complex space and the Eucliean real space of twice the dimension, i.e., $\rmT_{[\psi]}\projH \simeq \cc^{2^n-1} \simeq \rr^{2(2^n-1)}$.
For the purpose of our result, we only rely on the minimal structure of a Riemannian manifold.
$\projH$ can be 
equipped with the Fubini-Study metric in the context of quantum geometry. Such a choice of the Fubini-Study metric is natural because it represents the inner product between two state vectors in the Hilbert space. 

\begin{definition}[\textbf{Fubini-Study metric~\cite{cheng2010quantum}}]\label{def:fs-metric}
    Let $\ket{\psi_1}$ and $\ket{\psi_2}$ be two tangent vectors in $\rmT_{[\psi]}\projH$ and $\ket{\psi}$ be the normalized representation of $[\psi]$. The Fubini-Study metric is defined as
    \begin{align}\label{eq:fs-definition}
        g :\rmT_{[\psi]}\projH \times \rmT_{[\psi]}\projH &\longrightarrow \rr, \\
        (\ket{{\psi_1}}, \ket{{\psi_2}}) & \longmapsto \re \big[\braket{\psi_1}{\psi_2} - \matrixel{\psi_1}{\psi \rangle \langle \psi}{\psi_2} \big ]. 
    \end{align}
\end{definition}
\vspace{2mm}
Note that here we have defined the Fubini-Study metric by only taking the real part of what is defined in Ref.~\cite{haegeman_geometry_2014}. This is 
a more common definition used in other literature~\cite{cheng2010quantum}, and suffices for our purpose of defining the length of a geodesic.  

In the following, we denote the Fubini-Study metric by $g$, since the choice of metric for $\projH$ is unambiguous.
Using the Fubini-Study metric, we can calculate the distance between two rays in projective Hilbert space.
\vspace{2mm}
\begin{definition}[\textbf{State distance defined 
by Fubini-Study metric in projective Hilbert space}]\label{def:fs-distance-geodesic-length}
     Let $\gamma: [0, 1] \rightarrow \projH$ be the shortest geodesic (see Def.~\ref{def:geodesic}) connecting two rays $\gamma(0)=[\phi]$ and $\gamma(1)=[\psi]$ on the Riemannian manifold $(\projH, g)$.
     The Fubini-Study distance between these two quantum states is defined as the geodesic curve length
     \begin{equation}\label{eq:fs-distance-metric}
         d_\mathrm{FS}([\phi], [\psi]) := \int_0^1 \dd t \sqrt{\braket{\dot{\gamma}}{\dot{\gamma}}_g}.
     \end{equation}
\end{definition}
\vspace{2mm}
Note that this Fubini-Study distance is known to be related to the angle between the two states $[\phi]$ and $[\psi]$.
This is often stated without any proof, but rather the intuitive motivation is that the shortest path in projective space corresponds to the arc length of the enclosed segment of the unit circle containing $[\phi]$ and $[\psi]$.
Since we define the state distance via the curve length of the \emph{shortest} connecting geodesic, we state an explicit expression of the Fubini-Study distance below. 
\vspace{2mm}
\begin{lemma}[\textbf{Fubini-Study distance}]
    \label{lem:fubini-study-distance-arc-length}
    The Fubini-Study distance can be calculated as\begin{equation}\label{eq:fs-distance-arc}
        d_\mathrm{FS}([\phi], [\psi]) = \arccos \left\vert\braket{\phi}{\psi}\right\vert.
    \end{equation}
\end{lemma}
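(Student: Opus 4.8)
The plan is to prove the two inequalities $d_{\mathrm{FS}}([\phi],[\psi]) \ge \arccos|\langle\phi|\psi\rangle|$ and $d_{\mathrm{FS}}([\phi],[\psi]) \le \arccos|\langle\phi|\psi\rangle|$ separately. For the lower bound I would take an \emph{arbitrary} smooth curve $\gamma$ connecting $[\phi]$ and $[\psi]$ and lift it to a normalized curve $\ket{\Psi(t)}$ in $\mathcal{H}$ with $\ket{\Psi(0)}=\ket{\phi}$ and $\ket{\Psi(1)}\propto\ket{\psi}$ (the projective curve is insensitive to the norm and global phase, so such a lift exists). Using the Fubini-Study metric of Def.~\ref{def:fs-metric}, the speed along this lift is $v(t)=\sqrt{\langle\dot\Psi|\dot\Psi\rangle-|\langle\Psi|\dot\Psi\rangle|^2}$, i.e.\ the norm of the component of $\ket{\dot\Psi}$ orthogonal to $\ket{\Psi}$, and the arc length from Eq.~\eqref{eq:fs-distance-metric} is $L_\gamma=\int_0^1 v\,dt$.

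The key object to track is the real function $f(t):=|\langle\phi|\Psi(t)\rangle|$, which satisfies $f(0)=1$ and $f(1)=|\langle\phi|\psi\rangle|$. First I would note that normalization forces $\langle\Psi|\dot\Psi\rangle$ to be purely imaginary, so that the phase contribution drops out of $\tfrac{d}{dt}f^2 = 2\,\re(\overline{\langle\phi|\Psi\rangle}\,\langle\phi|\dot\Psi\rangle)$. Decomposing $\ket{\phi}$ into its component along $\ket{\Psi}$ and an orthogonal remainder of norm $\sqrt{1-f^2}$, and applying Cauchy--Schwarz between that orthogonal part and the orthogonal component of $\ket{\dot\Psi}$, yields the pointwise differential inequality $|\dot f|\le\sqrt{1-f^2}\,v$, equivalently $\big|\tfrac{d}{dt}\arccos f\big|\le v$. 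Integrating and using the triangle inequality for integrals gives $L_\gamma \ge |\arccos f(1)-\arccos f(0)| = \arccos|\langle\phi|\psi\rangle|$; since $\gamma$ was arbitrary, the same bound holds for the shortest connecting geodesic of Def.~\ref{def:geodesic}.

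For the matching upper bound I would exhibit an explicit curve achieving equality. Writing $\ket{\psi}=\cos\theta\,\ket{\phi}+\sin\theta\,e^{i\alpha}\ket{\phi_\perp}$ with $\cos\theta=|\langle\phi|\psi\rangle|$ and $\ket{\phi_\perp}\perp\ket{\phi}$, the great-circle curve $\ket{\Psi(t)}=\cos(\theta t)\,\ket{\phi}+\sin(\theta t)\,e^{i\alpha}\ket{\phi_\perp}$ connects the two rays, is already normalized, and has constant Fubini--Study speed $v=\theta$ (one checks $\langle\Psi|\dot\Psi\rangle=0$ along it, so $v=\|\dot\Psi\|=\theta$). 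Its length is therefore exactly $\theta=\arccos|\langle\phi|\psi\rangle|$. Combining this with the lower bound shows both that the minimizing geodesic attains the value and that $d_{\mathrm{FS}}([\phi],[\psi])=\arccos|\langle\phi|\psi\rangle|$.

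The main obstacle I anticipate is the lower bound: making the Cauchy--Schwarz estimate airtight requires correctly isolating the imaginary (phase) part of $\langle\Psi|\dot\Psi\rangle$ and verifying that the differential inequality $|\dot f|\le\sqrt{1-f^2}\,v$ — and the subsequent integral bound — remain meaningful at points where $f\to 1$ (where the factor $\sqrt{1-f^2}$ vanishes) or where the states become momentarily orthogonal. Handling these degenerate points, together with the gauge freedom in lifting the projective curve to $\mathcal{H}$, is the part demanding the most care; by contrast, the achievability direction is a direct computation.
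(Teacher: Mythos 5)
Your proof is correct, but it takes a genuinely different route from the paper, because the paper does not actually prove Lemma~\ref{lem:fubini-study-distance-arc-length}: it defers entirely to Ref.~\cite{Geometry} (Sec.~4.5), and the logic cited there runs in the \emph{opposite} direction --- the $\arccos$ expression is taken as the definition of the Fubini-Study distance, and one then checks that the metric of Def.~\ref{def:fs-metric} is its consistent infinitesimal form. You instead start from the metric and the geodesic-length definition (Def.~\ref{def:fs-distance-geodesic-length}) and derive the closed form from first principles: a universal lower bound on the length of \emph{every} connecting curve (using that normalization makes $\braket{\Psi}{\dot\Psi}$ purely imaginary, the orthogonal decomposition of $\ket{\phi}$, Cauchy--Schwarz, and integration of $\bigl|\tfrac{\dd}{\dd t}\arccos f\bigr|\leq v$ for $f(t)=\abs{\braket{\phi}{\Psi(t)}}$), matched by an explicit great-circle curve of length exactly $\arccos\abs{\braket{\phi}{\psi}}$. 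What your approach buys is a self-contained argument that also exhibits the minimizing geodesic; what the paper's citation buys is brevity. Two points are worth tightening in a final write-up. First, the degenerate-point issue you flag is handled cleanly by restricting to $(t_*,1]$ with $t_*=\sup\{t: f(t)=1\}$, where $\arccos f$ is absolutely continuous, and taking the limit $a\downarrow t_*$ (the case $f=0$ is harmless since there $\sqrt{1-f^2}=1$ and $\abs{\dot f}\leq v$ directly). Second, your final sentence jumps from "a curve of length $\arccos\abs{\braket{\phi}{\psi}}$ exists" to "the shortest geodesic of Def.~\ref{def:geodesic} has this length"; close this either by noting that a constant-speed curve attaining the universal lower bound is length-minimizing and hence a geodesic, or by invoking completeness of $\cc\pp^{2^n-1}$ (Hopf--Rinow) so that the infimum over curves is realized by a minimizing geodesic. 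With those two remarks your argument is airtight.
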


Ref.~\cite{Geometry}~ (Sec.~4.5) contains a proof by defining the Fubini-Study distance as given by \equref{eq:fs-distance-arc} and showing that the Fubini-Study metric in in \equref{eq:fs-definition} is consistent.

\vspace{2mm}
\subsection{Manifold of matrix product states}\label{sec:manifold}
Building on the notations established for MPS in \appref{sec:mps} and the concepts from differential geometry outlined in Appendix~\ref{sec:diff-geo} and \ref{sec:projective-hilbert-space}, we now proceed to introduce the variational set of projective MPS, as described in~\refcite{haegeman_geometry_2014}.
It turns out that the variational set $\psi_\chi(\Abb_\chi)$ (with $\Abb_\chi$ defined in \equref{eq:complex-tensors}) does not have the structure of a manifold.
In essence, points in $\Abb_\chi$ where one of the Schmidt ranks is lower than specified by the bond dimensions $\chi_1, \ldots, \chi_{n+1}$ constitute self intersections\footnote{
Technically, a self intersection at a point $\theta \in M$ is a point which has two neighborhoods $M_1$ and $M_2$ such that the images under the charts $\varphi_1(M_1)$ and $\varphi_2(M_2)$ are open (required by the definition of a differentiable manifold), but the images $\varphi_1(M_1 \cap M_2)$ and $\varphi_2(M_1 \cap M_2)$ are not open.
A simple example is the figure eight, which is the set of points in $\rr^2$ satisfying $y^2 = x^2 - x^4$. This is a 1-dimensional manifold if one excludes the self-intersection point $(x, y) = (0, 0)$.

To see how this phenomenon manifests itself in the variational set of MPS $\psi_\chi(\Abb_\chi)$, we can consider the example of $n=2$ qudits of local dimension $d=3$ and bond dimension $\chi=2$.
The (complex) dimension of the corresponding manifold $M = \psi_{\chi=2}(\mathcal{A}_{\chi=2})$ is
\begin{equation}
    \dim(M) = 8,
\end{equation}
since the dimension of $\Abb_{\chi=2}$ (and therefore also $\mathcal{A}_{\chi=2}$) is 12 and the dimension of the structure group is 4 (see Theorem~14 in Ref.~\cite{haegeman_geometry_2014}).
Therefore, the tangent space of $M$ at any point has complex dimension 8.

However, if the state vector $\ket{0,0}$---which has Schmidt rank 1---was an element of $M$, there would be 9 linearly independent ``tangent vectors'' at $\ket{0,0}$.
This can be easily seen by considering the curves
\begin{equation}
    \gamma_{i,j}(t) = \ket{0,0} + t \ket{i,j},
\end{equation}
for any $i=0,1,2$ and any $j=0,1,2$.
For any $t \in \rr$, any $i=0,1,2$, and any $j=0,1,2$, the Schmidt rank of $\gamma_{i,j}(t)$ is at most 2 and indeed, all velocity vectors $\dot{\gamma}_{i,j}$ are linearly independent.
This would require the tangent space of $M$ to be 9-dimensional at $\ket{0,0}$, which contradicts the fact that the manifold is 8-dimensional, as discussed above.
}
in $\psi_\chi(\Abb_\chi)$.
We refer to Ref.~\cite{haegeman_geometry_2014} for more details on this issue and only mention here that one needs to introduce additional constraints in order to define a variational set which forms a valid Riemannian manifold.

Let us begin with the MPS tensors $A^{[1]}, \ldots, A^{[n]}$ as defined in \equref{eq:mps_map} and  denote the transfer matrices by $T^{[i]}_{(\alpha, \alpha'), (\beta, \beta')} = \sum_\delta A_{\alpha, \beta}^{[i] \delta} \bar{A}_{\alpha', \beta'}^{[i] \delta}$, where $\bar{A}$ represents the complex conjugation of $A$.
Furthermore, we denote the left and right virtual density matrices by
\begin{align}
    l^{[i]}(A) &= T^{[1]} \cdots T^{[i-1]}, \\
    r^{[i]}(A) &= T^{[i-1]} \cdots T^{[n]},
\end{align}
while we define $l^{[0]}(A) = r^{[n]}(A) = 1$.
As mentioned above, we need to exclude MPS tensors for which the Schmidt rank between sites $i$ and $i+1$ is not equal to $\chi_{i+1}$, i.e., where any of the left or right density matrices are not positive definite.
\begin{definition}[\textbf{Full-rank MPS tensors (\cite{haegeman_geometry_2014} Sec III.B Definition 10)}]
    The subset $\mathcal{A}_\chi \subset \mathbb{A}_\chi$ of full-rank MPS tensors contains all tensors characterized by having positive-definite virtual left and right density matrices $l^{[i]}(A)$ and $r^{[i]}(A)$, at every site $i$:
    \begin{equation}
    \mathcal{A}_\chi = \{A \in \mathbb{A}_\chi \,|\, l^{[i]}(A) > 0, \,\, r^{[i]}(A) > 0, \forall i\}.
    \end{equation}
\end{definition} 

\vspace{2mm}
We can now define the set of projective 
full-rank MPS
\begin{equation}
    \label{eq:projectiv-manifold}
    \tilde{M} := \tilde{\psi}_{\chi}(\mathcal{A}_\chi)
\end{equation}
to represent physical states within the projective Hilbert space that are described by full-rank MPS tensors.
Having removed the self intersections, this variational set has the structure of a Riemannian manifold.
\vspace{2mm}
\begin{lemma}[\textbf{Projective MPS manifold (\cite{haegeman_geometry_2014} Theorem 15)}]\label{lem:projective_manifold}
    The set $\tilde{M}$, defined as the image of the projective MPS mapping $\tilde{\psi}_\chi$ over the full-rank MPS tensors $\mathcal{A}_\chi$, forms a differentiable manifold within the projective Hilbert space $\projH$, known as the projective MPS manifold.
\end{lemma}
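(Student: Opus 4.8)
The plan is to realize $\tM$ as the quotient of the full-rank tensor space $\mathcal{A}_\chi$ by its natural gauge-group action and then invoke the quotient manifold theorem. First I would identify the gauge group $\GG = \GL(1,\cc)\times\prod_{i=2}^{n}\GL(\chi_i,\cc)$, which acts on a tuple $A=(A^{[1]},\ldots,A^{[n]})$ by the bond transformations $A^{[i]}\mapsto G_i^{-1}A^{[i]}G_{i+1}$ (with the boundary convention $G_1=G_{n+1}=1$) together with an overall rescaling coming from the $\GL(1,\cc)$ factor. A short telescoping computation shows that every such element leaves the ray $\tilde{\psi}_\chi(A)=[\psi_\chi(A)]$ unchanged, so the orbits of $\GG$ are contained in the fibers of $\tilde{\psi}_\chi$; the fundamental theorem of full-rank (injective) MPS, together with the projective identification, gives the converse, so that the fibers are exactly the $\GG$-orbits.

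The crucial step is to prove that this action is \emph{free} on $\mathcal{A}_\chi$, and this is precisely where the full-rank restriction enters and where the self-intersections flagged in the footnote are eliminated. Suppose $(\lambda,G_2,\ldots,G_n)$ fixes a tuple $A\in\mathcal{A}_\chi$. Reading the fixed-point relation $A^{[i]}G_{i+1}=G_iA^{[i]}$ from the boundary inward and contracting against the virtual density matrices $l^{[i]}(A)$ and $r^{[i]}(A)$, which are positive definite by the definition of $\mathcal{A}_\chi$, forces each $A^{[i]}$ to have full rank in the relevant virtual index and hence each $G_i$ to be the identity; the residual scalar $\lambda$ is then pinned to $1$ because it must fix the tuple itself. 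Thus the stabilizer is trivial. I would also verify that the action is \emph{proper}, which follows from continuity of the action map and the closedness of the gauge orbits inside $\mathcal{A}_\chi$.

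With freeness and properness in hand, the quotient manifold theorem guarantees that $\mathcal{A}_\chi/\GG$ carries a unique smooth structure for which the projection $\mathcal{A}_\chi\to\mathcal{A}_\chi/\GG$ is a submersion; explicit local charts are supplied by gauge-fixing conditions such as the left/right canonical form obtained from the singular value decomposition. Since $\tilde{\psi}_\chi$ is constant along $\GG$-orbits and separates distinct orbits, it descends to a smooth bijection $\mathcal{A}_\chi/\GG\to\tM$ that is an injective immersion into $\projH$. Therefore $\tM$ inherits the structure of a differentiable manifold sitting inside the projective Hilbert space, which is the assertion of the lemma.

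I expect the freeness step to be the main obstacle: rigorously ruling out every nontrivial stabilizer requires careful bookkeeping of the inductive fixed-point relations and the positive-definiteness of $l^{[i]}$ and $r^{[i]}$, and it is exactly the mechanism that excludes the deficient-Schmidt-rank points responsible for the self-intersections. By contrast, properness and the descent to a smooth immersion are comparatively routine once freeness is established.
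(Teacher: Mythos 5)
The first thing to note is that the paper contains no proof of this statement at all: Lemma~\ref{lem:projective_manifold} is imported verbatim as Theorem~15 of Ref.~\cite{haegeman_geometry_2014}, and the surrounding text (the full-rank definition and the footnote on self-intersections) only motivates why the restriction to $\mathcal{A}_\chi$ is necessary. So the comparison is really against the cited reference, and your proposal reconstructs essentially that reference's own argument: realize $\tM$ as the quotient of $\mathcal{A}_\chi$ by the structure group $\GG = \GL(1,\cc)\times\prod_{i=2}^{n}\GL(\chi_i,\cc)$, identify the fibers of $\tilde{\psi}_\chi$ with gauge orbits via the fundamental theorem for full-rank open-boundary MPS, and pass to the quotient. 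Your freeness argument --- trivial stabilizer forced by the positive definiteness of $l^{[i]}(A)$ and $r^{[i]}(A)$ --- is the correct mechanism and is indeed exactly where the full-rank restriction enters.

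Two of your steps are genuinely gapped, however, and they are precisely the ones you dismiss as ``comparatively routine.'' First, properness: continuity of the action plus closedness of the orbits does \emph{not} imply properness for a non-compact group such as $\prod_i\GL(\chi_i,\cc)$; the hyperbolic $\mathbb{Z}$-action $(x,y)\mapsto(2x,y/2)$ on $\rr^2\setminus\{0\}$ is free with closed orbits, yet is not proper and has a non-Hausdorff quotient, which is exactly the failure mode the quotient manifold theorem must exclude. The correct properness argument again leans on full rank: if $A_k\to A$ and $g_k\cdot A_k\to B$ with $A,B\in\mathcal{A}_\chi$, each block $G_{k,i}$ and its inverse can be recovered by contracting with pseudo-inverses of the full-column-rank (resp.\ full-row-rank) partial products of the tensors, so both $G_{k,i}$ and $G_{k,i}^{-1}$ converge and hence $g_k$ converges in $\GG$. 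Second, the immersion step: a smooth bijection $\mathcal{A}_\chi/\GG\to\tM$ transports a smooth structure to $\tM$ as an abstract set, but it does not make the inclusion $\tM\hookrightarrow\projH$ an immersion --- injectivity of a smooth map says nothing about injectivity of its differential. For that you must show that $\ker d\psi_\chi$ at each $A\in\mathcal{A}_\chi$ coincides with the tangent space to the $\GG$-orbit, i.e.\ that $d\tilde{\psi}_\chi$ has constant corank $\dim\GG$; this tangent-space computation is the technical core of Theorem~15 in Ref.~\cite{haegeman_geometry_2014} and is comparable in weight to freeness. It also cannot be skipped for the purposes of this paper, since the later results (the induced Fubini--Study metric on $\rmT_{[\psi]}\tM\subset\rmT_{[\psi]}\projH$, and the lifting of geodesics from $\tM$ into $\projH$ used to bound the infidelity) rely on $\tM$ being an immersed submanifold, not merely a set carrying some manifold structure.
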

Note that in Ref.~\cite{haegeman_geometry_2014}, it is demonstrated that $\tilde{M}$ is actually a \emph{K\"ahler manifold} when one endows it with the \emph{complex-valued} Fubini-Study metric, which is shown to be a Hermitian metric.
This introduces additional structure---studied in complex geometry---to the manifold.
Since we only rely on the variational set being a well-defined Riemannian manifold, we ignore the additional structure here and only define the \emph{real-valued} Fubini-Study metric in alignment with Def.~\ref{def:fs-metric} by inducing the Fubini-Study metric from the embedding space $\projH$.
The projective MPS manifold is a submanifold $\tilde{M} \subset \projH$ of the projective Hilbert space, which is equipped with the Fubini-Study metric defined in Def.~\ref{def:fs-metric}.
For each point $[\psi] \in \tilde{M}$, the associated tangent space $\rmT_{[\psi]} \tilde{M}$ is a subspace of $\rmT_{[\psi]} \projH$.
As a result, one can define the \emph{induced} Fubini-Study metric $\tilde{g}$ on $\tilde{M}$ such that for any $[\psi] \in \tilde{M}$ and any $\ket{\phi_1}, \ket{\phi_2} \in \rmT_{[\psi]} \tilde{M}$, one has
\begin{equation}\label{eq:induced-metric}
    \braket{\phi_1}{\phi_2}_{\tilde{g}} = \braket{\phi_1}{\phi_2}_g,
\end{equation}
as done in Sec.~IIIE of Ref.~\cite{haegeman_geometry_2014}.

\vspace{2mm}
\subsection{Maximum likelihood estimation on manifold}\label{sec:mle-manifold}

In this subsection, we review and provide intuitions for existing results on the asymptotic normality of 
\emph{maximum likelihood estimation} (MLE) on a manifold~\cite{hajriMaximumLikelihoodEstimators2017} in Lemma~\ref{lem:asymptotic-normality}. More concretely, \refcite{hajriMaximumLikelihoodEstimators2017} defines an appropriate notion of asymptotic normality of the distance between the estimator and the target on a manifold. Our main result of Theorem~\ref{thm:mle-manifold-normality} follows directly by applying Lemma~\ref{lem:asymptotic-normality} to our analysis of MLE on the manifold of projective MPS, as defined in \appref{sec:manifold}.

Let $(M, g)$ be a Riemannian manifold of dimension $r:=\dim(M)$ with a Riemannian connection $\nabla$ and let $\mathcal{P}_{\theta}$ be a family of distributions parametrized by elements $\theta \in M$. Let $\mathcal{S}$ represent the sample space of the distributions, i.e., $\mathcal{P}_{\theta}: \mathcal{S} \rightarrow [0, 1]$.
Our goal is to identify the target point $\theta_0 \in M$ on the manifold.
To achieve this, we assume access to a \emph{dataset}, which is formally defined as a random variable  $\estn{D}$ consisting of $N$ samples drawn i.i.d.\ from the target distribution $\mathcal{P}_{\theta_0}$.
Given a realization of the dataset $D=(x_i \sim \mathcal{P}_{\theta_0})_{i=1}^{N}$, MLE aims to identify the element on the manifold $\theta^{D}$ for which the associated distribution $\mathcal{P}_{\theta^{D}}$ is \emph{most likely} to have generated the dataset $D$.
This is achieved by minimizing the \emph{negative log-likelihood} (NLL).
More specifically, we define the NLL for the dataset instance $D$, denoted by $\mathcal{L}^{D}_{\text{NLL}}$, as a map from the manifold $M$ to the real numbers,
\begin{align}\label{eq:loss_manifold}
    \mathcal{L}^{D}_{\text{NLL}}: M~ &\longrightarrow \rr, \\
    \theta &\longmapsto  -\frac{1}{N}\sum_{i=1}^{N} \log \mathcal{P}_{\theta}[x_i], 
\end{align}
which measures how likely the observed samples are under the distribution $\mathcal{P}_{\theta}$.
By minimizing the NLL for the dataset $D$ we find the estimator
\begin{equation}\label{eq:min_nll}
    \theta^{D} = \text{argmin}_{\theta\in M} \mathcal{L}^{D}_{\text{NLL}} (\theta).
\end{equation}
For clarity, we use $\mathcal{L}^{D}_{\text{NLL}}(\theta)$ and $\theta^{D}$ to denote the NLL and MLE estimator \emph{for the specific dataset $D$ consisting of $N$ samples}, whereas we use the notation $\est{\theta}$ for the random variable corresponding to the minimizer of NLL associated to the $N$-sample dataset $\estn{D}$ drawn randomly.
In Euclidean space, minimization can be achieved by finding points where the gradient of the loss function vanishes. On a manifold $M$, we can define an estimating form $\omega$, which plays the role of the gradient.

\vspace{2mm}
\begin{definition}[\textbf{Estimating form (\cite{hajriMaximumLikelihoodEstimators2017} Definition 1)}]
    \label{def:est-form}
    A map
    \begin{equation}
        \omega: \mathcal{S} \times M \rightarrow \mathrm{T}M,
    \end{equation}
    is an estimating form if $\mathbb{E}_{x \sim \mathcal{P}_{\theta}}[\omega(x, \theta)] = 0$ for all $\theta \in M$. $\mathrm{T}M$ is the tangent bundle, which is a collection of tangent spaces over all points in the manifold. 
\end{definition}
\vspace{2mm}
A more concrete expression of the estimating form follows from the covariant derivative of the NLL.
\vspace{2mm}
\begin{fact}[\textbf{Differentiable estimating form}]\label{fact:estimating-form}
    The gradient (in local coordinates) of the NLL is a differentiable estimating form
\begin{equation}
    \omega(x, \theta) = - \big(\rmD_{e_1}\log(\mathcal{P}_{\theta}[x]), \ldots, \rmD_{e_r}\log(\mathcal{P}_{\theta}[x])\big)^\top,
\end{equation}
where $\lbrace e_a \rbrace_{a=1}^r$ is a basis of the tangent space $\rmT_\theta M$, and $\rmD_X$ denotes the directional derivative with respect to  a tangent vector $X \in \rmT_\theta M$ as introduced in Def.~\ref{def:directional-derivative}.
\end{fact}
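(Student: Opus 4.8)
The plan is to verify the two defining conditions of an estimating form from Def.~\ref{def:est-form}: that $\omega(x,\theta)$ is differentiable, and that $\mathbb{E}_{x\sim\mathcal{P}_\theta}[\omega(x,\theta)] = 0$ for every $\theta \in M$. Differentiability is immediate from the setup: each component $\rmD_{e_a}\log(\mathcal{P}_\theta[x])$ is a directional derivative in the sense of Def.~\ref{def:directional-derivative}, which is well-defined and depends smoothly on $\theta$ whenever the family $\{\mathcal{P}_\theta\}$ is smooth in the manifold parameter and strictly positive on its support---the standard regularity assumption underlying MLE. The real content is therefore the zero-mean property, which is the classical fact that the \emph{score} has vanishing expectation.

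First I would reduce the vector identity to its components: it suffices to show $\mathbb{E}_{x\sim\mathcal{P}_\theta}[\rmD_{e_a}\log(\mathcal{P}_\theta[x])] = 0$ for each basis vector $e_a$ of $\rmT_\theta M$. Fixing $a$ and applying the chain rule for the directional derivative, I would write
\[
\rmD_{e_a}\log(\mathcal{P}_\theta[x]) = \frac{\rmD_{e_a}\mathcal{P}_\theta[x]}{\mathcal{P}_\theta[x]},
\]
which is legitimate because $\mathcal{P}_\theta[x]>0$ on the support. Expressing the expectation as an integral over the sample space $\mathcal{S}$ against the density $\mathcal{P}_\theta$, the density in the denominator cancels the integration weight, leaving
\[
\mathbb{E}_{x\sim\mathcal{P}_\theta}\big[\rmD_{e_a}\log(\mathcal{P}_\theta[x])\big] = \int_{\mathcal{S}} \rmD_{e_a}\mathcal{P}_\theta[x]\,\dd x.
\]

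The concluding step is to interchange the directional derivative $\rmD_{e_a}$ with the integral over $\mathcal{S}$ and invoke normalization: since $\int_{\mathcal{S}}\mathcal{P}_\theta[x]\,\dd x = 1$ for all $\theta$ and the constant $1$ has vanishing derivative,
\[
\int_{\mathcal{S}} \rmD_{e_a}\mathcal{P}_\theta[x]\,\dd x = \rmD_{e_a}\!\int_{\mathcal{S}}\mathcal{P}_\theta[x]\,\dd x = \rmD_{e_a}(1) = 0,
\]
so each component, and hence the full vector $\omega(x,\theta)$, has zero mean. I expect the main obstacle to be exactly this interchange of differentiation and integration, which in full generality is not automatic and requires a dominated-convergence (Leibniz-rule) argument---an integrable envelope dominating $\lvert\rmD_{e_a}\mathcal{P}_\theta[x]\rvert$ uniformly on a neighborhood of $\theta$. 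In our application, however, this obstacle disappears: the sample space $\mathcal{S}$ consists of finitely many outcomes (unitary--bit-string pairs), so the integral is a finite sum, $\mathcal{P}_\theta[x] = \lvert\matrixel{b}{U}{\psi}\rvert^2$ is smooth in the MPS coordinates, and interchanging a finite sum with a derivative is always valid, rendering the regularity hypothesis automatic.
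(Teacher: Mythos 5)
Your proposal is correct and follows essentially the same route as the paper's proof: apply the chain rule so the density cancels the score's denominator, interchange the directional derivative with the (finite) sum over the sample space, and invoke normalization to conclude the mean vanishes. Your extra remarks on positivity of $\mathcal{P}_\theta[x]$ and the finiteness of $\mathcal{S}$ mirror the paper's handling of zero-probability terms and its finite-sum setting, so there is no substantive difference.
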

\begin{proof}
To show $\mathbb{E}_{x\sim \mathcal{P}_{\theta}}[\omega(x, \theta)] = 0$, we write down its definition
\begin{align}
    \mathbb{E}_{x\sim \mathcal{P}_{\theta}}[\omega(x, \theta)] & = - \frac{1}{\abs{\mathcal{S}}} \sum_{x \in  \mathcal{S}} \mathcal{P}_{\theta}[x] \Big(\rmD_{e_1} \log (\mathcal{P}_{\theta}[x]), \ldots, \rmD_{e_r} \log (\mathcal{P}_{\theta}[x]) \Big)^\top, \\ 
    & = - \frac{1}{\abs{\mathcal{S}}} \sum_{x \in  \mathcal{S}} \Big(\rmD_{e_1} \mathcal{P}_{\theta}[x], \ldots, \rmD_{e_r}\mathcal{P}_{\theta}[x] \Big)^\top,  \nonumber\\ 
    & = - \Big(\rmD_{e_1} \underbrace{ \frac{1}{\abs{\mathcal{S}}} \sum_{x \in  \mathcal{S}}  \mathcal{P}_{\theta}[x]}_{=1}, \ldots, \rmD_{e_r} \frac{1}{\abs{\mathcal{S}}} \sum_{x \in  \mathcal{S}} \mathcal{P}_{\theta}[x] \Big)^\top, \nonumber\\     
    & = 0. \nonumber
\end{align}
In this definition, the directional derivative is well defined as long as $\mathcal{P}_{\theta}$ is differentiable, while terms with probability $\mathcal{P}_{\theta}[x] = 0$ are to be omitted in the sum, by the limit $y\log y \rightarrow 0$ as $y \rightarrow 0$.
Consequently, the set of estimating forms $\{\omega(x, \theta), \theta \in M\}$ are differentiable vector fields.
\end{proof}
\vspace{2mm}
Additionally, the asymptotic normality of an MLE estimator in Euclidean space requires the Hessian of the loss function to be well behaved around the target parameters. In the manifold setting, we need a similar condition and define the Hessian matrix below.  
\vspace{2mm}
\begin{definition}[\textbf{Hessian matrix~\cite{hajriMaximumLikelihoodEstimators2017}}]
    \label{def:hessian}
    Let $\omega$ be a differentiable estimating form. Choose a basis set $\{e_a\}_{a=1}^{r} \in T_{\theta}M$ in the tangent space of the manifold at $\theta$. The Hessian matrix, in this basis, is defined through the Riemannian connection of $\omega$, and is given by its matrix elements:
    \begin{equation}
        \mathrm{H}_{a,b}(x, \theta) =  \braket{\nabla_{e_a}\omega(x, \theta)}{ e_b}_g.
    \end{equation}
    In subsequent discussions, we often consider the expectation of the Hessian matrix elements over a distribution $\mathcal{P}_{\theta}$ given by
    \begin{equation}
        \mathrm{H}_{a,b}(\theta) = \mathbb{E}_{x \sim \mathcal{P}_{\theta}}\big[ \braket{\nabla_{e_a}\omega(x, \theta)}{ e_b}_g\big].
    \end{equation}
\end{definition}
\vspace{2mm}
Having reviewed the relevant concepts, we are now prepared to state the theorem concerning asymptotic normality of MLE on a manifold following Ref.~\cite{hajriMaximumLikelihoodEstimators2017}.
Let us adopt the following notation:
given an estimate $\theta \in M$, we denote by $\gamma_\theta(t)$ the shortest connecting geodesic (see Def.~\ref{def:geodesic}) such that $\gamma(0) = \theta_0$ and  $\gamma(1) = \theta$,
i.e., $\gamma_\theta$ is the shortest path connecting the MLE estimate to the target.
Moreover, let $\{e_a\}_{a=1}^{r}$ be a basis of the target tangent space $\rmT_{\theta_0} M$ and given a shortest connecting curve $\gamma_\theta$, denote the parallel transport (see Def.~\ref{def:transport}) of the basis vectors by $\{e_a(t)\}_{a=1}^{r}$, i.e., $e_a(0) = e_a$ for all $a$.
This allows us to decompose tangent vectors with respect to  well-defined basis vectors along the \emph{entire} curve $\gamma_\theta$.
\vspace{2mm}
\begin{lemma}[\textbf{Asymptotic normality (\cite{hajriMaximumLikelihoodEstimators2017} Theorem 2)}]
    \label{lem:asymptotic-normality}
    Let $\omega$ be the differentiable estimating form as defined in Def.~\ref{def:est-form} and $\mathrm{H}$ be the Hessian matrix defined in Def.~\ref{def:hessian}. Let $(x_i)_{i=1}^{N}$ be a sequence of i.i.d.\ samples from $\mathcal{P}_{\theta_0}$, where $\theta_0 \in M$ is the target element.
    Denote by $\est{\theta}$ the MLE estimator, which is the minimizer of the NLL, and let $\gamma$ be the connecting geodesic between $\gamma(0)=\theta_0$ and $\gamma(1)=\est{\theta}$.
    Furthermore, consider the following conditions:
    \begin{enumerate}
        \item The estimator $\est{\theta}$ satisfies $\sum_{i=1}^N \omega(x_i, \est{\theta}) = 0$ (first-order condition) for all $N$ and converges in distribution to the target $\hat{\theta}^{(N)} \overset{\mathrm{p}.}{\longrightarrow} \theta_0$ as $N \rightarrow \infty$.
        \item The expected Hessian matrix $\mathrm{H}_{a,b}(\theta_0)$ at the target point, defined in a basis $\{e_a\}_{a=1}^{r}$, is finite and invertible.
        \item The Hessian $\mathrm{H}(x, \theta)$ in a neighborhood of $\theta_0$ converges uniformly to $\mathrm{H}(x, \theta_0)$ as $\theta \rightarrow \theta_0$ in the following sense:
        \begin{equation}
            \mathbb{E}_{x \sim \mathcal{P}_{\theta_0}} \left[
            \underset{\theta \in B_{\theta_0}(\delta)}{\sup} \Big\vert
                 \braket{\nabla_{e_a(1)} \omega(x, \theta)}{ e_b(1)}
                -  \braket{\nabla_{e_a}\omega(x, \theta_0)}{e_b}
            \Big\vert \right] \overset{\mathrm{p.}}{\longrightarrow} 0
        \end{equation}
        as $\delta \rightarrow 0$, where $B_{\theta_0}(\delta)$ is the set of all points with distance at most $\delta$ from $\theta_0$ (the ``$\delta$-Ball'' around $\theta_0$).
    \end{enumerate}
    When these conditions are satisfied the components $\Delta_1, \ldots, \Delta_r$ of the velocity vector $\dot{\gamma}(t) = \sum_{a}\Delta_a e_a(t)$ converge in distribution to a multivariate normal distribution
    \begin{equation}
        \sqrt{N}(\Delta_1, \ldots, \Delta_r)^\top \overset{\mathrm{d.}}{\longrightarrow} \mathcal{N}(0, \Sigma),
    \end{equation}
    where $\Sigma = (\mathrm{H}^\dagger)^{-1}\Gamma \mathrm{H}$ and the Gram matrix $\Gamma_{a,b} = \mathbb{E}_{x \sim \mathcal{P}_{\theta_0}}\big[\braket{\omega(x, \theta_0)}{ e_a}\braket{\omega(x, \theta_0)}{e_b}\big]$.
\end{lemma}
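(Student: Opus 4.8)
The plan is to establish this as the Riemannian counterpart of the classical asymptotic normality of $M$-estimators, obtained through a first-order (delta-method) expansion of the estimating form combined with the law of large numbers and the central limit theorem. Since the estimator $\est{\theta}$ and the target $\theta_0$ are points on $M$ and cannot simply be subtracted, the natural bookkeeping device is the shortest connecting geodesic $\gamma$ with $\gamma(0)=\theta_0$ and $\gamma(1)=\est{\theta}$: its velocity vector $\dot{\gamma}=\sum_a \Delta_a e_a(t)$, written in the parallel-transported basis $\{e_a(t)\}$, has components $\Delta_a$ that are constant in $t$ by Fact~\ref{fact:constant-velocity}, and these are precisely the quantities whose joint normality we must prove. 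The whole argument therefore reduces to extracting the large-$N$ behavior of the $\Delta_a$ from the first-order condition $\sum_{i=1}^N \omega(x_i,\est{\theta})=0$.

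First I would project the first-order condition onto the transported basis at $t=1$ and Taylor-expand along the geodesic. For a fixed sample $x_i$ and index $b$, consider the scalar function $t\mapsto \braket{\omega(x_i,\gamma(t))}{e_b(t)}_g$. Because $e_b(t)$ is parallel transported, $\nabla_{\dot{\gamma}}e_b=0$, so metric compatibility (Property 4 of Def.~\ref{def:rie-connection}) gives $\frac{\dd}{\dd t}\braket{\omega}{e_b}_g=\braket{\nabla_{\dot{\gamma}}\omega}{e_b}_g$. Evaluating at $t=0$ and using $\nabla_{\dot{\gamma}}\omega=\sum_a \Delta_a \nabla_{e_a}\omega$ identifies the linear coefficient with the Hessian of Def.~\ref{def:hessian}, namely $\sum_a \Delta_a\,\mathrm{H}_{a,b}(x_i,\theta_0)$. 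The expansion of the vanishing first-order condition then reads, to leading order,
\begin{equation}
0 = \frac{1}{N}\sum_{i=1}^N \braket{\omega(x_i,\theta_0)}{e_b}_g + \sum_a \Delta_a\,\frac{1}{N}\sum_{i=1}^N \mathrm{H}_{a,b}(x_i,\theta_0) + (\text{remainder}),
\end{equation}
which, in vector form and after rescaling, solves to
\begin{equation}
\sqrt{N}\,(\Delta_1,\ldots,\Delta_r)^\top = -\Big[\frac{1}{N}\sum_{i=1}^N \mathrm{H}(x_i,\theta_0)\Big]^{-1} s_N + o(1),
\end{equation}
where $s_N$ is the vector with components $(s_N)_b=\frac{1}{\sqrt{N}}\sum_{i=1}^N \braket{\omega(x_i,\theta_0)}{e_b}_g$.

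Next I would invoke the two limit theorems. By the law of large numbers, $\frac{1}{N}\sum_i \mathrm{H}(x_i,\theta_0)\to \mathrm{H}(\theta_0)$, which is finite and invertible by Condition~2, so its inverse converges as well. By the estimating-form property $\mathbb{E}_{x\sim\mathcal{P}_{\theta_0}}[\omega(x,\theta_0)]=0$ (Def.~\ref{def:est-form}), each projected score $\braket{\omega(x,\theta_0)}{e_b}_g$ has mean zero, so the multivariate central limit theorem gives $s_N\overset{\mathrm{d.}}{\longrightarrow}\mathcal{N}(0,\Gamma)$ with the Gram matrix $\Gamma$ as covariance. Slutsky's theorem then combines the convergent random matrix inverse with the asymptotically Gaussian score vector to yield $\sqrt{N}\,(\Delta_1,\ldots,\Delta_r)^\top\overset{\mathrm{d.}}{\longrightarrow}\mathcal{N}(0,\Sigma)$, with $\Sigma$ the stated sandwich covariance assembled from the inverse expected Hessian and the Gram matrix $\Gamma$.

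The main obstacle is controlling the remainder of the Taylor expansion uniformly over the shrinking neighborhood in which $\est{\theta}$ eventually lies. This is exactly the role of Condition~3: the uniform convergence of the local Hessian $\braket{\nabla_{e_a(1)}\omega(x,\theta)}{e_b(1)}$ to its value at $\theta_0$ guarantees that, once consistency ($\est{\theta}\overset{\mathrm{p.}}{\longrightarrow}\theta_0$, Condition~1) places the estimator inside an arbitrarily small $\delta$-ball, the empirical Hessian evaluated along $\gamma$ differs from its value at $\theta_0$ by a negligible amount, so the leading linear term dominates the remainder. A secondary subtlety particular to the curved setting is that higher covariant derivatives of $\omega$ generically produce curvature contributions; these, however, enter only the remainder and are absorbed by the same uniform-control hypothesis, so no explicit curvature computation is required. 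Finally, the consistency assumption itself (Condition~1) is taken as given rather than derived, which is where a fully self-contained proof would need additional identifiability and compactness arguments.
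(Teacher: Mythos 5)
Your proposal cannot be compared against an internal proof because the paper does not prove this lemma at all: it is imported verbatim from Ref.~\cite{hajriMaximumLikelihoodEstimators2017} (Theorem~2), and the appendix only restates it with the hypotheses adapted to the paper's notation. Judged on its own merits, your sketch is the correct and standard route --- indeed essentially the one used in the cited reference: project the first-order condition onto the parallel-transported frame, use metric compatibility together with $\nabla_{\dot{\gamma}}e_b=0$ to identify the $t$-derivative of $\braket{\omega(x,\gamma(t))}{e_b(t)}_g$ with the Hessian pairing, expand along the geodesic, and close with the law of large numbers (Condition~2), the CLT applied to the mean-zero scores (the estimating-form property), and Slutsky, with Condition~3 plus consistency controlling the remainder uniformly once $\est{\theta}$ lies in a small ball around $\theta_0$. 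You also correctly observe that writing the remainder in integral form along the geodesic avoids any explicit curvature computation, which is exactly why Condition~3 is phrased in terms of $e_a(1), e_b(1)$. One point deserves flagging rather than the gloss ``the stated sandwich covariance'': solving $0 = s_N + \sqrt{N}\,\hat{\mathrm{H}}_N^\top \Delta^{(N)} + o(1)$ gives the limit covariance $(\mathrm{H}^\top)^{-1}\Gamma\,\mathrm{H}^{-1}$, whereas the lemma as printed states $\Sigma = (\mathrm{H}^\dagger)^{-1}\Gamma\,\mathrm{H}$; the missing inverse on the right factor appears to be a typo in the paper, and your derivation actually produces the correct form, so you should state it explicitly rather than defer to the printed expression.
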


\vspace{2mm}
\subsection{Main result: A concentration inequality for infidelity}\label{sec:infidelity-bound}
In this subsection, we prove our main result that the infidelity between an MLE estimator and the target state can be probabilistically upper bounded when the number of samples $N$ is asymptotically large. Theorem~\ref{thm:mle-manifold-normality} states that such an upper bound depends on the properties of the target state and converges to zero with a rate depending on the number of samples as $1/\sqrt{N}$. 
A central component of our proof relies on the asymptotic normality of the MLE estimator on a manifold discussed in Lemma~\ref{lem:asymptotic-normality}.
Before rigorously proving our theorem, we first integrate the concepts from previous sections and restate the normality condition in the context of the projective MPS manifold $\tilde{M}$, as discussed in Appendix~\ref{sec:manifold} and \ref{sec:mle-manifold}.

Consider the target state $[{\phi}] \in \tilde{M}$, with a normalized representative $\ket{\phi} \in \mathcal{H}$. Given a unitary ensemble $\mathcal{U}$, the sample space $\mathcal{S}$ consists of all unitary-bitstring pairs $\mathcal{S} = \mathcal{U}\times \{0,1\}^n$. Let $\mathcal{D}$ denote the data probability distribution corresponding to the target state $[{\phi}]$ given by
\begin{equation}\label{eq:phi_distributionD}
    \mathcal{D}[(U, b), [\phi]] = \frac{1}{\abs{\mathcal{U}}} \times \expval{U\dyad{\phi} U^\dagger}{b}.
\end{equation}
For a dataset $D=\{(U_i, b_i)\}_{i=1}^{N}$ drawn i.i.d.\ from $\mathcal{D}$, its corresponding \emph{negative log-likelihood} (NLL) is defined as
\begin{align}\label{eq:loss_manifold}
    \mathcal{L}^{D}_{\text{NLL}}: \tilde{M}~ &\longrightarrow \rr ,
    \\
    [{\psi}] &\longmapsto  -\frac{1}{N}\sum_{i=1}^N \log \abs{\matrixel{b_i}{U_i}{\psi}}^2, 
\end{align}  
Notice that by introducing a scaling factor $c > 1$, the loss associated with any vector can be arbitrarily reduced through the transformation $\ket{\psi} 
\mapsto c \ket{\psi}$. However, such rescaling does not affect physical observables and is effectively managed by constraining the analysis to the projective Hilbert space. This highlights the practical value of defining the projective MPS and its associated projective manifold. The MLE estimate is given by the minimizer of the NLL
\begin{equation}
    [{\psi}^{D}] = \text{argmin}_{[\psi]\in\tilde{M}} \mathcal{L}^{D}_{\text{NLL}}\left([\psi]\right).
\end{equation}
Using the notation established in \appref{sec:mle-manifold}, we use  $[\est{\psi}]$ to denote the random variable of the MLE estimator, which is the minimizer of the $N$-sample NLL for a dataset random variable $\estn{D}$.
Using Fact~\ref{fact:estimating-form} we find a concrete expression of the estimating form by taking covariant derivative of the NLL as follows.
\vspace{2mm}
\begin{corollary}[\textbf{Differentiable estimating form}]\label{cor:est_form_ll}
 $\omega((U, b), [\psi]) = -\big(\rmD_{e_1}\log \abs{\matrixel{b}{U}{\psi}}^2, \ldots, \rmD_{e_r}\log \abs{\matrixel{b}{U}{\psi}}^2\big)^\top$ is a differentiable estimating form. 
\end{corollary}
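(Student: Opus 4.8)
The plan is to obtain this corollary as a direct specialization of Fact~\ref{fact:estimating-form} to the concrete data distribution on the projective MPS manifold $\tM$. First I would identify the relevant family of distributions: the sample space is $\mathcal{S} = \mathcal{U} \times \{0,1\}^n$, the parameter is the ray $[\psi] \in \tM$, and the distribution is $\mathcal{P}_{[\psi]}[(U,b)] = \mathcal{D}[(U,b), [\psi]]$ from \eqref{eq:phi_distributionD}, namely $\tfrac{1}{\abs{\mathcal{U}}}\,\abs{\matrixel{b}{U}{\psi}}^2$ evaluated on the normalized representative $\ket{\psi}$. Fact~\ref{fact:estimating-form} then guarantees that the gradient of the NLL, $-\big(\rmD_{e_1}\log \mathcal{P}_{[\psi]}, \ldots, \rmD_{e_r}\log \mathcal{P}_{[\psi]}\big)^\top$, is automatically a differentiable estimating form, so it remains only to rewrite this expression in the claimed form.

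The key simplifying step is to split the logarithm. Writing $\log \mathcal{P}_{[\psi]}[(U,b)] = -\log\abs{\mathcal{U}} + \log\abs{\matrixel{b}{U}{\psi}}^2$, the first term is a constant independent of $[\psi]$, so every directional derivative $\rmD_{e_a}$ annihilates it. Consequently $\rmD_{e_a}\log \mathcal{P}_{[\psi]}[(U,b)] = \rmD_{e_a}\log\abs{\matrixel{b}{U}{\psi}}^2$, and the estimating form from Fact~\ref{fact:estimating-form} collapses exactly onto $\omega((U,b),[\psi]) = -\big(\rmD_{e_1}\log\abs{\matrixel{b}{U}{\psi}}^2, \ldots, \rmD_{e_r}\log\abs{\matrixel{b}{U}{\psi}}^2\big)^\top$, which is the claim.

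For completeness I would verify the estimating-form property $\mathbb{E}_{(U,b)\sim\mathcal{D}}[\omega] = 0$ independently, which amounts to checking that $\mathcal{P}_{[\psi]}$ is a genuine probability distribution on $\mathcal{S}$ for every $[\psi] \in \tM$. This follows from normalization: for each fixed $U$ the completeness of the computational basis and unitarity give $\sum_b \abs{\matrixel{b}{U}{\psi}}^2 = \braket{\psi}{\psi} = 1$ on the normalized representative, and averaging over the $\abs{\mathcal{U}}$ unitaries preserves unit total mass. Differentiating this normalization identity along each basis direction $e_a$ and exchanging derivative with the finite sum reproduces the vanishing-expectation computation in the proof of Fact~\ref{fact:estimating-form}, with zero-probability outcomes handled by the convention $y\log y \to 0$ as $y \to 0$. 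Differentiability of $\omega$ holds because the Born probability $\abs{\matrixel{b}{U}{\psi}}^2$ is a smooth function of the state on $\tM$, the projective MPS map being smooth by Lemma~\ref{lem:projective_manifold}.

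The main obstacle is not any hard estimate but rather bookkeeping about the projective structure: I must ensure the directional derivatives are taken consistently on $\tM \subset \projH$, where tangent vectors are identified with Hilbert-space vectors orthogonal to $\ket{\psi}$ (Def.~\ref{def:tangent-hilbert-manifold}) and the normalized representative is fixed. The potential pitfall is that a naive rescaling $\ket{\psi}\mapsto c\ket{\psi}$ changes $\abs{\matrixel{b}{U}{\psi}}^2$ and hence the NLL; working on the projective manifold, where this rescaling is quotiented out, is precisely what makes $\mathcal{P}_{[\psi]}$ well defined and normalized, and thus what leaves the constant $\tfrac{1}{\abs{\mathcal{U}}}$ as the only normalization factor that survives inside the logarithm and is then killed by the derivative.
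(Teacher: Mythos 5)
Your proposal is correct and follows essentially the same route as the paper: the paper obtains Corollary~\ref{cor:est_form_ll} exactly by specializing Fact~\ref{fact:estimating-form} to the data distribution $\mathcal{D}$ of Eq.~\eqref{eq:phi_distributionD}, with the constant $1/\abs{\mathcal{U}}$ factor playing no role under the directional derivatives. Your additional verification of normalization and the remark on the projective structure merely spell out details the paper leaves implicit (and echoes its own comment that rescaling $\ket{\psi}\mapsto c\ket{\psi}$ is handled by working on the projective manifold), so there is no substantive difference.
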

\vspace{2mm}
In order to establish the essential property of asymptotic normality of the projective MPS estimator, as required to prove our main theorem, we first adapt the necessary conditions in Lemma~\ref{lem:asymptotic-normality} to our MPS manifold setting as follows.
\vspace{1mm}
\begin{proposition}[\textbf{Consistent estimator}]\label{prop:manifold-consistency}
        Let $\estn{D} =\{ (U_i, b_i)\}_{i=1}^{N}$ be a sequence of i.i.d.\ samples from $\mathcal{D}$ defined by the target state $[\phi]$ and a unitary ensemble $\mathcal{U}$, tomographically complete in the space of variational states on $\tM$. There exists a sequence of $N$-sample NLL minimizers $([\est{\psi}])_{N=1}^\infty$ such that $\sum_{i=1}^N \omega((U_i, b_i), [\est{\psi}]) = 0$ for all $N$ and the estimator converges to the target state in probability $[\est{\psi}] \overset{\mathrm{p.}}{\longrightarrow} [\phi]$.
\end{proposition}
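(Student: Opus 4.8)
The plan is to establish consistency via the classical theory of extremum (M-)estimators~\cite{neweyChapter36Large1994}, transported to the Riemannian manifold $\tilde{M}$. The argument rests on two pillars: \emph{identifiability}, i.e., that the population (infinite-sample) NLL is uniquely minimized at the target $[\phi]$, and \emph{uniform convergence} of the empirical NLL to its population limit. Once both are in hand, the standard M-estimation consistency theorem forces the minimizers to converge in probability to $[\phi]$, after which the first-order condition follows from stationarity of an interior minimizer on a differentiable manifold.

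First I would introduce the population NLL $\mathcal{L}_\infty([\psi]) := \mathbb{E}_{(U,b)\sim\mathcal{D}[\cdot,[\phi]]}[-\log\abs{\matrixel{b}{U}{\psi}}^2]$, which is the $N\to\infty$ limit of $\mathcal{L}^{D}_{\text{NLL}}$ by the law of large numbers. Using $\abs{\matrixel{b}{U}{\psi}}^2 = \abs{\mathcal{U}}\,\mathcal{D}[(U,b),[\psi]]$ together with the cross-entropy decomposition, I would rewrite it as
\begin{equation}
    \mathcal{L}_\infty([\psi]) = -\log\abs{\mathcal{U}} + H\big(\mathcal{D}[\cdot,[\phi]]\big) + D_{\mathrm{KL}}\big(\mathcal{D}[\cdot,[\phi]] \,\|\, \mathcal{D}[\cdot,[\psi]]\big),
\end{equation}
in which only the relative-entropy term depends on $[\psi]$. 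Since $D_{\mathrm{KL}}\geq 0$ with equality exactly when the two outcome distributions coincide, and the hypothesized tomographic completeness of $\mathcal{U}$ on $\tilde{M}$ guarantees that coinciding distributions force $[\psi]=[\phi]$, this identifies $[\phi]$ as the unique global minimizer of $\mathcal{L}_\infty$ on $\tilde{M}$.

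Next I would invoke a uniform law of large numbers over a suitable compact neighborhood to obtain $\sup_{[\psi]}\abs{\mathcal{L}^{\estn{D}}_{\text{NLL}}([\psi]) - \mathcal{L}_\infty([\psi])} \overset{\mathrm{p.}}{\longrightarrow} 0$; combined with the well-separated minimum supplied by uniqueness and continuity of $\mathcal{L}_\infty$, the extremum-estimator consistency theorem~\cite{neweyChapter36Large1994} then yields $[\est{\psi}] \overset{\mathrm{p.}}{\longrightarrow} [\phi]$. The first-order condition would be immediate afterward: a minimizer of the differentiable map $\mathcal{L}^{\estn{D}}_{\text{NLL}}$ in the interior of $\tilde{M}$ is a critical point, so its Riemannian gradient vanishes, and this gradient is exactly $\tfrac{1}{N}\sum_i \omega((U_i,b_i),[\est{\psi}])$ by Corollary~\ref{cor:est_form_ll}.

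The hard part will be the non-compactness of $\tilde{M}$: the full-rank MPS manifold is open, excluding the rank-deficient boundary where the virtual density matrices $l^{[i]},r^{[i]}$ degenerate, and the integrand is singular wherever $\abs{\matrixel{b}{U}{\psi}}^2\to 0$, making the summands unbounded. Both the uniform LLN and the well-separatedness argument presuppose a compact domain on which the minimizer exists, so I would need to show that the empirical minimizers lie within a fixed compact subset of $\tilde{M}$ with probability tending to one—for instance by arguing that $\mathcal{L}_\infty$ stays bounded away from its minimum as one approaches the boundary (where the representable states are strictly lower-rank and fit the target distribution strictly worse), precluding escape for large $N$—and to control the measure of the singular events so that the law of large numbers still applies. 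This boundary and integrability analysis, rather than the identifiability step, is where the genuine technical effort concentrates.
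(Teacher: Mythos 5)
Your proposal is correct in outline, and its core coincides with the paper's proof — but it is substantially more ambitious than what the paper actually does. The paper's entire argument is your identifiability step: it writes down the infinite-sample NLL, invokes Gibbs' inequality (equivalently, the non-negativity of the relative entropy — exactly your cross-entropy decomposition) to conclude that the population minimum is attained precisely when $\lvert\matrixel{b}{U}{\hat{\psi}}\rvert^2 = \lvert\matrixel{b}{U}{\phi}\rvert^2$ for all $(U,b)$, and then uses tomographic completeness of $\mathcal{U}$ to conclude $[\hat{\psi}]=[\phi]$. That is where the paper stops: the passage from population-level identifiability to convergence in probability of the finite-$N$ minimizers, the existence of a sequence of minimizers satisfying the first-order condition $\sum_{i}\omega((U_i,b_i),[\est{\psi}])=0$, and any uniform law of large numbers are not addressed there — they are implicitly taken for granted. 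Your additional scaffolding (uniform LLN over a compact set, a well-separated population minimum, extremum-estimator consistency in the style of Newey--McFadden, and stationarity of an interior minimizer on the manifold yielding the first-order condition via Corollary~\ref{cor:est_form_ll}) is precisely what would be needed to turn the paper's sketch into a complete proof of the proposition as stated. Moreover, your diagnosis of where the genuine difficulty lies — non-compactness of the full-rank MPS manifold $\tM$, unboundedness of $-\log\lvert\matrixel{b}{U}{\psi}\rvert^2$ near zero-probability outcomes, and the need to confine the empirical minimizers to a fixed compact subset with probability tending to one — identifies exactly the issues that the paper's argument leaves open as well. In short: the paper's short argument buys only uniqueness of the population minimizer; your route, if the compactness and singularity analysis were carried out, would buy the actual consistency statement claimed in the proposition.
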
    
\begin{proof}
In the infinite-sample limit, the NLL is given by
    \begin{equation}
        \lim_{N \rightarrow \infty} \mathcal{L}^{}_{\text{NLL}}([\psi]) = - \frac{1}{\abs{\mathcal{U}}}\sum_{(U, b) \in \mathcal{U} \times \{0, 1\}^n} \lvert \matrixel{b}{U}{\phi} \rvert^2 \log \Big(\lvert \matrixel{b}{U}{\psi} \rvert^2 / \abs{\mathcal{U}}\Big),
    \end{equation}
    where $\ket{\phi}$ and $\ket{\psi}$ are normalized representations of their rays.
    The NLL loss function achieves the minimum denoted by $[\hat{\psi}]$ \footnote{The minimum is given by the Shannon entropy of the target distribution $\mathcal{D}$, calculated as $-  \sum_{(U, b)}\mathcal{D}[(U, b), [\phi]] \log \mathcal{D}[(U, b), [\phi]]$. This follows from Gibbs' inequality, which, for two discrete probability distributions $\{p_1, \cdots, p_N \}$ and $\{q_1, \cdots, q_N \}$, states that $-\sum_{i=1}^Np_i \log p_i \leq -\sum_{i=1}^N p_i \log q_i$. The equality is achieved when two distributions are the same $p_i = q_i$.} if the probabilities agree for all measurement outcomes
    \begin{equation}\label{eq:matrix_ele_U}
         \lvert \matrixel{b}{U}{\hat{\psi}} \rvert^2 = \lvert \matrixel{b}{U}{\phi} \rvert^2, \quad \forall (b, U).
    \end{equation}
    Since the unitary ensemble $\mathcal{U}$ is tomographically complete, \equref{eq:matrix_ele_U} implies $\ket{\hat{\psi}} = c \ket{\phi}$ up to a global phase $c$, so that $[\hat{\psi}] = [\phi]$.   
\end{proof}

\vspace{2mm}
\begin{assumption}\label{assume:hessian-invertible}
    The expected Hessian at the target state $[\phi]$ is invertible and finite.
\end{assumption}
\begin{assumption}\label{assume:hessian-convergence}
        The Hessian at $[\est{\psi}]$ in a small neighborhood near the target converges uniformly to the Hessian at $[\phi]$ in expectation with respect to  $\mathcal{D}$.
\end{assumption}  

\vspace{2mm}
Having established or assumed the conditions outlined in Lemma~\ref{lem:asymptotic-normality}, we are now prepared to formally apply the normality result to the MPS-manifold setting. This application is detailed in Corollary~\ref{cor:asymptotic-normality-mps}, which extends the lemma to our specific context.

\vspace{2mm}
\begin{corollary}[\textbf{Asymptotic normality on the projective MPS manifold}]
    \label{cor:asymptotic-normality-mps}
    Let $\mathcal{D}$ be the distribution corresponding to the target state $[\phi]$ (defined in \equref{eq:phi_distributionD}), and $\estn{D} =\{ (U_i, b_i)\}_{i=1}^{N}$ be a sequence of i.i.d.\ samples from $\mathcal{D}$. Let $[\est{\psi}]$ be the minimizer of the negative log-likelihood function and $\omega$ its assoicated estimating form as defined in Corollary~\ref{cor:est_form_ll}. Define $\gamma$ as the shortest connecting geodesic such that $\gamma(0) = [\phi]$, a point $\gamma(1) = [\est{\psi}]$, and $\dot{\gamma}$ as the velocity vector along the curve. Let the velocity vector be decomposed in the parallel transport---along $\gamma$---basis vectors $\{e_a(t)\}$ as $\dot{\gamma} = \sum_{a}\Delta_a e_a(t)$.
    Suppose the conditions of Proposition~\ref{prop:manifold-consistency}, Assumption~\ref{assume:hessian-invertible}, and Assumption~\ref{assume:hessian-convergence} are satisfied. By Lemma~\ref{lem:asymptotic-normality},
    the velocity vector's components $\{\Delta_a\}_{a=1}^{r}$ converge in distribution to a multivariate normal distribution
    \begin{equation}
        \sqrt{N}(\Delta_1, \ldots, \Delta_r)^\top \overset{\mathrm{d.}}{\longrightarrow} \mathcal{N}(0, \Sigma),
    \end{equation}
    where $\Sigma = (\mathrm{H}^\dagger)^{-1}\Gamma \mathrm{H}$ and the Gram matrix $\Gamma_{a,b} = \mathbb{E}_{(U, b) \sim \mathcal{D}}\big[\braket{\omega((U, b), [\phi])}{ e_a}\braket{\omega((U, b), [\phi])}{e_b}\big]$.
\end{corollary}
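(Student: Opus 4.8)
The plan is to recognize this corollary as a direct specialization of the general manifold asymptotic-normality result, Lemma~\ref{lem:asymptotic-normality}, to the concrete Riemannian manifold $(\tilde M, \tilde g)$ of projective full-rank MPS. The entire argument therefore reduces to (i) fixing a dictionary that translates the abstract objects of the lemma into their MPS counterparts, and (ii) checking that its three hypotheses hold in this setting, after which the conclusion is simply quoted.

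For step (i), I would set the abstract manifold $M$ equal to the projective MPS manifold $\tilde M$, which is a differentiable (indeed Riemannian) manifold by Lemma~\ref{lem:projective_manifold} once it is equipped with the induced Fubini--Study metric $\tilde g$ of \eqref{eq:induced-metric}; the target point $\theta_0$ becomes the ray $[\phi]$, and the parametrized family $\mathcal{P}_\theta$ becomes the data distribution $[\psi]\mapsto \mathcal{D}[\,\cdot\,,[\psi]]$ of \eqref{eq:phi_distributionD}, whose sample space is the finite set $\mathcal{S}=\mathcal{U}\times\{0,1\}^n$. The estimating form is the one exhibited in Corollary~\ref{cor:est_form_ll}, already verified to be a bona fide differentiable estimating form (its expectation vanishes and it is smooth in $[\psi]$ wherever $\mathcal{D}>0$). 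With these identifications, the velocity-vector decomposition $\dot\gamma=\sum_a\Delta_a e_a(t)$ along the shortest geodesic from $[\phi]$ to $[\hat\psi]$, together with the matrices $\mathrm{H}$, $\Gamma$, and $\Sigma$, are exactly the objects named in the statement.

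For step (ii), I would verify the three conditions of Lemma~\ref{lem:asymptotic-normality} one at a time. Condition~1, the first-order stationarity of the estimator together with consistency $[\hat\psi]\overset{\mathrm{p.}}{\longrightarrow}[\phi]$, is precisely the content of Proposition~\ref{prop:manifold-consistency}; I would note that its hypothesis---tomographic completeness of $\mathcal{U}$ on the variational states of $\tilde M$---is guaranteed here by Theorem~\ref{thm:real_pure_states} for the random-$XZ$ ensemble on real pure states. Condition~2, finiteness and invertibility of the expected Hessian $\mathrm{H}_{a,b}([\phi])$, is Assumption~\ref{assume:hessian-invertible}, and Condition~3, uniform convergence of the Hessian over a shrinking geodesic ball around $[\phi]$, is Assumption~\ref{assume:hessian-convergence}. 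Since all three hypotheses are met, invoking Lemma~\ref{lem:asymptotic-normality} yields $\sqrt{N}(\Delta_1,\ldots,\Delta_r)^\top\overset{\mathrm{d.}}{\longrightarrow}\mathcal{N}(0,\Sigma)$ with the stated covariance, completing the argument.

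The main obstacle is not the bookkeeping itself but the legitimacy of the two Hessian hypotheses, which are imposed as Assumptions~\ref{assume:hessian-invertible} and \ref{assume:hessian-convergence} rather than proved: invertibility of the expected Hessian is the manifold analogue of the nondegeneracy of the Fisher information and can fail at points of residual gauge redundancy not already quotiented out by passing to $\tilde M$, while the uniform-convergence condition requires control of the Hessian's modulus of continuity that is delicate near the boundary where the full-rank condition degenerates. A secondary subtlety worth flagging in the write-up is that the samples are drawn from a discrete distribution with possible zero-probability outcomes, so one must retain the convention that $y\log y\to 0$ terms are omitted, exactly as in the proof of Fact~\ref{fact:estimating-form}, to ensure the estimating form and Hessian remain well defined.
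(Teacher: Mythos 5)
Your proposal is correct and takes essentially the same route as the paper: the paper gives no independent argument for Corollary~\ref{cor:asymptotic-normality-mps}, treating it exactly as you do, namely as a direct instantiation of Lemma~\ref{lem:asymptotic-normality} on the Riemannian manifold $(\tilde{M},\tilde{g})$, with Condition~1 supplied by Proposition~\ref{prop:manifold-consistency} and Conditions~2 and~3 supplied by Assumptions~\ref{assume:hessian-invertible} and~\ref{assume:hessian-convergence}. One inaccuracy worth removing from your write-up: you claim the tomographic-completeness hypothesis of Proposition~\ref{prop:manifold-consistency} is guaranteed by Theorem~\ref{thm:real_pure_states}, but that theorem covers only \emph{real} pure states, whereas $\tilde{M}$ is a complex MPS manifold; the paper explicitly flags this mismatch in its discussion (noting random-$XZ$ measurements are not tomographically complete on the complex variational space), which is precisely why completeness is left as a hypothesis of the proposition rather than derived.
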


\vspace{2mm}
Corollary~\ref{cor:asymptotic-normality-mps} establishes the asymptotic normality of the velocity vector's components along the shortest connecting geodesic $\gamma$ in the projective MPS manifold $\tM$, which is crucial for determining the geodesic length (see Fact~\ref{fact:constant-velocity} and \equref{eq:arc-length}). Since $\tM$ is a submanifold of the projective Hilbert space $\projH$ with an induced metric $\tilde{g}$, lifting $\gamma:[0, 1] \rightarrow \tM$ to this embedding space to $\gamma^\prime:[0,1] \rightarrow \projH$ reveals that its length is always longer than or equal to that of a shortest connecting geodesic in $\projH$; equality occurs only when the geodesic curve in $\projH$ is exactly $\gamma^\prime$. 
As the geodesic curve in $\projH$ defines the Fubini-Study distance, the length of the geodesic thus relates to the state overlap using Lemma~\ref{lem:fubini-study-distance-arc-length}.
This relation allows us to upper bound the infidelity between quantum states by the length of $\gamma$ within $\tM$. Using the properties and asymptotic behavior of these geodesic paths, we provide a bounded estimation of the state infidelity in Lemma~\ref{lem:delta-bound} below. 

\vspace{2mm}
\begin{lemma}[\textbf{Infidelity bounded by geodesic curve length}]
    \label{lem:delta-bound}
    Given two states $[\psi], [\phi]$ in $\tilde{M}$, the projective space of MPS (see Lemma~\ref{lem:projective_manifold}),
    denote the shortest connecting geodesic by $\gamma:[0, 1] \rightarrow \tilde{M}$, such that $\gamma(0) = [\psi]$ and $\gamma(1) = [\phi]$, as defined in Def.~\ref{def:geodesic}.
    The infidelity of $[\psi]$ and $[\phi]$ is bounded by the length of the geodesic:
    \begin{equation}
        \label{eq:infidelity-bound-by-geodesic-length}
        1 - \vert\braket{\psi}{\phi}\vert \leq \int_0^1 \dd t \sqrt{\braket{\dot{\gamma}}{\dot{\gamma}}_{\tilde{g}}},
    \end{equation}
    where $\braket{\bullet}{\bullet}_{\tilde{g}}$ denotes the induced Fubini-Study metric on $\tilde{M}$.
\end{lemma}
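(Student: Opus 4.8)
The plan is to use that $\tilde{M}$ is \emph{isometrically} embedded in the projective Hilbert space $\projH$, so that the intrinsic length of the geodesic $\gamma$ can be compared with the ambient Fubini--Study distance, which Lemma~\ref{lem:fubini-study-distance-arc-length} evaluates to $\arccos|\braket{\psi}{\phi}|$. Concretely, I would chain together three facts: (i) the intrinsic length of $\gamma$ in $\tilde{M}$ equals the ambient length of its image in $\projH$; (ii) this ambient length is bounded below by the Fubini--Study distance; and (iii) the elementary bound $\arccos(x)\ge 1-x$ converts the geodesic angle into the linear infidelity.

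For step (i), view $\gamma:[0,1]\to\tilde{M}$ as a curve $\gamma'$ in $\projH$ through the inclusion $\tilde{M}\hookrightarrow\projH$. Since $\tilde{g}$ is induced from $g$ [\equref{eq:induced-metric}], each velocity $\dot{\gamma}(t)\in\rmT_{\gamma(t)}\tilde{M}\subset\rmT_{\gamma'(t)}\projH$ has the same norm in both metrics, so the integrals of $\sqrt{\braket{\dot{\gamma}}{\dot{\gamma}}_{\tilde{g}}}$ and $\sqrt{\braket{\dot{\gamma}'}{\dot{\gamma}'}_{g}}$ agree. For step (ii), note that $\gamma'$ connects $[\psi]$ and $[\phi]$ inside $\projH$ but need not be an ambient geodesic---submanifold geodesics are generically bent relative to $\projH$---so its length is at least $d_\mathrm{FS}([\psi],[\phi])$, the length of the shortest connecting geodesic of $\projH$ (Def.~\ref{def:fs-distance-geodesic-length}), with equality only if $\gamma'$ is itself that geodesic. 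Step (iii) uses $\arccos(x)\ge 1-x$ on $[0,1]$, which holds with equality at $x=1$ and follows from $\mathrm{d}\arccos(x)/\mathrm{d}x=-1/\sqrt{1-x^2}\le-1$. Setting $x=|\braket{\psi}{\phi}|$ and combining with Lemma~\ref{lem:fubini-study-distance-arc-length} yields
\begin{equation}
    1-|\braket{\psi}{\phi}|\;\le\;\arccos|\braket{\psi}{\phi}|\;=\;d_\mathrm{FS}([\psi],[\phi])\;\le\;\int_0^1\dd t\,\sqrt{\braket{\dot{\gamma}}{\dot{\gamma}}_{\tilde{g}}},
\end{equation}
which is the assertion.

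The main obstacle is step (ii): rigorously justifying that the image length is at least $d_\mathrm{FS}$ requires the Fubini--Study distance to equal the infimal length over \emph{all} connecting curves in $\projH$, not merely the length of one distinguished geodesic. This follows from geodesic completeness of $\projH\simeq\cc\pp^{2^n-1}$ (which is compact) via the Hopf--Rinow theorem, guaranteeing that a minimizing geodesic exists and realizes the distance. A secondary point is ensuring the intrinsic geodesic $\gamma$ exists between the two points; this is only guaranteed within a sufficiently small neighborhood, as built into Def.~\ref{def:geodesic}, but that restriction is harmless here since the bound is applied in the asymptotic regime where $[\est{\psi}]$ is already close to $[\phi]$.
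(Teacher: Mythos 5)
Your proposal is correct and follows essentially the same route as the paper's proof: lift $\gamma$ to $\projH$ with equal length via the induced metric, lower-bound that length by the Fubini--Study distance, and apply Lemma~\ref{lem:fubini-study-distance-arc-length} together with the elementary inequality $1-\cos\theta\le\theta$ (your $\arccos(x)\ge 1-x$ is the same bound in inverted form). Your added remarks on Hopf--Rinow and local existence of the geodesic only tighten points the paper treats by definition, without changing the argument.
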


\begin{proof}
    By Lemma~\ref{lem:fubini-study-distance-arc-length}, the infidelity between two state vectors $\ket{\psi}$ and $\ket{\phi}$ is upper bounded by their Fubini-Study distance in the projective Hilbert space as
    \begin{align}
        1 - \vert\braket{\psi}{\phi}\vert &= 1 - \cos d_\mathrm{FS}([\phi], [\psi]), \\
        & \leq d_\mathrm{FS}([\phi], 
        [\psi]).
        \nonumber
    \end{align}
    Recall that the Fubini-Study distance is defined as the length of the shortest curve between two points in projective Hilbert space (see Def.~\ref{def:fs-distance-geodesic-length}).
    Hence, for any differentiable 
    curve $\alpha:[0, 1] \rightarrow \projH$, with $\alpha(0) = [\psi]$ and $\alpha(1) = [\phi]$, the length of $\alpha$ also upper bounds the infidelity as
    \begin{equation}
        \label{eq:infidelity-leq-curve-length}
        1 - \vert\braket{\psi}{\phi}\vert \leq \int_0^1 \dd t \sqrt{\braket{\dot{\alpha}}{\dot{\alpha}}_g},
    \end{equation}
    where $\braket{\bullet}{\bullet}_g$ denotes the Fubini-Study metric on $\projH$.
    Note that the choice of the parameter interval $[0,1]$ is without loss of generality because we can reparametrize any connecting curve to an arbitrary closed interval of our choice.
    Since $\tilde{M}$ is a submanifold of $\projH$, we can lift the shortest connecting geodesic $\gamma$ from the projective MPS manifold $\tilde{M}$ into projective Hilbert space $\projH$.
    To be precise, we can formally distinguish the curve $\gamma$ in the two abstract manifolds $\tilde{M}$ and $\projH$.
    For this, we formally identify each point $\gamma(t) \in \tilde{M}$ with its corresponding point in $\gamma^\prime(t) \in \projH$.
    Since $\tilde{g}$ is induced by $g$, the lengths of any tangent vectors are the same in both the submanifold and the embedding manifold $\braket{\dot{\gamma}}{\dot{\gamma}}_{\tilde{g}} = \braket{\dot{\gamma}^\prime}{\dot{\gamma}^\prime}_g$ (see Eqs.~(103) and (108) in Ref.~\cite{haegeman_geometry_2014}).
    Therefore, the 
    lengths of $\gamma$ and $\gamma^\prime$ are equal.
    Choosing the curve to be $\alpha = \gamma^{\prime}$ in Eq.~\eqref{eq:infidelity-leq-curve-length}, 
    we have proved the result stated in Eq.~\eqref{eq:infidelity-bound-by-geodesic-length}.
\end{proof}

\vspace{2mm}
Finally, using Corollary~\ref{cor:asymptotic-normality-mps} and Lemma~\ref{lem:delta-bound}, we provide an upper bound on the infidelity, which is our main result below. 
\setcounter{theorem}{1}
\vspace{2mm}
\begin{theorem}[\textbf{Probabilistic bound}]
Let $\tilde{M}$ be the manifold of full-rank MPS of a maximum bond dimensions $\chi_{\max}$, as per Eq.~\eqref{eq:projectiv-manifold}, and let $[\phi]\in\tilde{M}$ be some target state. Let $D^{(N)}=\{(U_i,\,b_i)\}_{i=1}^{N}$ be the random variable representing a dataset of $N$ samples drawn from the distribution $\mathcal{D}[(U, b), [\phi]]$ defined in Eq.~\eqref{eq:phi_distributionD}. Denote by $[\hat{\psi}^{(N)}]$ the random variable representing the minimizer, in $\tilde{M}$, of the $N$-sample negative log-likelihood defined by $D^{(N)}$, and let $\omega$ be its associated estimating form as defined in Corollary~\ref{cor:est_form_ll}. 
Assume the conditions of Proposition~\ref{prop:manifold-consistency}, Assumption~\ref{assume:hessian-invertible}, and Assumption~\ref{assume:hessian-convergence} are satisfied. Then, for any $\delta\in (0,1]$, there exists $N_0$ such that for all $N> N_0$, one has
\begin{equation}\label{eq:}
        \mathbb{P}\left[1 - \vert \langle \est{ \psi}\vert\phi \rangle \vert \leq \epsilon(N)\right] \geq 1-\delta,
    \end{equation}
where 
    \begin{equation}
        \epsilon(N) = O\left(\sqrt{\frac{n \chi_{\max}^2}{N\delta}}\right). 
        \end{equation}
\end{theorem}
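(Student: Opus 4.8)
The plan is to chain together the three ingredients assembled in the preceding subsections: the geometric bound of Lemma~\ref{lem:delta-bound}, the constant-velocity characterization of geodesic length in Fact~\ref{fact:constant-velocity}, and the asymptotic normality of Corollary~\ref{cor:asymptotic-normality-mps}. First I would apply Lemma~\ref{lem:delta-bound} to the shortest connecting geodesic $\gamma$ between $[\phi]$ and the estimator $[\est{\psi}]$, obtaining the bound (deterministic, conditional on the data)
\begin{equation}
    1 - \vert\braket{\est{\psi}}{\phi}\vert \leq \int_0^1 \dd t\,\sqrt{\braket{\dot{\gamma}}{\dot{\gamma}}_{\tilde{g}}}.
\end{equation}
By Fact~\ref{fact:constant-velocity} the right-hand side equals $\sqrt{\sum_{a,b}\Delta_a\,\tilde{g}_{a,b}\,\Delta_b}$, where $\{\Delta_a\}_{a=1}^{r}$ are the constant components of $\dot{\gamma}$ in the parallel-transported basis. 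Choosing that basis orthonormal with respect to $\tilde{g}$ at $[\phi]$, so that $\tilde{g}_{a,b}=\delta_{a,b}$ along $\gamma$, reduces the geodesic length to the Euclidean norm $\Vert\Delta\Vert_2$ of the velocity components.

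The second step converts the statistics of $\Delta$ into a tail bound. By Corollary~\ref{cor:asymptotic-normality-mps}, $\sqrt{N}\,\Delta \overset{\mathrm{d.}}{\longrightarrow} Z\sim\mathcal{N}(0,\Sigma)$, so $\Vert\Delta\Vert_2 = N^{-1/2}\Vert\sqrt{N}\,\Delta\Vert_2$. I would invoke the continuous mapping theorem to pass to the scalar $\Vert\sqrt{N}\,\Delta\Vert_2^2 \overset{\mathrm{d.}}{\longrightarrow} \Vert Z\Vert_2^2$, a generalized chi-squared variable with $\mathbb{E}\Vert Z\Vert_2^2 = \Tr\Sigma$. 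Markov's inequality on the limiting law gives $\mathbb{P}[\Vert Z\Vert_2^2 \geq \Tr\Sigma/\delta]\leq\delta$; since $\Vert Z\Vert_2^2$ has a continuous distribution, $\{x:\Vert x\Vert_2^2 < \Tr\Sigma/\delta\}$ is a continuity set, so the definition of convergence in distribution supplies an $N_0$ beyond which $\mathbb{P}[\Vert\sqrt{N}\,\Delta\Vert_2^2 < \Tr\Sigma/\delta]\geq 1-\delta$. Combining with the first step yields, for all $N>N_0$, that with probability at least $1-\delta$,
\begin{equation}
    1 - \vert\braket{\est{\psi}}{\phi}\vert \leq \sqrt{\frac{\Tr\Sigma}{N\delta}}.
\end{equation}

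The final step controls $\Tr\Sigma$ and the dimension $r=\dim\tilde{M}$. Under Assumption~\ref{assume:hessian-invertible} the expected Hessian $\mathrm{H}$ at $[\phi]$ has eigenvalues bounded away from $0$ and $\infty$; since the model is realizable the Fisher information coincides with the Hessian, i.e.\ $\Gamma=\mathrm{H}$ and $\Sigma=\mathrm{H}^{-1}$, so every eigenvalue of $\Sigma$ is $O(1)$ and $\Tr\Sigma = O(r)$. The dimension $r$ is bounded by that of the MPS tensor space in Eq.~\eqref{eq:complex-tensors}, $l = d\sum_i\chi_i\chi_{i+1}\leq 2\,n\,\chi_{\max}^2$, hence $r=O(n\chi_{\max}^2)$. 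Substituting $\Tr\Sigma = O(n\chi_{\max}^2)$ into the bound above gives $\epsilon(N) = O\big(\sqrt{n\chi_{\max}^2/(N\delta)}\big)$, as claimed.

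I expect the main obstacle to be the passage from asymptotic normality---a convergence-in-distribution statement---to a genuine probabilistic tail bound. Convergence in distribution does not control moments, so one cannot directly assert $\mathbb{E}\Vert\sqrt{N}\,\Delta\Vert_2^2 \to \Tr\Sigma$ for the prelimit variables; the clean route is to apply Markov to the \emph{limiting} Gaussian quadratic form and then use the Portmanteau/continuity-set argument to obtain the threshold for large $N$, which is precisely why the statement must be asymptotic and carry the existential $N_0$. A secondary subtlety is verifying that the orthonormal-frame reduction and the eigenvalue bounds on $\mathrm{H}$ hold uniformly enough that the hidden constants in $O(\cdot)$ do not secretly grow with $n$ or $\chi_{\max}$.
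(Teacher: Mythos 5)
Your proposal is correct and follows essentially the same route as the paper's own proof: bound the infidelity by the geodesic length (Lemma~\ref{lem:delta-bound}), reduce that length to the constant velocity norm (Fact~\ref{fact:constant-velocity}), invoke the asymptotic normality of Corollary~\ref{cor:asymptotic-normality-mps}, apply Markov's inequality to the limiting Gaussian quadratic form, pass to finite $N$ via convergence in distribution, and finish with $\dim(\tilde{M}) = O(n\chi_{\max}^2)$. Your minor variants—working in an orthonormal parallel frame, using a Portmanteau/continuity-set argument where the paper carries an explicit $\eta$-slack that it sends to zero, and noting $\Gamma = \mathrm{H}$ under realizability where the paper simply absorbs $\Vert \tg\,\Sigma \Vert_\infty$ into the $O(\cdot)$ constant—are cosmetic, and you correctly flag the same residual subtlety (dimension-independence of the hidden constants) that the paper also leaves implicit.
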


\begin{proof}
For simplicity, we assume that $\ket{\est{\psi}}$ and $\ket{\phi}$ are normalized elements of the equivalence classes $[\est{\psi}]$ and $[\phi]$, respectively.
Let $\tilde{g}$ be the Fubini-Study metric of the projective MPS manifold $\tilde{M}$ induced by the Fubini-Study metric of the projective Hilbert space via \equref{eq:induced-metric}. On this manifold, $\gamma$ is the shortest connecting geodesic between $\gamma(0) = [\phi]$ and $\gamma(1) = [\est{\psi}]$. 
By Proposition.~\ref{prop:manifold-consistency}, the estimator  converges to the target state in the infinite-sample limit
\begin{equation}\label{eq:consistent-psin}
    [\est{\psi}] \overset{\mathrm{p.}}{\longrightarrow} [\phi],
\end{equation}
as $N \rightarrow \infty$.
To quantify the error between the $N$-sample estimator and the target state, we use Lemma.~\ref{lem:delta-bound} to upper bound the $N$-sample infidelity by the length of the geodesic curve. More specifically, according to Fact~\ref{fact:constant-velocity}, this geodesic length is the same as the length of the velocity vector, which is constant throughout the curve, so that
\begin{equation}\label{eq:infidelity-curve-length}
    1 - \vert \langle \est{ \psi}\vert\phi \rangle \vert \leq \sqrt{\braket{\dot{\gamma}}{\dot{\gamma}}_{\tilde{g}}}.
\end{equation}
Therefore, to obtain an upper bound for the infidelity, it suffices to analyze the velocity vector defined at an arbitrary point along the curve, which we choose to be $\gamma(0) = [\phi]$. The velocity vector, in the basis $\{ e_a\}_{a=1}^{r}$ of the tangent space $T_{[\phi]}\tilde{M}$, can be written as
\begin{equation}
    \dot{\gamma} = \sum_{a=1}^r \estn{\Delta}_a e_a.
\end{equation}
Expressing the Fubini-Study metric in the same basis as $\tilde{g}_{a,b}$, the squared length of the velocity vector is explicitly given by 
    \begin{equation}
        {\braket{\dot{\gamma}}{\dot{\gamma}}_{\tilde{g}}} = {\sum_{a,b} \estn{\Delta}_a\,\tilde{g}_{a,b}\,\estn{\Delta}_b}.
    \end{equation}
For simplicity, we represent the velocity vector's components as a column vector $\estn{\Delta} = (\estn{\Delta}_1, \ldots, \estn{\Delta}_r)^\top$, so that its squared length is 
\begin{equation}\label{eq:sqaured-length}
    {\braket{\dot{\gamma}}{\dot{\gamma}}_{\tilde{g}}} = {{\estn{\Delta}}^\top\tilde{g}\,\estn{\Delta}},
\end{equation}
where, in slight abuse of notation, we write $\tilde{g} = (\tilde{g}_{a,b})_{a,b=1}^r$.
By virtue of Assumptions~\ref{assume:hessian-invertible} and \ref{assume:hessian-convergence}, together with the consistency of the estimator in \equref{eq:consistent-psin}, Corollary~\ref{cor:asymptotic-normality-mps} states that the $N$-sample estimator $\sqrt{N}\estn{\Delta}$ converges in distribution to the normal distribution as
\begin{equation}
    \sqrt{N}\Delta^{(N)} \overset{\mathrm{d}}{\rightarrow} \Delta^{(\infty)} \sim  \mathcal{N}(0, \Sigma).
\end{equation}

 Therefore, in the infinite-sample limit $N\rightarrow \infty$, we can upper bound the probability of
 the norm of $\Delta^{(\infty)}$ (with respect to the metric $\tilde{g}$) greater than an error $\varepsilon$ using Markov's inequality as
 \begin{align}\label{eq:markov_inequality}
    \mathbb{P}\left[ {\Delta^{(\infty)}}^\top \tg\, \Delta^{(\infty)} \geq \varepsilon\right]
    \leq \frac{\mathbb{E}\left({\Delta^{(\infty)}}^\top \tg\, \Delta^{(\infty)}\right)}{\varepsilon}.
\end{align}
To compute the expectation of the $\Delta^{(\infty)}$-inner product on the r.h.s., we use the following identity
\begin{align}
    \mathbb{E}\left({{\Delta^{(\infty)}}^\top \tg \, \Delta^{(\infty)}}\right)
    &= \mathbb{E}\left({\tr(\tg \, \Delta^{(\infty)}{\Delta^{(\infty)}}^\top)}\right), \\
    \nonumber
    &= \tr\Big(\tg\,\mathbb{E}\left({\Delta^{(\infty)}{\Delta^{(\infty)}}^\top}\right)\Big),\\
    & = \tr\left(\tg\,\Sigma\right).
     \nonumber
\end{align}
Going from the second to the third line, first we have used the definition of the covariance matrix
\begin{equation}
    \Sigma = \mathbb{E}\left({\Delta^{(\infty)}}{\Delta^{(\infty)}}^\top\right) - \mathbb{E}\left({\Delta^{(\infty)}}\right)\mathbb{E}\left({\Delta^{(\infty)}}^\top\right).
\end{equation}
Second, using the zero-mean property of $\mathbb{E}\left({\Delta^{(\infty)}}\right)=0$, we arrive at an explicit expression for the expectation of the $\Delta^{(\infty)}$-inner product as
\begin{align}\label{eq:squared-length}
    \mathbb{E}\left({{\Delta^{(\infty)}}^\top \tg \, \Delta^{(\infty)}}\right)
    = \tr\left(\tg\,\Sigma\right).
\end{align}
Plugging the expectation values derived in \equref{eq:squared-length} into the r.h.s. of \equref{eq:markov_inequality}, we have the following bound
\begin{equation}
    \mathbb{P}\left[ {\Delta^{(\infty)}}^\top \tg\, \Delta^{(\infty)} \geq \varepsilon\right]
    \leq \frac{\tr\left(\tg\,\Sigma\right)}{\varepsilon}.
\end{equation}
To analyze the finite-$N$ $\estn{\Delta}$-inner product estimator, we first notice that the estimator converges in probability to its infinite-$N$ limit as
\begin{align}\label{eq:finite_infnite_p}
    \mathbb{P}\left[ N {\Delta^{(N)}}^\top \tg\, \Delta^{(N)} \geq \varepsilon\right] \overset{N \rightarrow \infty}{\longrightarrow}
    \mathbb{P}\left[ {\Delta^{(\infty)}}^\top \tg\, \Delta^{(\infty)} \geq \varepsilon\right].
\end{align}
Once we choose a convergence error tolerance $\eta$, there exists an $N_0$ such that the finite-sample estimator is close to the infinite-sample limit upper bounded by $\eta$ for all $N > N_0$:
\begin{align}\label{eq:}
    \mathbb{P}\left[ N {\Delta^{(N)}}^\top \tg\, \Delta^{(N)} \geq \varepsilon\right] -
    \mathbb{P}\left[ {\Delta^{(\infty)}}^\top \tg\, \Delta^{(\infty)} \geq \varepsilon\right]
     \leq \eta.
\end{align}
Therefore, we can bound the finite-sample estimator by
\begin{align}\label{eq:}
    \mathbb{P}\left[ N {\Delta^{(N)}}^\top \tg\, \Delta^{(N)} \geq \varepsilon\right]
    &\leq
    \frac{\tr(\tg\,\Sigma)}{\varepsilon} + \eta.
\end{align}
Notice that in the l.h.s., there is a factor of $N$; in order to bound the squared length, we can rescale  $\varepsilon' = \varepsilon/N$, so that the following inequality is satisfied for all $N \geq N_0$
\begin{align}\label{eq:}
    \mathbb{P}\left[{\Delta^{(N)}}^\top \tg\, \Delta^{(N)} \geq \varepsilon'\right] \leq \frac{\tr(g\Sigma)}{N\varepsilon'} + \eta.
\end{align}
By choosing a tolerance probability $\delta$ such that the lower bound for the squared length is satisfied,
\begin{align}\label{eq:}
    \frac{\tr\left(\tg\,\Sigma\right)}{N\varepsilon'(N)} + \eta = \delta,
\end{align}
we can solve for the $N$-sample error $\varepsilon^\prime(N)$ as
\begin{equation}
    \varepsilon^\prime(N) = \frac{\tr(\tg\,\Sigma)}{N (\delta - \eta)}.
\end{equation}
Finally, we can upper bound the squared length by taking the complement of the inequality. With probability $1 - \delta$, the squared-length estimator is upper bounded by $\varepsilon^\prime(N)$, i.e.
\begin{equation}\label{eq:}
    \mathbb{P}\left[{\Delta^{(N)}}^\top \tg\, \Delta^{(N)} \leq \varepsilon^\prime(N)\right] \geq 1-\delta.
\end{equation}
To achieve a bound for the infidelity, we take the square root of both sides of the inequality and arrive at an upper bound of the length of  the velocity vector, which bounds the infidelity by \equref{eq:infidelity-curve-length} and \equref{eq:sqaured-length}, so that
\begin{equation}\label{eq:}
    \mathbb{P}\left[1 - \vert \langle \est{ \psi}\vert\phi \rangle \vert \leq \sqrt{\varepsilon^\prime(N)}\right] \geq 1-\delta.
\end{equation}
Finally, we can upper bound the $N$-sample error threshold as
\begin{align}
    \varepsilon^\prime(N) &\leq
    \frac{\dim(\tilde{M})\Vert \tg \,\Sigma\Vert_\infty}{N(\delta - \eta)}, 
\end{align}
where $\Vert\bullet\Vert_\infty$ is the $\infty$-Schatten norm. The dimension of the projective MPS manifold can be upper bounded as $\dim(\tilde{M}) = O(n \chi_{\max}^2)$ where $\chi_{\max}$ is the maximum bond dimension $\chi_{\max} = \max(\chi)$.

The $\eta$-term in the denominator is a consequence of the fact that the estimator $[\est{\psi}]$ is only \emph{asymptotically} normal.
The smaller one wants to make $\eta$, the larger $N_0$ becomes so that we have to move ``closer" to the asymptotic regime.
Asymptotically, i.e., in the limit $N\rightarrow \infty$, we can choose $\eta\rightarrow 0$ to arrive at our main result in Theorem~\ref{thm:mle-manifold-normality}
\begin{equation}\label{eq:}
    \mathbb{P}\left[1 - \vert \langle \est{ \psi}\vert\phi \rangle \vert \leq \epsilon(N)\right] \geq 1-\delta,
\end{equation}
where the infidelity error threshold is given by
\begin{equation}
    \epsilon(N) = O\left(\sqrt{\frac{n \chi_{\max}^2}{N \delta}}\right).
\end{equation}

\end{proof}

\begin{figure}[H]
    \centering
    \includegraphics[width=0.5\textwidth]{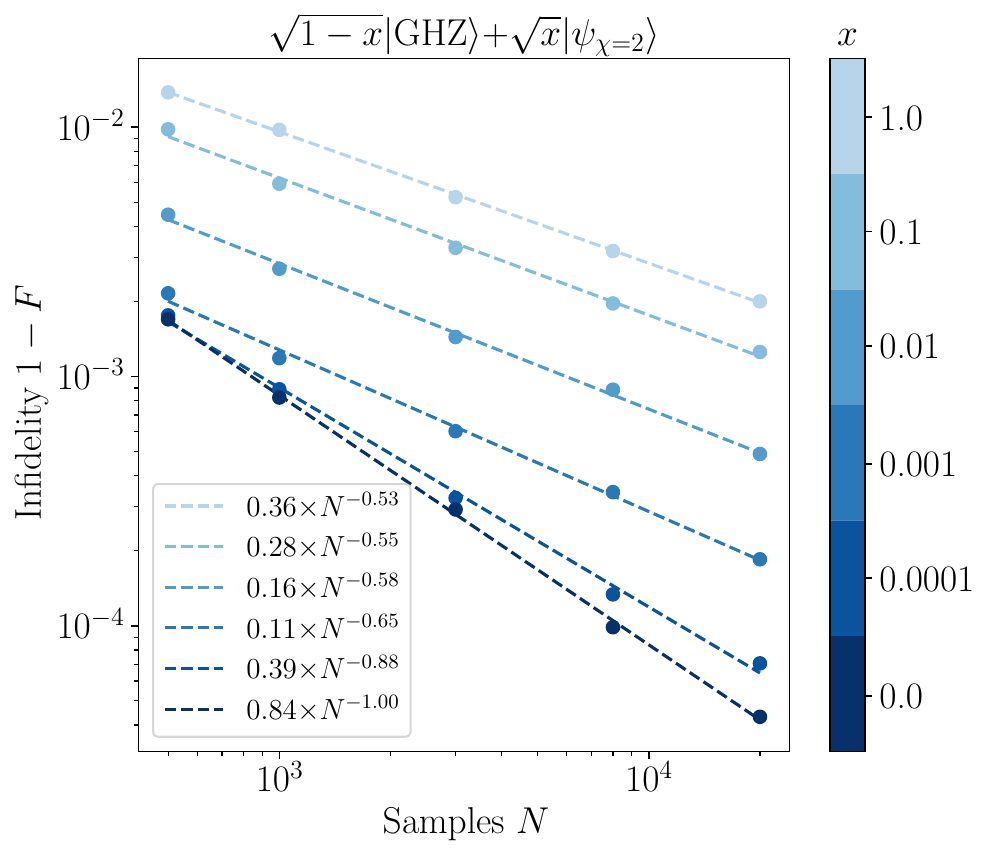}
    \caption{Infidelity scaling for interpolation between a random and GHZ $3$-qubit state. The numerical results illustrate the polynomial scaling of the infidelity with the number of samples $N$ across a family of states, parametrized by $x$. Here, $x$ ranges from 0 (darkest color, representing the GHZ state) to 1 (lightest color, representing the random state). Dashed lines indicate polynomial fits of the form $1-F = c \times N^{-\alpha}$, with values for $c$ and $\alpha$ provided in the legend.}
    \label{fig:appB-infidelity-random-states}
\end{figure}

\newpage
\subsection{Discussions and interpretations}\label{app:mle-discussions}
In this subsection, we discuss Theorem~\ref{thm:mle-manifold-normality} and its connection to the numerical results presented in the main text. This theorem demonstrates that the MPS solution of \emph{maximum likelihood estimate} (MLE) approaches the target state when the number of samples $N$ is sufficiently large, with the infidelity $1-F$ converging at a rate proportional to $1/\sqrt{N}$. This provides a theoretical foundation for understanding the scaling behavior observed in our numerical experiments.

First, we comment on how this theoretical result relates to our numerics in practice. 
In the derivation of our theorem, we have assumed that our variational space forms a projective MPS manifold. However, in our numerical simulations, we fix the bond dimension across sites, including variational states that are not always full rank, although they do include a projective MPS manifold as a subspace. 
In addition, several assumptions underpinning our result could be compromised: 1) the Hessian of the \emph{negative log-likelihood} (NLL) loss function may not behave as expected, or 2) the unitary ensemble is not tomographically complete in our variational space. Notably, while Theorem~\ref{thm:TomCom_realpurestatesWithXZ} confirms that our random-$XZ$ measurements are tomographically complete for real and pure states, our variational approach involves complex numbers for enhanced stability in optimization, leading to a space of complex pure states. To overcome this discrepancy, incorporating random-Pauli measurements, which are known to be tomographically complete, could provide a solution. Alternatively, our analysis of the complex projective MPS manifold could be adapted to the setting of a real projective MPS, and one can verify that the necessary conditions on the manifold structure are satisfied in the subspace of real states.

Focusing on our numerical observations, we empirically find that the infidelity decreases polynomially with the number of samples $N$ as  $1-F \propto \frac{1}{N^\alpha}$. Theorem~\ref{thm:mle-manifold-normality} generally predicts that $\alpha = 0.5$. Specifically, a strong alignment with the polynomial functional form indicates that $N$ is sufficiently large to approach the asymptotic limit. In practice, we see that some of our numerical results for $n=9$ qubits exhibit such a polynomial scaling of the infidelity with $\alpha \approx 1$ in \figref{fig:2_surface_code}(b), while others do not (see \figref{fig:3_rydberg_ruby}(b,c)). The violation of our prediction could be explained by the invalidity of one or more of our assumptions. Intriguingly, $\alpha=1$ is actually better than the upper bound provided in our theorem, pointing to additional complexities.

To address discrepancies between our numerical observations and our theoretical framework, we have performed additional numerical experiments using random states to validate our theorem's predictions. In \supfigref{fig:appB-infidelity-random-states}, we illustrate how the infidelity scales with the number of samples $N$ for states that interpolate between a GHZ state $\ket{\mathrm{GHZ}}$ and a random state vector $\ket{\psi_{\chi=2}}$ with bond dimension $2$. The level of randomness is controlled by the parameter $x$, as defined by
\begin{equation}\label{eq:random-GHZ}
    \ket{\phi} = \sqrt{1-x} \ket{\mathrm{GHZ}} + \sqrt{x} \ket{\psi_{\chi=2}}.
\end{equation}
Our findings reveal that for a purely random state ($x=1$), the observed scaling exponent $\alpha = 0.53$ aligns perfectly with our theoretical prediction. Conversely, at $x=0$, corresponding to the GHZ state, an unexpected deviation with $\alpha = 1$ mirrors the anomaly previously observed in \figref{fig:2_surface_code}(b). This suggests that the inherent structures of stabilizer states, like the GHZ and the surface code state, might facilitate a better scaling than our current upper bound.

Finally, we note that in our practical numerics, we perform MLE in the space of MPS tensors, and not directly on the manifold. Theoretically, the trajectory towards the minimum would differ between these two approaches. Nonetheless, our theorem focuses on the properties of the \textit{optimal} MLE estimator, acknowledging that practically achieving this solution may not always be efficient.
Note, however, that once we have found MPS tensors corresponding to the minimizer of the loss function, they correspond to a \emph{unique} physical state in Hilbert space.
Thus, the theorem still makes a meaningful statement about the practical optimizer with respect to the MPS tensors, under the condition that the optimizer finds a global minimizer of the loss function.
We leave potential improvement of the optimization by incorporating insights from the manifold structure~\cite{hauru2021riemannian, luchnikov2021riemannian} for future work.

\newpage
\section{Random-$XZ$ classical shadow tomography}\label{app:xz_shadow}

This self-contained appendix fully characterizes random-$XZ$ classical shadow tomography and derives the necessary tools to implement it. We use the results outlined in this section when estimating observables in the regularization step of  \equref{eq:loss_function_mps}; see also \appref{app:regularization} for the details of regularization using random-$XZ$ classical shadows. We are also motivated to characterize random-$XZ$ classical shadows because measurements of this form allow one to fully characterize real, pure states, which are the target states in this work (refer to Appendix \ref{app:reality_random_xz_measurements} for a proof). 
Random-$XZ$ classical shadows may also be of interest more broadly due to the ease of their experimental implementation across different platforms. For instance, with superconducting qubits, measurement in different bases only requires single-qubit rotation which are cheap to implement. In the context of Rydberg atom arrays, this holds for both physical and logical qubits. Single-qubit rotations on physical qubits have high fidelities~\cite{evered2023high}.  For logical qubits, transversal Hadamard rotates the logical to the $X$ basis by using single-qubit physical rotations, and even without these transversal Hadamard gates, logical qubits can still be effectively measured in the $X$ basis~\cite{bluvstein2024logical}. In this latter case, performing $XZ$ measurements is not only doable but also provides the ability to perform error detection (e.g., consider the [[15,1,3]] Reed-Muller 3D color code).  
\vspace{1mm}

Random-$XZ$ classical shadows involve preparing some (unknown) state, randomly measuring some $n$-qubit Pauli string composed of only $X$ and $Z$ single-site Paulis, and then post-processing to predict properties of interest. We assume that the chosen Pauli is uniformly sampled from the set of all $n$-qubit strings composed of only $X$s and $Z$s.
As this set of measurements is not informationally complete, random-$XZ$ classical shadows do not allow us to predict any arbitrary observable. However, we can predict some observables of interest, namely, those that contribute to our regularization step. 
In this appendix, we will discuss which observables can be estimated with this technique, discuss how to post-process the data to estimate observables, and how to calculate the sample complexity required for learning Pauli strings. We specifically discuss the sample complexity for learning Pauli strings because these are the observables that we estimate in our regularization step. In the first section of this appendix, we introduce these ideas pedagogically, with the rigorous backings for all our claims provided in the second section. 

\vspace{2mm}

\subsection{Performing random-$XZ$ classical shadow tomography}

Here, we derive an expression for the classical shadow of the quantum state measured by the randomized $XZ$ measurements discussed in \appref{app:reality_random_xz_measurements}. 
Random-$XZ$ classical shadows involve taking many random measurements on some prepared state. Each measurement involves two steps. First, the state is rotated under some unitary $U$ which is sampled from $U \sim \mathcal{U}_{XZ}$, and then, in the second step, the state is measured in the computational basis states $\{\ketbra{b}{b}\}_{b \in \{0,1\}^n}$. This ensemble of unitaries $\mathcal{U}_{XZ}$ exhibits a uniform probability distribution over unitaries $U$ which take the form
\begin{equation}
    U = \otimes_{i=1}^n U^{a_i} 
    \hspace{1mm} \text{ where } a_i \in \{X, Z\}.
\end{equation}
Here, $U^{a_i}$ is the single-qubit unitary which rotates from the computational basis to the (different) basis of the Pauli operator $\sigma_i^{a}$ acting on site $i$. Therefore, acting on the prepared state, the ensemble $\mathcal{U}_{XZ}$ allows one to randomly measure any $n$-qubit Pauli string composed of only $X$ and $Z$ single-site Paulis. Again, this set of measurements is not informationally complete, and as such, we cannot predict any arbitrary observable with random-$XZ$ classical shadows.

In this appendix, we will use the nomenclature developed in 
Ref.~\cite{vankirkHardwareefficientLearningQuantum2022}. In particular, we define the visible space of our random-$XZ$ measurement scheme as the space of all operators $O$ for which we \textit{can} estimate the expectation value $\tr(\rho O)$; we denote this as $\textnormal{\textsf{VisibleSpace}}$($\mathcal{U}$).
Similarly, the invisible space is the space of operators that are invisible to us under the implementable unitaries $\mathcal{U}$, and we label this as $\textnormal{\textsf{InvisibleSpace}}$($\mathcal{U}$). Note that the full operator space $\mathcal{B}(\mathcal{H})$ associated to the Hilbert space $\mathcal{H}$ can be expressed with the following tensor sum:
\begin{equation}
    \mathcal{B}(\mathcal{H}) \simeq \textnormal{\textsf{VisibleSpace}}(\mathcal{U}) \oplus \textnormal{\textsf{InvisibleSpace}}(\mathcal{U})\,,
\end{equation}
where the symbol $\simeq$ stands for ``isomorphic to''. 
As a general rule, a more limited set of implementable unitaries $\mathcal{U}$ corresponds to a smaller $\textnormal{\textsf{VisibleSpace}}(\mathcal{U})$.
In Corollary~\ref{cor:XZvisiblespace}, we find that the random-$XZ$ visible space is spanned by Pauli strings constructed from single-site $\mathds{1}$, $X$, and $Z$ Paulis. We estimate these Paulis and then plug them into our regularization step. 

We can estimate observables $O \in \textnormal{\textsf{VisibleSpace}}(\mathcal{U}_{XZ})$ by post-processing the data to construct an estimate $\hat{\rho}$ of the true density matrix $\rho$. This estimate is called the ``classical shadow'' of the true density matrix, and we can use it to compute $\tr(\hat{\rho} O) \approx \tr(\rho O)$. 
We construct the classical shadow using the data from our measurements. Each measurement snapshot can be characterized by the unitary applied, $U = \otimes_{i}\hspace{1mm} U^{a_i}$, and the computational basis state measured, $\ket{b }= \otimes_i \ket{b_i}$. Here the $i$ indexes the qubits. We can construct a single-shot estimate of our density matrix, $\hat{\rho}_1$, from random-$XZ$ data as 
\begin{eqnarray}\label{eq:xz_shadow_m_inverse}
        \hat{\rho}^{}_1 &=& \mathcal{M}_{XZ}^{-1}(U^\dagger  \dyad{b} U) \\ 
         &=& \otimes_{i=1}^n 
         \left(\tr(U^{a_i \dagger}  \dyad{b_i} U^{a_i}X)X + \tr(U^{a_i \dagger}  \dyad{b_i} U^{a_i}Z)Z + \frac{1}{2} \mathbb{I}\right) 
         \nonumber
         \\
         \nonumber
         &=&  \otimes_{i=1}^n 
          \left(2 U^{a_i \dagger}  \dyad{b_i} U^{a_i} - \frac{\mathbb{I}}{2} \right).
\end{eqnarray}
The subscript $1$ above indicates that the classical shadow was constructed with data from $1$ measurement. 
The channel $\mathcal{M}^{-1}_{XZ}$ is the Moore-Penrose pseudoinverse of the measurement channel $\mathcal{M}_{XZ}$ derived in Proposition \ref{prop:XZmeasurementchannel} in the next subsection.
This channel models the measurement step of the randomized-measurement procedure, and for an infinite set of informationally complete measurements, it guarantees that we will recover our original density matrix: 
\begin{equation}
    \rho = \lim_{N \rightarrow \infty} \hat{\rho}^{}_N,
\end{equation}
where $\hat{\rho}_N$ is the classical shadow constructed from $N$ measurements. One example, of an informationally complete measurement scheme is that of random Pauli measurements, where each qubit is randomly measured in either the $X$, $Y$, or $Z$ basis. Intuitively, since the Pauli strings form a basis of the operator space $\mathcal{B}(\mathcal{H})$, this allows one to predict \textit{any} observable. However, in our particular setting, the random-$XZ$ measurements are not informationally complete, and as such, the infinite-measurement $XZ$ classical shadow will not exactly recover every $\rho$. Instead, with an infinite number of $XZ$ measurements, we will recover the exact density matrix projected onto the visible space. Therefore, we can only learn the observables in the visible space because we only learn this portion of $\rho$. 

The sample complexity of classical shadow protocols is defined in terms of the shadow norm $\|O\|^2_\text{shadow}$, an upper bound on the variance of predicting $\tr(\rho O)$ with this measurement-and-reconstruction technique. 
Of course, to get an actual number of samples needed to predict $\tr(\rho O)$ to some precision $\epsilon$, one would need to employ a concentration inequality. 
While the original classical shadows proposal  demonstrated a median-of-means estimation technique~\cite{huangPredictingManyProperties2020}, this concentration inequality has a large prefactor, which increases the number of samples by almost two orders of magnitude. Instead, we recommend that the interested reader employs Bernstein's inequality, as has been used in Ref.~\cite{vankirkHardwareefficientLearningQuantum2022}, which uses an empirical average over the data from $N$ shots
\begin{equation}\label{eq:xz_shadow_Nshot}
    \hat{\rho}^{}_N = \frac{1}{N} \sum_{j=1}^N \mathcal{M}_{XZ}^{-1}\left(U^{(j)\dagger}  \dyad{b^{(j)}} U^{(j)}\right).
\end{equation}
Here, we represent the unitary applied in the $j$th random measurement and the outcome bitstring of that measurement as $U^{(j)}$ and $b^{(j)}$, respectively.

Just as is the case with random Pauli classical shadows (measuring any Pauli string in a given shot), random-$XZ$ classical shadows are most effective at learning local observables, the intuition being that we more often measure in bases that tell us something about that observable. 
Consider, for example, the observables $X \otimes \mathds{1}^{\otimes n-1}$ and $X^{\otimes n}$. 
As long as we measure the first qubit in the $X$ basis, the measurement will provide information about $\tr(\rho \hspace{1mm} X \otimes \mathds{1}^{\otimes n-1})$. This happens with probability $1/2$: there is a $1/2$ chance of measuring the first qubit in the $X$ basis and a $1/2$ chance of measuring it in the $Z$ basis. 
Now, let us consider the second observable: $X^{\otimes n}$. In order to learn about the expectation value of $X^{\otimes n}$ with our state, every qubit must be measured in the $X$ basis, and this happens with probability $1/2^n$ since there are $n$ qubits and each has a $1/2$ chance of being measured in $X$. 
Therefore, it is much easier to learn $X \otimes \mathds{1}^{\otimes n-1}$ than $X^{\otimes n}$. 
In the next subsection of this appendix, we provide sample complexity upper bounds on Pauli strings in the form of the shadow norm of Pauli strings. This result  is consistent with our intuition that random-$XZ$ classical shadows can more easily learn local Pauli strings than nonlocal Pauli strings. In particular, a $k$-local Pauli string $P_k$ ($k$ non-identity single-site Paulis) exhibits a shadow norm of $\|P_k\|^2_\text{shadow} = 2^k$. 

\subsection{Derivation of tools for random-$XZ$ classical shadows}

In the remainder of this appendix, we derive the tools and claims discussed above for performing random-$XZ$ classical shadow tomography. 
First, we will derive the classical shadow measurement channel $\mathcal{M}_{XZ}$ for the learning procedure described in the previous subsection. 
Once a set of implementable unitaries has been specified, the measurement scheme can be defined in the language of probability density functions. For the random-$XZ$ measurement scheme considered above, we define our ensemble $\mathcal{U}_{XZ}$ to have a uniform probability density function in the space of all possible unitaries rotating into $X$ and $Z$ single-qubit bases. 
This allows us to use Observation 4 from Ref.~\cite{vankirkHardwareefficientLearningQuantum2022} in order to define the $XZ$ visible space. 
Specifically, the image of the $XZ$ measurement channel defines the visible space. 
While the visible space is technically  independent of the underlying probability density function (being an indicator of what can be learned \textit{with the measurement unitaries}), the uniform distribution allows us to characterize the visible space since each unitary has a nonzero probability of being sampled~\cite{vankirkHardwareefficientLearningQuantum2022}. 
In addition to characterizing the visible space, we will also  use the measurement channel to define the shadow norm of Pauli strings.  As the unitaries in the ensemble are Cliffords, the eigenoperators of the measurement channel are Pauli strings, and the eigenvalues are the inverse shadow norms of the corresponding eigenoperators~\cite{ippoliti2023operator, hu2023classical}.

\vspace{1mm}

\begin{proposition} [\textbf{Random-$XZ$ classical shadow measurement channel}]\label{prop:XZmeasurementchannel}
Assume a measurement primitive $\mathcal{U}_{XZ}$ such that the associated probability density function is uniform over random-$XZ$  unitaries.  The random-$XZ$ measurement channel for $n$ qubits can be written as
    \begin{equation}
        \label{eq:XZmeasurmentchannel}
        \mathcal{M}_{\text{\rm XZ}}(A) =   \sum_{P \in \text{\rm Pauli}_{\mathds{1},X,Z}(n)}  \frac{1}{2^{k(P)}}  \text{\rm tr}(A P) P\,,
    \end{equation}
where $\text{\rm Pauli}_{\mathds{1},X,Z}(n)$ is the set of $n$-qubit Pauli strings constructed from only $\mathds{1}$s, $X$s, and $Z$s, and $k(P)$ is the locality of Pauli string $P$. Note that we assume all Pauli strings $P$ above are normalized with respect to the Hilbert-Schmidt inner product: $\text{\rm tr}(P P^\dagger) = 1$.
\end{proposition}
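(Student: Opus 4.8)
The plan is to evaluate $\mathcal{M}_{XZ}$ directly from the definition of the classical-shadow measurement channel and then exploit the tensor-product structure of the ensemble $\mathcal{U}_{XZ}$ to reduce everything to a single-qubit calculation. Recall that the measurement channel associated with a unitary ensemble $\mathcal{U}$ acts on an operator $A$ as
\begin{equation}
    \mathcal{M}(A) = \mathbb{E}_{U \sim \mathcal{U}} \sum_{b \in \{0,1\}^n} \matrixel{b}{U A U^\dagger}{b}\, U^\dagger \ketbra{b}{b} U,
\end{equation}
which for $A = \rho$ reproduces the expectation of the single-shot snapshot appearing in Eq.~\eqref{eq:xz_shadow_m_inverse}. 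Because $\mathcal{U}_{XZ}$ samples each site independently and uniformly from $\{U^X, U^Z\}$ and the computational-basis measurement factorizes over sites, the channel itself factorizes as $\mathcal{M}_{XZ} = \bigotimes_{i=1}^n \mathcal{M}^{(1)}$, where $\mathcal{M}^{(1)}$ is the single-qubit channel.

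First I would compute $\mathcal{M}^{(1)}$. Averaging over $\alpha \in \{X,Z\}$ gives $\mathcal{M}^{(1)} = \tfrac12(\mathcal{M}_X + \mathcal{M}_Z)$, the equal mixture of the dephasing channels in the $X$- and $Z$-eigenbases. Each dephasing channel fixes the identity and the Pauli along its own axis while annihilating the two orthogonal Paulis, so evaluating $\mathcal{M}^{(1)}$ on the single-qubit basis $\{\mathds{1}, X, Y, Z\}$ yields the eigenvalues $1$, $\tfrac12$, $0$, $\tfrac12$, respectively. In particular, $\mathds{1}$, $X$, and $Z$ are eigenoperators with nonzero eigenvalue, while $Y$ is annihilated; this is precisely where the restriction of the visible space to $Y$-free Pauli strings (Corollary~\ref{cor:XZvisiblespace}) originates.

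Since the single-qubit Paulis are a joint eigenbasis of $\mathcal{M}^{(1)}$, tensoring immediately gives the eigenvalues of $\mathcal{M}_{XZ}$: an $n$-qubit Pauli string $P$ has eigenvalue $\prod_i \lambda_{P_i}$, which equals $(1/2)^{k(P)}$ when $P \in \text{Pauli}_{\mathds{1},X,Z}(n)$ with $k(P)$ non-identity factors, and vanishes as soon as $P$ contains even a single $Y$. To assemble the claimed formula, I would use that the unit-normalized Pauli strings form an orthonormal basis of $\mathcal{B}(\mathcal{H})$ under the Hilbert-Schmidt inner product, write the spectral decomposition $\mathcal{M}_{XZ}(A) = \sum_P \lambda_P\, \mathrm{tr}(P^\dagger A)\, P$, discard the $Y$-containing terms carrying zero eigenvalue, and invoke Hermiticity $P^\dagger = P$ to recover Eq.~\eqref{eq:XZmeasurmentchannel}.

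The computation is essentially routine bookkeeping rather than a conceptual obstacle, since the Clifford nature of the $XZ$ unitaries guarantees that Pauli strings are eigenoperators and no representation-theoretic machinery is needed. The two places demanding genuine care are verifying the single-qubit eigenvalues — in particular that $Y$ is annihilated, which is the crux of the visible-space characterization — and tracking the normalization conventions consistently, so that the $1/2^{k(P)}$ prefactors and the unit-normalized Paulis of the statement line up exactly.
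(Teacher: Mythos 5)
Your proof is correct, but it reaches Eq.~\eqref{eq:XZmeasurmentchannel} by a genuinely different route than the paper. The paper works globally: it first invokes an external lemma (Lemma~1 of the shallow-shadows work it cites) to assert that, because $\mathcal{U}_{XZ}$ is a subset of the Clifford group, all Pauli strings are eigenoperators of $\mathcal{M}_{XZ}$; it then uses the doubled-Hilbert-space identity $\sum_b \ketbra{b,b}{b,b} \propto \sum_{\tilde{P} \in \text{Pauli}_{\mathds{1},Z}(n)} \tilde{P} \otimes \tilde{P}$ to rewrite the channel and reads off the eigenvalue $\lambda_P$ as the \emph{probability} that a random $U \in \mathcal{U}_{XZ}$ rotates $P$ into a string of $\mathds{1}$s and $Z$s, which it then counts site by site. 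You instead exploit the tensor-product structure from the outset: the channel factorizes as $\mathcal{M}_{XZ} = (\mathcal{M}^{(1)})^{\otimes n}$ because both the unitary ensemble and the computational-basis measurement are products over sites, and the single-qubit channel $\mathcal{M}^{(1)} = \tfrac{1}{2}(\mathcal{M}_X + \mathcal{M}_Z)$ is an equal mixture of dephasing channels with spectrum $(1, \tfrac{1}{2}, 0, \tfrac{1}{2})$ on $(\mathds{1}, X, Y, Z)$; tensoring the eigenvalues and expanding $A$ in the normalized Pauli basis finishes the argument. Your route is more elementary and self-contained — it needs neither the Clifford-group lemma nor the swap-trick identity, and the Pauli-eigenoperator property falls out of the single-qubit computation rather than being assumed — and it has the additional virtue of producing the factorized form of the channel directly, which the paper only introduces afterwards when deriving the pseudoinverse $\mathcal{M}_{XZ}^{-1}$. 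What the paper's approach buys in exchange is generality: the probabilistic eigenvalue interpretation and the Clifford argument survive when the ensemble lacks a site-by-site product structure (e.g., shallow-circuit or global Clifford shadows), where your factorization step would fail. One small point of rigor worth stating explicitly in your write-up: the factorization $\mathcal{M}_{XZ} = (\mathcal{M}^{(1)})^{\otimes n}$ is verified on product operators and then extends to all of $\mathcal{B}(\mathcal{H})$ by linearity, which is exactly what licenses evaluating the channel only on Pauli strings.
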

\begin{proof}
For the measurement primitive~\cite{huangPredictingManyProperties2020} assumed above, the random unitaries are uniformly sampled from $\mathcal{U}_{XZ}$, which is a subset of the Clifford group. As such, by Lemma 1 of 
Ref.~\cite{bertoni2022shallow}, the eigenoperators of the corresponding measurement channel $\mathcal{M}_{XZ}$ are Pauli strings.
Therefore, we can express the action of the measurement channel as  $M_{XZ}(P) = \lambda_P P$, where $\lambda_P$ denotes the eigenvalue corresponding to the Pauli string eigenoperator $P$.
It suffices to show that a $k(P)$-local Pauli string $P \in \text{\rm Pauli}_{\mathds{1},X,Z}(n)$ has eigenvalue $\lambda_{P} = 1/2^{k(P)}$, and all other Pauli strings $P \notin \text{\rm Pauli}_{\mathds{1},X,Z}(n)$ have eigenvalue 0. 

To show this, we will derive an expression for $\lambda_P$. Let us start by rearranging the measurement channel. We can start with its original definition from Ref.~\cite{huangPredictingManyProperties2020}, to 
get
\begin{eqnarray}
    \mathcal{M}(P) &=&  \Expect_{U \sim \mathcal{U}_{XZ}} \sum_{b \in \{0,1\}^n} U^\dagger \ketbra{b}{b}  U \hspace{1mm} \Tr\left( U^\dagger \ketbra{b}{b} U P  \right) \\
    \nonumber
    &=& \Expect_{U \sim \mathcal{U}_{XZ}} \Tr_{B}\left[\sum_{b \in \{0,1\}^n} U^\dagger \ketbra{b}{b}U \hspace{1mm} \otimes \hspace{1mm} (U^\dagger \ketbra{b}{b} U \hspace{1mm} P )_{B}   \right] \\
    \nonumber
    & = & \Expect_{U \sim \mathcal{U}_{XZ}} \Tr_{B}\left[U^{\dagger \otimes 2} \hspace{1mm}  \sum_{b} \ketbra{bb}{b,b} \hspace{1mm} U^{\otimes 2} \hspace{1mm} (\mathds{1} \otimes P)\right] \\
    &=& \Expect_{U \sim \mathcal{U}_{XZ}} \sum_{\Tilde{P} \in \text{\rm Pauli}_{\mathds{1},Z}(n)} U^\dagger \Tilde{P} U \hspace{1mm} \tr(P \hspace{1mm} U^\dagger \Tilde{P} U).
    \nonumber
\end{eqnarray}

In the second line, we take the trace over subsystem ``$B$'', which we define to be the second subsystem (i.e., the one with $P$ in the term $\mathds{1} \otimes P$). In the third line, $\text{\rm Pauli}_{\mathds{1},Z}(n)$ is the set of all normalized Pauli strings containing only $\mathds{1}$s and $Z$s.  One can begin to see the transition from line two to line three by considering the single-qubit measurement channel. 
For a single qubit, we have $\sum_{b \in \{0,1\}}\ketbra{b,b}{b,b} = \frac{1}{2} (\mathds{1}\mathds{1} + ZZ)$. This can be extended to $n$ qubits, where the $n$-qubit version $\sum_{b \in \{0,1\}^n} \ket{b,b}\bra{b,b}$ becomes a sum of all length-$n$ $\mathds{1}$,$Z$ strings tensored twice, normalized by a factor of $1/2^n$~\cite{vankirkHardwareefficientLearningQuantum2022}. This normalization does not appear in the final expression above because we absorb it into the two $\tilde{P}$s such that they are all normalized with respect to the Hilbert-Schmidt inner product.

We can further simplify this expression by noticing that $\tr(P \hspace{1mm} U^\dagger \Tilde{P} U)$ will be nonzero only when $P = U^\dagger \Tilde{P} U$. With this observation, the expression becomes
\begin{equation}
    \mathcal{M}(P) = \left(\Expect^{}_{U \sim \mathcal{U}_{XZ}} \sum_{\substack{\Tilde{P} \in \text{\rm Pauli}_{\mathds{1},Z}(n), \\ \text{s.t.}  \Tilde{P} = U P U^\dagger }} 1 \right) \hspace{1mm} P.
\end{equation}
The quantity in parenthesis is simply an expression for $\lambda_P$, the eigenvalue corresponding to Pauli string $P$. It is the probability that one of our random unitaries $U \in \mathcal{U}_{XZ}$ rotates $P$ into a Pauli string $UPU^\dagger $ composed of only $\mathds{1}$s and $Z$s. 
We will separate the $4^n$ Pauli strings into two sets and consider their eigenvalues separately. First, let's consider the Pauli strings $P$ in $\text{\rm Pauli}_{\mathds{1},X,Z}(n)$. All sites in $P$ which have a $\mathds{1}$ will be trivially rotated back to $\mathds{1}$ because $U$ in $U P U^\dagger$ separates into $U = \otimes_{i} U_i$. All that we must consider are the non-identity sites of $P$, and there are $k(P)$ of them. Regardless of whether each site has an $X$ or a $Z$, there is a $1/2$ probability that each site is measured in the corresponding basis. Therefore, the eigenvalue of a $k(P)$-local Pauli string $P \in \text{\rm Pauli}_{\mathds{1},X,Z}(n)$ is $\lambda_{P} = 1/2^{k(P)}$.

Next, let us consider Pauli strings that are \textit{not} in $\text{\rm Pauli}_{\mathds{1},X,Z}(n)$. In other words, such Pauli strings $P$ contain at least one single-site $Y$ Pauli by definition. Since the random-$XZ$ unitaries measure each qubit in either the $X$ or the $Z$ basis, there exists no $U$ in $\mathcal{U}_{XZ}$ that would rotate $P$ to a string in $\text{\rm Pauli}_{\mathds{1},Z}(n)$. Therefore, there is no $\Tilde{P} \in \text{\rm Pauli}_{\mathds{1},Z}(n)$ such that $\Tilde{P} = U P U^\dagger$, so the eigenvalue of any Pauli string $P$ containing at least one single-site $Y$ Pauli is $\lambda_P = 0$.
    
\end{proof}

\begin{corollary}  [\textbf{Random-$XZ$ shadow norm of Pauli strings}] \label{cor:XZshadownormPauliStrings}
Assume a measurement primitive $\mathcal{U}_{XZ}$ such that the probability density function $p(U)$ is uniform over random-$XZ$  unitaries. Then, the shadow norm of a Pauli string $P \in \text{\rm Pauli}_{\mathds{1},X,Z}(n)$ is 
    \begin{equation}
    \|P\|^2_\text{shadow} = 2^{k(P)},
    \end{equation}
where $P$ is normalized with respect to the Hilbert-Schmidt inner product and $k(P)$ is the locality of $P$. Otherwise, for all Pauli strings $P \notin \text{\rm Pauli}_{\mathds{1},X,Z}(n)$, the shadow norm is $\|P\|^2_\text{shadow} = \infty$, and, therefore, the observable cannot be learned.
\end{corollary}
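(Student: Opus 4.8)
The plan is to obtain the shadow norm as the reciprocal of the measurement-channel eigenvalue, exploiting the fact—valid for Clifford-type ensembles whose eigenoperators are Pauli strings~\cite{ippoliti2023operator, hu2023classical}—that the eigenvalue $\lambda_P$ of $\mathcal{M}_{XZ}$ on $P$ is exactly $1/\|P\|^2_\text{shadow}$. Proposition~\ref{prop:XZmeasurementchannel} already supplies the entire spectrum: $\mathcal{M}_{XZ}(P) = \lambda_P P$ with $\lambda_P = 2^{-k(P)}$ whenever $P \in \text{Pauli}_{\mathds{1},X,Z}(n)$, and $\lambda_P = 0$ whenever $P$ contains at least one single-site $Y$. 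Granting the reciprocal relation, both claims of the corollary are then immediate: $\|P\|^2_\text{shadow} = \lambda_P^{-1} = 2^{k(P)}$ in the first case and $\lambda_P^{-1} = \infty$ in the second, so the bulk of the work is in justifying the reciprocal relation itself.

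To justify it in the case $P \in \text{Pauli}_{\mathds{1},X,Z}(n)$, I would insert $O = P$ into the classical-shadow variance (shadow-norm) functional~\cite{huangPredictingManyProperties2020}, using that $P$ is an eigenoperator so that $\mathcal{M}_{XZ}^{-1}(P) = \lambda_P^{-1} P$. The decisive structural fact is that $U P U^\dagger$ is diagonal in the computational basis precisely when, on each of the $k(P)$ non-identity sites, the randomly drawn single-qubit basis coincides with the Pauli type of $P$ there; this ``hit'' event has probability exactly $2^{-k(P)} = \lambda_P$, and when it fails the diagonal overlaps $\bra{b} U P U^\dagger \ket{b}$ all vanish. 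Conditioned on a hit, the physical single-shot estimator is $O(1)$ after the $\lambda_P^{-1}$ rescaling inherited from $\mathcal{M}_{XZ}^{-1}$, so the second moment scales as $\lambda_P^{-2}\cdot \lambda_P = \lambda_P^{-1}$; since this is independent of the underlying state, the maximization over states in the shadow-norm definition is trivial and returns $\lambda_P^{-1} = 2^{k(P)}$, matching the intuition in the main text that learning difficulty is the inverse of the informative-measurement probability.

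For $P \notin \text{Pauli}_{\mathds{1},X,Z}(n)$ the argument is qualitative rather than computational: $\lambda_P = 0$ places $P$ in the kernel of $\mathcal{M}_{XZ}$, i.e.\ in $\textnormal{\textsf{InvisibleSpace}}(\mathcal{U}_{XZ})$, so the Moore--Penrose pseudoinverse annihilates it and the reconstructed estimator $\tr(\hat{\rho}_1 P)$ vanishes identically, carrying no information about $\tr(\rho P)$. No unbiased estimator exists, which is recorded as the divergence $\|P\|^2_\text{shadow} = 1/\lambda_P = \infty$ and hence as unlearnability of the observable.

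The main obstacle I anticipate is not the algebra but this invisible-space case: one must argue carefully that an identically-vanishing, and therefore biased, estimator corresponds to an \emph{infinite} shadow norm signalling that the observable cannot be learned at all, rather than being mistaken for a vanishingly small variance. A lesser point requiring care is fixing the normalization convention of $P$ consistently with Proposition~\ref{prop:XZmeasurementchannel}, so that the eigenvalue is precisely $2^{-k(P)}$ and its reciprocal lands exactly on $2^{k(P)}$.
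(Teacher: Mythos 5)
Your proposal is correct and follows essentially the same route as the paper's proof: both reduce the corollary to the relation $\|P\|^2_\text{shadow} = 1/\lambda_P$ for eigenoperators of the measurement channel (cited from Refs.~\cite{ippoliti2023operator, hu2023classical}, with the eigenoperators being Pauli strings for Clifford ensembles) and then read off the eigenvalues from Proposition~\ref{prop:XZmeasurementchannel}, including $\lambda_P = 0$ for strings containing a $Y$. The only difference is that you additionally sketch a first-principles variance computation justifying the reciprocal relation in the visible case and a kernel/no-information argument in the invisible case, where the paper simply cites the literature — a harmless strengthening, subject to the normalization-convention caveat you already flag yourself.
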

\begin{proof}
The shadow norm of an eigenoperator $O$ of the measurement channel $\mathcal{M}$ is $\|O\|^2_\text{shadow} = 1/\lambda_O$, where $\lambda_O$ is the eigenvalue of $\mathcal{M}$ corresponding to $O$: $\mathcal{M}(O) = \lambda_O O$ (see Ref.~\cite{ippoliti2023operator} for a proof). The eigenoperators of our $XZ$ measurement channel $\mathcal{M}_{XZ}$ are the Pauli strings (by Lemma 1 of Ref.~\cite{bertoni2022shallow}). Therefore, the result follows from Proposition \ref{prop:XZmeasurementchannel}.
\end{proof}

The intuition behind this result is as follows. Our measurment channel $\mathcal{M}_{XZ}$ has a nontrivial null space, and as such, there are eigenoperator(s) of this channel whose eigenvalue is zero. By definition, these eigenoperators can \textit{never} be learned and are invisible under these measurement settings~\cite{vankirkHardwareefficientLearningQuantum2022}. In the language of probabilities, we do not have access to information about these eigenoperators because the probability of measuring in the eigenbases of these operators is zero. This is further elucidated by considering the shadow norm: the shadow norm of an operator in the null space of the measurement channel will be infinite. Therefore, no matter how many measurements are taken, it can never be learned.

\vspace{3mm}
\begin{corollary}  [\textbf{Visible space of random-$XZ$ measurements}] \label{cor:XZvisiblespace}
    For the set of unitaries in the ensemble $\mathcal{U}_{XZ}$ defined above, the $\textnormal{\textsf{VisibleSpace}}(\mathcal{U}_{XZ})$ on $n$ qubits is the span of $\text{\rm Pauli}_{\mathds{1},X,Z}(n)$, where $\text{\rm Pauli}_{\mathds{1},X,Z}(n)$ is the set of $n$-qubit Pauli strings constructed from only $\mathds{1}$s, $X$s, and $Z$s. 
\end{corollary}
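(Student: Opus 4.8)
The plan is to identify the visible space directly with the image of the random-$XZ$ measurement channel $\mathcal{M}_{XZ}$ and then read off that image from the spectral decomposition already established in Proposition~\ref{prop:XZmeasurementchannel}. First I would invoke the characterization (Observation 4 of Ref.~\cite{vankirkHardwareefficientLearningQuantum2022}) that, for a measurement primitive whose probability density function is strictly positive on the sampled unitaries, one has $\textnormal{\textsf{VisibleSpace}}(\mathcal{U}_{XZ}) = \mathrm{Im}(\mathcal{M}_{XZ})$. The uniform density assumed in the corollary guarantees this positivity, so no operator direction is artificially excluded by a vanishing sampling weight, and the visible space is determined entirely by which operators survive the channel.

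Next I would use the fact that the normalized Pauli strings $\{P\}$ form a complete orthonormal basis of the operator space $\mathcal{B}(\mathcal{H})$ with respect to the Hilbert--Schmidt inner product, so that any $A \in \mathcal{B}(\mathcal{H})$ decomposes as $A = \sum_P \tr(AP)\,P$. Since Proposition~\ref{prop:XZmeasurementchannel} shows that $\mathcal{M}_{XZ}$ is diagonal in exactly this basis, namely $\mathcal{M}_{XZ}(P) = \lambda_P P$ with $\lambda_P = 2^{-k(P)}$ for $P \in \text{\rm Pauli}_{\mathds{1},X,Z}(n)$ and $\lambda_P = 0$ for every $P$ containing at least one single-site $Y$, the image follows immediately:
\begin{equation}
    \mathrm{Im}(\mathcal{M}_{XZ}) = \mathrm{span}\{P : \lambda_P \neq 0\} = \mathrm{span}\,\text{\rm Pauli}_{\mathds{1},X,Z}(n),
\end{equation}
because the image of a diagonal map is spanned precisely by those eigenvectors with nonzero eigenvalue. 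Combining this with the identification from the previous step yields the claim.

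The steps are short because the hard work already lives in Proposition~\ref{prop:XZmeasurementchannel}; the only genuine subtlety is justifying that the visible space coincides with the channel image rather than, say, some larger span of observables estimable shot-by-shot. I would close this gap by recalling that the Moore--Penrose pseudoinverse $\mathcal{M}_{XZ}^{-1}$ used to build the classical shadow acts as the identity on $\mathrm{Im}(\mathcal{M}_{XZ})$ and annihilates its orthogonal complement. Consequently $\tr(\hat{\rho}_N O)$ is an unbiased estimator of $\tr(\rho O)$ exactly when $O \in \mathrm{Im}(\mathcal{M}_{XZ})$, whereas operators in the complement (those with a $Y$, which by Corollary~\ref{cor:XZshadownormPauliStrings} carry infinite shadow norm) can never be recovered no matter how many shots are taken. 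This reconciles the ``image of the channel'' description with the operational notion of ``operators whose expectation values can be estimated,'' completing the identification of $\textnormal{\textsf{VisibleSpace}}(\mathcal{U}_{XZ})$.
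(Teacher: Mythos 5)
Your proposal is correct and follows essentially the same route as the paper: invoke Observation~4 of Ref.~\cite{vankirkHardwareefficientLearningQuantum2022} to identify $\textnormal{\textsf{VisibleSpace}}(\mathcal{U}_{XZ})$ with the image of $\mathcal{M}_{XZ}$, then read off that image from the eigendecomposition in Proposition~\ref{prop:XZmeasurementchannel}. Your added remarks on the diagonal-map image and the pseudoinverse acting as the identity on the image are correct elaborations, but they do not constitute a different argument.
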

\begin{proof}
    Observation 4 of Ref.~\cite{vankirkHardwareefficientLearningQuantum2022} states that if unitaries are sampled uniformly at random in the ensemble $\mathcal{U}$, then $\text{\rm \textsf{VisibleSpace}}(\mathcal{U})$ is the image of the associated measurement channel $\mathcal{M}$. 
    Following Proposition \ref{prop:XZmeasurementchannel}, the image of the random-$XZ$ measurement channel is the span of $\text{\rm Pauli}_{\mathds{1},X,Z}(n)$: all Pauli strings composed of single-site $X$, $Z$ and $\mathds{1}$s. 
    Therefore, the $\textnormal{\textsf{VisibleSpace}}(\mathcal{U}_{XZ})$ is the span of $\text{\rm Pauli}_{\mathds{1},X,Z}(n)$.
\end{proof}

This result also informs our intuition about the $XZ$ visible space, the space of observables that we can estimate using the unitaries $\mathcal{U}_{XZ}$.
Recall that each classical shadow takes the form $\hat{\rho} = \mathcal{M}_{XZ}^{-1}(U^{\dagger}\ketbra{b}{b}U)$. Definition 3 in Ref.~\cite{vankirkHardwareefficientLearningQuantum2022} states that $U^{\dagger}\ketbra{b}{b}U \in \textnormal{\textsf{VisibleSpace}}(\mathcal{U}_{XZ})$, and per the observations above, the $\mathcal{M}$ channel is effectively block diagonal with respect to the visible and invisible spaces. Therefore, $\hat{\rho} \in \textnormal{\textsf{VisibleSpace}}(\mathcal{U}_{XZ})$, and in the infinite-measurement limit, the classical shadow will exactly be the projection of $\rho$ onto  $\textnormal{\textsf{VisibleSpace}}(\mathcal{U}_{XZ})$.  As such, we can define the \textit{projected density matrix} $\rho_{XZ}$, as the portion of the true density matrix $\rho$ that lives in the $\textnormal{\textsf{VisibleSpace}}(\mathcal{U}_{XZ})$
\begin{equation}
    \rho^{}_{XZ} = \sum_{P \in \text{\rm Pauli}_{\mathds{1},X,Z}(n)}\tr(P \rho) P.
\end{equation}
This projected density matrix represents the part of $\rho$ that is learnable using random-$XZ$ measurements.
Consequently, when we use classical shadows generated by the $XZ$ unitaries $\mathcal{U}_{XZ}$ to estimate $\ev{O}$ used in our regularization step in \appref{app:loss_function}, we find that we can only estimate operators $O$ that live in the visible space.

\vspace{2mm}
Finally, we conclude this appendix by using the results of the derivations above to find the expression for the inverse of the measurement channel $\mathcal{M}_{XZ}$. This inverse measurement channel is a crucial tool because, as discussed at the start of this appendix, it is used to compute the classical shadow. 
Here we will show that this inverse measurement channel $\mathcal{M}_{XZ}^{-1}$ takes the form stated in \equref{eq:xz_shadow_m_inverse} in the previous section. First, we can re-express the forward channel $\mathcal{M}_{XZ}$ in Proposition \ref{prop:XZmeasurementchannel} in terms of independent, single-site channels $\mathcal{M}_{XZ}^{(1)}$. This channel factorizes as  
\begin{equation}
    \mathcal{M}^{}_{XZ}(A) = 
    \left(\mathcal{M}_{XZ}^{(1)} \right)^{ \otimes n } (A) ,
    \hspace{5mm}\textnormal{ where } \hspace{1mm}
    \mathcal{M}_{XZ}^{(1)}(B) = \frac{1}{4}\left( \tr(BX)X + \tr(BZ)Z \right) + \frac{1}{2}\tr(B)\mathbb{I} \hspace{3mm} 
\end{equation}
for all $\hspace{1mm} A $ and $B$ in the $n$-qubit and single-qubit visible spaces, respectively.
Explicitly, an operator in the single-qubit visible subspace can be decomposed as follows: 
\begin{equation}\label{eq:B-decomposition}
    B = \frac{1}{2} \left( \tr(BX)X + \tr(BZ)Z + \tr(B) \mathbb{I}\right).
\end{equation}
Next, we notice that each single-site channel $\mathcal{M}_{XZ}^{(1)}$ is not only independent, but also the same on every site. Thus, all that is left is to take the Moore-Penrose pseudoinverse of this single qubit channel. This yields the following result:
\begin{equation}\label{eq:m-inverse}
    {\mathcal{M}^{-1}}_{XZ}^{(1)}(B) = \left(\tr(BX)X + \tr(BZ)Z + \frac{1}{2}\tr(B) \mathbb{I}\right),
\end{equation}
which matches our result in equation  \equref{eq:xz_shadow_m_inverse}.


\newpage
\newpage
\section{Tomographic completeness of randomized $XZ$ measurements}\label{app:reality_random_xz_measurements}

The target states considered in this work are real pure states. We are interested in such states because they are eigenstates of a Hamiltonian whose matrix elements are purely real, e.g., a Hamiltonian that is equivalent to a composition of $X$ and $Z$ operators up to a global rotation. For example, if we have a (complex) Hamiltonian $H = \sum_{i,j} Z_i Z_j + \sum_i Y_i$, then we can perform a rotation identically at each site such that $Z_i \rightarrow Z_i$ and $Y_i \rightarrow X_i$. These Hamiltonians, which are interesting from the perspective of both condensed matter physics and quantum optimization \cite{Ebadi2022Quantum,PhysRevResearch.6.013271,PRXQuantum.5.020313}, can  suffer from a sign problem~\cite{troyer2005computational}, so finding their ground states is a highly nontrivial task~\cite{hangleiterEasingMonteCarlo2020}.

In this appendix, we will explore how and when we can fully characterize real pure states. These states can be expressed as a pure-state vector $\ket{\psi} = \sum_b c_b \ket{b}$ with real coefficients $c_b^* = c_b$. For simplicity, we will assume that the basis $\{\ket{b}\}_{b\in\{0, 1\}^n}$ is the computational basis, unless indicated otherwise. 
We will demonstrate that we can fully characterize these states by only measuring each qubit in either the $X$ or the $Z$ basis (Lemma~\ref{lem:realpurestatesWithXZ}). 
While our numerical demonstrations are not always limited to measurements only in the $X$ or $Z$ bases, our motivation for specifically pursuing measurement schemes of this type is that randomized-$XZ$ measurements are tomographically complete on the subspace of real and pure states (Theorem~\ref{thm:TomCom_realpurestatesWithXZ}).

Below, this appendix is separated into two parts. First, we will provide an overarching discussion of our formal statements in order to develop our intuition and contextualize them in our broader work. We will then provide rigorous proofs of our claims.

\subsection{Discussion of results}

As elaborated on in the previous appendix, we can parametrize the learning power of a measurement protocol by looking at the visible space. 
Recall that the visible space is the space of all operators $O$ for which we can estimate the expectation value $\tr(\rho O)$. 
In particular, in Corollary \ref{cor:XZvisiblespace}, we defined the visible space corresponding to the unitaries in the random-$XZ$ ensemble $\mathcal{U}_{XZ}$. 
We found that the visible space is \textit{not} the entire operator space $\mathcal{B}(\mathcal{H})$. In other words, there are some observables that are invisible to us---they can never be learned.  With random-$XZ$ measurements, we can measure any Pauli string in $\text{\rm Pauli}_{\mathds{1},X,Z}(n)$, the set of  $n$-qubit Pauli strings constructed from only single-site $\mathds{1}$, $X$, and $Z$ Paulis. Therefore, the visible space is the span of all such Pauli strings. 
It turns out that this $XZ$ measurement scheme is sufficient for learning states that live in the visible space as well as those with some components outside the invisible space, as long as certain assumptions are satisfied. 
One such example of this is a major result of this work: randomized-$XZ$ measurements are tomographically complete on the subspace of real and pure states (Theorem \ref{thm:TomCom_realpurestatesWithXZ}).

Real, pure states do not live entirely in $\textnormal{\textsf{VisibleSpace}}(\mathcal{U}_{XZ})$. For example, the GHZ states have nontrivial support on Pauli strings with an even number of single-site $Y$ Paulis. Even though these states do not live entirely in $\textnormal{\textsf{VisibleSpace}}(\mathcal{U}_{XZ})$, we can still fully characterize them with access to the visible $XZ$ observables.  This is because we can rely on the assumption that the state is both real and pure. 
Therefore, even though real, pure states are not fully encapsulated in the visible space, the $XZ$ measurement scheme is sufficient for fully characterizing these states.
To gain some intuition, let us consider a real, pure two-qubit state vector $\ket{\psi}$. We can write this state as a density matrix and decompose it in terms of its support on all Pauli strings $P \in \text{\rm Pauli}_{\mathds{1},X,Y,Z}(n)$ as
\begin{eqnarray}
    \ketbra{\psi}{\psi} &=& \frac{1}{2^n} 
    \left( 
    \sum_{P \in \text{\rm Pauli}_{\mathds{1},X,Y,Z}(n)} c^{}_P P
    \right) \\
    &=& \frac{1}{2^n} 
    \left( 
    \sum_{P' \in \text{\rm Pauli}_{\mathds{1},X,Z}(n)} c^{}_{P'} P' 
    + c^{}_{YY} YY
    \right).
    \nonumber
\end{eqnarray}
Since the state is real, its support must be zero on strings with an odd number of $Y$s, and therefore, there is only one term remaining that contains single-site $Y$ Paulis: the two-qubit string $Y \otimes Y$. As a result, this two-qubit state does not live entirely in the visible space because $Y \otimes Y \notin \textnormal{\textsf{VisibleSpace}}(\mathcal{U}_{XZ})$. However, we can still estimate the coefficient $c_{YY}$ using the purity condition: $\Tr(\ketbra{\psi}{\psi}^2) = 1$. In other words, since our $XZ$ measurements allow us to learn all coefficients except $c_{YY}$, we can solve for $c_{YY}$ using this extra constraint. While, so far, we have been considering a real, pure $n=2$ qubit state, it turns out that this holds true for any $n$-qubit real, pure state. The information from the $XZ$ measurements allows us to extract the support of $\ketbra{\psi}{\psi}$ on Pauli strings containing an even number of $Y$s. This will be shown rigorously in the following subsection.

\subsection{Proof of Theorem 1}

In this subsection, we prove that $XZ$ measurements, where each qubit is measured in either the $X$ or the $Z$ basis, are tomographically complete on the space of real pure states. Tomographic completeness is similar to informational completeness but defined with respect to state characterization. A measurement primitive, which samples $U\sim \mathcal{U}$ and subsequently measures the rotated state in the computational basis $\{\ket{b} : b \in \{0,1\}^n\}$, is tomographically complete if and only if for each $\rho \neq \sigma$, there exist $U \in \mathcal{U}$ and $b$ such that $\bra{b} U \rho U^\dagger \ket{b} \neq \bra{b} U \sigma U^\dagger \ket{b}$ \cite{huangPredictingManyProperties2020}. 
For our purposes, we are only interested in establishing tomographic completeness on a subspace of the density matrices, namely, all states that are both real and pure.

Theorem~\ref{thm:TomCom_realpurestatesWithXZ} states that real, pure states are tomographically complete under $XZ$ measurements. While our theorem might at first be surprising, it demonstrates that there is always a unitary $U \sim \mathcal{U}_{XZ}$ such that its measurement will allow us to differentiate between real, pure states $\rho$ and $\sigma$. Furthermore, it elucidates that the visible space does not immediately render an understanding of tomographic completeness in the presence of extra assumptions.

Before providing the rigorous details, we first outline a sketch of the proof. Theorem \ref{thm:TomCom_realpurestatesWithXZ} follows immediately from Lemma \ref{lem:realpurestatesWithXZ}, which shows that we can fully characterize a real state vector $\ket{\psi}$ up to a global phase using only $XZ$ measurements. 
We prove Lemma \ref{lem:realpurestatesWithXZ} by induction on the number of qubits $k$. Our base case is a real, pure state on $k=1$ qubits. This can be immediately learned by measuring in the $X$ and $Z$ bases. The inductive step ($k$ qubits) involves first learning all the amplitudes $|c_b|$ of $\ket{\psi}$, which can again be expressed as 
\begin{equation}\label{eqn:compbasisrealpurerep}
    \ket{\psi} = \sum_{b \in \{0,1\}^n} c^{}_b \ket{b}.
\end{equation}
We learn the amplitudes by measuring every qubit in the $Z$ basis. Then, we use our inductive assumption to argue that we can always characterize all relative signs using $XZ$ measurements. In particular, our technique is more powerful than simply proving learnability, because we explicitly provide an algorithm by which one can use these measurements to fully characterize a real, pure state up to a global sign. Therefore, we show that $XZ$ measurements are tomographically complete on the space of real pure states. We leave the details of this algorithm to the interested reader (see below).

\begin{lemma}[\textbf{Random-$XZ$ measurements are sufficient for characterizing a real, pure state}]\label{lem:realpurestatesWithXZ}
Up to a global sign,
we can uniquely characterize any real state vector $\ket{\psi}$ using XZ measurements, formally defined as measuring any qubit in either the X or the Z basis. We represent our real pure state in the computational basis $\{\ket{b}\}_{b \in \{0,1\}^n}$ as
$\ket{\psi} = \sum_b c_b \ket{b}$, where $c_b^* = c^{}_b$ and $\sum_b |c_b|^2 = 1$.  We uniquely characterize the state by learning each $c_b$ up to a global sign. 
\end{lemma}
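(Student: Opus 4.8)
The plan is to prove the stronger, constructive statement by induction on the number of qubits $n$, exhibiting an explicit procedure that recovers every sign $\operatorname{sign}(c_b)$ relative to a single fixed global sign. The reality hypothesis $c_b^*=c_b$ is what makes this possible: once the magnitudes $|c_b|$ are known, the only remaining unknowns are discrete signs $\pm1$ rather than continuous phases, and $X$ and $Z$ outcomes are exactly sensitive to relative signs (but blind to general phases), so they suffice. The base case $n=1$ is immediate: for $\ket{\psi}=c_0\ket{0}+c_1\ket{1}$, measuring in $Z$ yields $|c_0|^2,|c_1|^2$ and measuring in $X$ yields $\langle X\rangle = 2c_0c_1$, whose sign fixes $\operatorname{sign}(c_0c_1)$; declaring $c_0\ge 0$ pins down the global sign.

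For the inductive step I would first measure all qubits in $Z$ to obtain every magnitude $|c_b|$. Then split on the first qubit, $\ket{\psi}=\ket{0}\otimes\ket{\psi_0}+\ket{1}\otimes\ket{\psi_1}$, where $\ket{\psi_0},\ket{\psi_1}$ are real (unnormalized) vectors on the remaining $n-1$ qubits. Measuring qubit $1$ in $Z$ and post-selecting its outcome produces samples distributed according to $\ket{\psi_0}$ or $\ket{\psi_1}$; restricting the $XZ$ scheme to the other qubits and invoking the inductive hypothesis, we characterize $\ket{\psi_0}$ and $\ket{\psi_1}$ each up to its own global sign. It then remains only to fix the single relative sign tying the two halves together. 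For this, measure qubit $1$ in $X$ and qubits $2,\dots,n$ in $Z$: for each bit-string $b'$ the asymmetry between the two $X$ outcomes equals $2c_{0b'}c_{1b'}$, which yields $\operatorname{sign}(c_{0b'}c_{1b'})$ whenever both coefficients are nonzero. Any one such $b'$ in the supports of both $\ket{\psi_0}$ and $\ket{\psi_1}$ links the two sign-frames and completes the reconstruction.

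The main obstacle is exactly the case where the supports of $\ket{\psi_0}$ and $\ket{\psi_1}$ are disjoint — the $\operatorname{GHZ}$ state $(\ket{0^n}+\ket{1^n})/\sqrt{2}$ being the canonical example — because then every product $c_{0b'}c_{1b'}$ vanishes and the single-qubit-$X$ correlator carries no information about the relative sign. The sign is nonetheless physically determinable, and the route through is to apply $X$ to several qubits at once: for two support elements $u,v$ differing on a position set $A$, measuring $A$ in $X$, the complement $\bar A$ in $Z$, and post-selecting the common $\bar A$ outcome isolates a ``slice'' state on the $|A|$ qubits of $A$, whose internal relative signs — including $\operatorname{sign}(c_uc_v)$ — are recovered by the inductive hypothesis provided $\bar A\neq\emptyset$. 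The residual hard core is a pair of support elements differing in every coordinate, where no qubit can be frozen to shrink the problem; here one uses the global $X$-basis measurement, whose outcome probabilities are the squared magnitudes of the Hadamard transform of the $c_b$, and argues that these, together with the already-known magnitudes $|c_b|$ and the reality and normalization constraints, over-determine the one remaining sign bit. Organizing these reductions so that the induction closes uniformly across all support geometries — and checking that the slice states are themselves real and hence eligible for the inductive hypothesis — is the delicate part of the argument.
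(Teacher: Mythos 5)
Your inductive skeleton---learn all magnitudes from global $Z$ measurements, split on one qubit, recover each half up to its own global sign by induction, then link the two sign frames with a single cross term---is exactly the structure of the paper's proof, and your base case and within-half reductions are sound. The genuine gap is in the linking step. Your primary device (qubit 1 in $X$, the rest in $Z$, asymmetry $2c_{0b'}c_{1b'}$) works only when the two halves' supports intersect on the remaining bits; your slice fallback works only when some cross pair $(u,v)$ agrees on at least one coordinate; and for the residual case you merely assert that global-$X$ statistics ``over-determine the one remaining sign bit.'' That residual case is not a corner to be waved off: it is precisely the GHZ-type situation that motivates the lemma, and as written your proof does not establish it. Two observations would close the hole. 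First, the hard core collapses structurally: if every $u$ in the support of the $\ket{0}$-half differs from every $v$ in the support of the $\ket{1}$-half in all coordinates, then $v=\bar{u}$ for every such pair, which forces both supports to be singletons, so the state is exactly $c_u\ket{u}+c_{\bar{u}}\ket{\bar{u}}$. Second, for such a state the global-$X$ outcome probabilities are $2^{-n}\abs{c_u+(-1)^{w(x)}c_{\bar{u}}}^2$, where $w(x)$ is the number of $-$ outcomes, so the probability of even parity is $\tfrac{1}{2}\abs{c_u+c_{\bar{u}}}^2$, which determines $\sgn(c_u c_{\bar{u}})$ since both magnitudes are known and nonzero. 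With these two steps your induction closes; without them, the proof is incomplete at its crux.

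It is worth seeing how the paper avoids this case analysis entirely, because its linking device is strictly more uniform. The paper chooses a cross pair $b^{(0)},b^{(1)}$ of nonzero-amplitude strings with \emph{minimal} Hamming distance among all cross pairs, and estimates one product observable: $\ketbra{+}{+}$ on the split qubit, the projector $\ketbra{b^{(0)}_i}{b^{(0)}_i}$ on every agreeing bit, and $X$ on every disagreeing bit (an operator expressible in $\mathds{1},X,Z$ strings, hence estimable from $XZ$ data). Minimality guarantees that projecting onto the agreeing bits annihilates every other support element, so the expectation value equals $c_{b^{(0)}}c_{b^{(1)}}$ (or, at distance one, is proportional to $\abs{c_{b^{(0)}}+c_{b^{(1)}}}^2$) for \emph{any} support geometry; your hard case is just the instance where no bits agree and the observable degenerates to $\ketbra{+}{+}\otimes X^{\otimes(n-1)}$. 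Your slice construction buys conceptual economy---every subproblem is a smaller instance of the same lemma---but at the cost of the disjoint-support fallback you left unproven, whereas the paper trades that for one small combinatorial argument (minimality kills spectators) and handles all geometries with a single formula.
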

\begin{proof}
To fully characterize the state vector $\ket{\psi}$, we must learn all coefficients $\{c_b\}$ up to a global sign.
Each coefficient is characterized by its amplitude and sign: $c_b = \sgn(c_b) \times |c_b|$. We can extract the amplitudes by measuring all qubits in the $Z$ basis -- whereafter, it remains to learn the relative signs. The remainder of this proof will show how we can learn the relative signs with $XZ$ measurements.  We will proceed by induction on the number of qubits $k$.  

\vspace{1mm}
\textit{Base case ($k=1$ qubits).} We extract the relative sign between $\ket{0}$ and $\ket{1}$ by learning the sign of the expectation value $\expval{X}{\psi}$: $\sgn(\expval{X}{\psi})$.

\vspace{1mm}
\textit{Inductive step ($k$ qubits).}
Assume we can learn all the relative signs of a real, pure state on $k-1$ qubits with $XZ$ measurements. We have a $k$-qubit state. If we measure 
our $k$th qubit in the computational $Z$ 
basis, then, by the inductive assumption, we can learn all the relative signs in the $\{\ket{0} \otimes \ket{\tilde{b}}\}_{\tilde{b} \in \{0,1\}^{k-1}}$ subspace as well as in the $\{\ket{1} \otimes \ket{\tilde{b}}\}_{\tilde{b} \in \{0,1\}^{k-1}}$ subspace. Here, the tilde on $\tilde{b}$ indicates that the bitstring is on $k-1$ qubits instead of $k$. Also note that when we project the $k$th qubit of our state vector $\ket{\psi}$ onto $\ket{0}$ or $\ket{1}$, the state on the remaining $k-1$ qubits is projected onto a real pure state.  

What is, therefore, left to be shown is that we can learn the relative sign between these two subspaces. If the amplitudes of all states in either of the subspaces are zero, then the relative sign is inconsequential because we already know the state up to a global phase. Otherwise, there must exist at least one state in each subspace that has nonzero amplitude. 
Let us choose two computational basis states: $b^{(0)}$ from the $\{\ket{0} \otimes \ket{\tilde{b}}\}_{\tilde{b}}$ subspace and $b^{(1)}$ from the $\{\ket{1} \otimes \ket{\tilde{b}}\}_{\tilde{b}}$ subspace, respectively. We choose them such that they that have nonzero amplitude and, among all possible choices, they have the smallest Hamming weight $\mathbf{d}(b^{(0)},b^{(1)})$; this Hamming weight must be greater than or equal to 1 because the $k$th bit will always differ. We can extract the relative sign of the two spaces by estimating the observable $O \in \textnormal{\textsf{VisibleSpace}}(\mathcal{U}_{XZ})$ given by
        \begin{equation}
            O = \otimes_{i=1}^k O_i; \hspace{4mm}
            O_i = \begin{cases}
                \ketbra{+}{+} &\quad\text{if } i = k ,\\
                \left \rvert b^{(0)}_i\right\rangle\left\langle b^{(0)}_i \right \lvert &\quad\text{if } b^{(0)}_i = b^{(1)}_i, \\
                X &\quad\text{otherwise}, \\
            \end{cases}
        \end{equation}
where $b^{(0)}_i$ is the $i$th bit of bitstring $b^{(0)}$, and likewise $b^{(1)}_i$ is the $i$th bit of $b^{(1)}$.  

Consider first the case when the Hamming weight is $\mathbf{d}(b^{(0)},b^{(1)}) = 1$. Then, since the $k$th qubit must always differ between the two states, all of the remaining bits are the same: $\forall i \ne k,  b^{(0)}_i = b^{(1)}_i$. Therefore, $\expval{O}{\psi}$ takes the form $|c_{b^{(0)}} + c_{b^{(1)}}|^2$. Since we already know $|c_{b^{(0)}}|$ and $|c_{b^{(1)}}|$, and by assumption they are nonzero, we can uniquely identify the relative sign between the subspaces. 
Next, we consider the case when the Hamming weight is $\mathbf{d}(b^{(0)},b^{(1)}) > 1$ and learn $\expval{O}{\psi}$. Notice that $O$'s projection of $\ket{\psi}$ onto the matching bits in $b^{(0)}$ and $b^{(1)}$ leaves only two (nonzero amplitude) basis state components: 
\begin{equation}
    \left(\otimes_{i \textnormal{ s.t. } b^{(0)}_i = b^{(1)}_i} \left\langle b^{(0)}_i \right \lvert \right) \ket{\psi}
    = 
    c^{}_{b^{(0)}}  \left(\otimes_{i \textnormal{ s.t. } b^{(0)}_i = b^{(1)}_i} \left\langle b^{(0)}_i \right \lvert \right) \ket{b^{(0)}} + c^{}_{b^{(1)}} \left(\otimes_{i \textnormal{ s.t. } b^{(0)}_i = b^{(1)}_i} \left\langle b^{(0)}_i \right \lvert \right) \ket{b^{(1)}}.
\end{equation}
This is because we chose the pair $b^{(0)}$ and $b^{(1)}$ such that they have the smallest Hamming weight among all possible pairings. Then, the $X$ operator flips any remaining bits at $i \ne k$, and we obtain a state with precisely swapped amplitudes
\begin{equation}
    \left(\otimes_{i \textnormal{ s.t. } b^{(0)}_i = b^{(1)}_i} \left\langle b^{(0)}_i \right \lvert \right) \left(c^{}_{b^{(0)}} \ket{b^{(1)}} + c^{}_{b^{(1)}} \ket{b^{(0)}}\right).
\end{equation}
Consequently, evaluating the expectation value yields $\expval{O}{\psi} = c_{b^{(0)}} \times c_{b^{(1)}}$. Since $|c_{b^{(0)}}|$ and $|c_{b^{(1)}}|$ are known and (by assumption) nonzero, we can uniquely identify the relative sign between the subspaces.

\end{proof}

\setcounter{theorem}{0}
\noindent
\begin{theorem}[\textbf{Tomographic completeness of random-$XZ$ measurements on the space of real, pure states}]\label{thm:TomCom_realpurestatesWithXZ}
A measurement primitive which samples $U\sim \mathcal{U}_{XZ}$ and subsequently measures the rotated state in the computational basis $\{\ket{b} : b \in \{0,1\}^n\}$  is tomographically complete \cite{huangPredictingManyProperties2020} on the space of real, pure states. Namely, for all states $\ketbra{\psi}{\psi} \neq \ketbra{\phi}{\phi}$, there exist $U \in \mathcal{U}_{XZ}$ and $b$ such that $\bra{b} U \rho U^\dagger \ket{b} \neq \bra{b} U \sigma U^\dagger \ket{b}$.
\end{theorem}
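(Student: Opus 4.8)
The plan is to obtain the theorem directly from Lemma~\ref{lem:realpurestatesWithXZ} via a contrapositive argument. Tomographic completeness on the real, pure subspace asserts that any two distinct states $\ketbra{\psi}{\psi} \neq \ketbra{\phi}{\phi}$ are separated by some outcome probability $\bra{b} U \ketbra{\psi}{\psi} U^\dagger \ket{b}$ with $U \sim \mathcal{U}_{XZ}$. I would instead establish the equivalent contrapositive: if two real pure states satisfy $\bra{b} U \ketbra{\psi}{\psi} U^\dagger \ket{b} = \bra{b} U \ketbra{\phi}{\phi} U^\dagger \ket{b}$ for \emph{every} $U \in \mathcal{U}_{XZ}$ and every computational-basis string $b$, then $\ketbra{\psi}{\psi} = \ketbra{\phi}{\phi}$.

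The central step is to recognize that the collection of numbers $\{\bra{b} U \ketbra{\psi}{\psi} U^\dagger \ket{b}\}_{U,b}$ is precisely the input data consumed by the reconstruction procedure of Lemma~\ref{lem:realpurestatesWithXZ}. First I would note that the choice $\alpha_i = Z$ for all $i$ (measuring every qubit in the $Z$ basis) returns the squared amplitudes $|c_b|^2$, fixing each $|c_b|$. Running over the remaining $U \in \mathcal{U}_{XZ}$ then supplies exactly the expectation values of the visible-space observables—$\sgn(\expval{X}{\psi})$ in the base case, and the operators $O$ built from $\ketbra{+}{+}$, computational projectors, and single-site $X$ in the inductive step—that the Lemma uses to fix every relative sign. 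Since the Lemma guarantees that this data determines the real state vector uniquely up to a single global sign, two real pure states with identical statistics must obey $\ket{\psi} = \pm \ket{\phi}$.

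I would close the argument by observing that the overall sign is invisible at the density-matrix level, since $\ketbra{\psi}{\psi} = \ketbra{-\psi}{-\psi}$; hence $\ket{\psi} = \pm\ket{\phi}$ forces $\ketbra{\psi}{\psi} = \ketbra{\phi}{\phi}$, which establishes the contrapositive and therefore the theorem.

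I do not expect a genuine obstacle at the level of the theorem itself, as essentially all of the work is carried by the inductive construction in Lemma~\ref{lem:realpurestatesWithXZ}. The one point requiring care is the bookkeeping that connects the abstract statistics $\bra{b}U\ketbra{\psi}{\psi}U^\dagger\ket{b}$ to the specific observables invoked in the Lemma's reconstruction: one must verify that each quantity the Lemma reads off can be written as a linear combination of the raw $XZ$ outcome probabilities. This follows from the fact that all of these observables lie in $\textnormal{\textsf{VisibleSpace}}(\mathcal{U}_{XZ})$, as characterized in Corollary~\ref{cor:XZvisiblespace}, and hence are estimable from the very measurement statistics appearing in the definition of tomographic completeness.
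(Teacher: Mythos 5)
Your proposal is correct and follows essentially the same route as the paper: the paper also reduces the theorem entirely to Lemma~\ref{lem:realpurestatesWithXZ}, arguing by contradiction (assume equal statistics for distinct states, then the Lemma's unique-characterization-up-to-global-sign forces the density matrices to coincide), which is logically identical to your contrapositive formulation. Your added remark that the Lemma's reconstruction inputs are genuinely functions of the raw $XZ$ outcome probabilities, via $\textnormal{\textsf{VisibleSpace}}(\mathcal{U}_{XZ})$, is a careful bookkeeping point the paper leaves implicit, but it is not a different argument.
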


\begin{proof}
    We proceed by contradiction. Assume that this measurement primitive is not tomographically complete on the space of real, pure states. 
    Then, there exist real, pure states $\ketbra{\psi}{\psi} \neq \ketbra{\phi}{\phi}$ such that
    \begin{equation}
        \forall\, U, b \hspace{2mm} 
        \bra{b} U \ketbra{\psi}{\psi} U^\dagger \ket{b} = \bra{b} U \ketbra{\phi}{\phi} U^\dagger \ket{b}.
    \end{equation} 
    In other words, after the application of any $U$, the resulting computational basis state probabilities will always be the same: we cannot differentiate between $\ketbra{\psi}{\psi} $ and $ \ketbra{\phi}{\phi}$ with this measurement primitive.
    However, Lemma \ref{lem:realpurestatesWithXZ} shows that this same measurement primitive can uniquely characterize, up to a global phase, both $\ket{\psi}$ and $\ket{\phi}$. 
    Therefore, since these states differ by more than just some global phase ($\ketbra{\psi}{\psi} \neq \ketbra{\phi}{\phi}$ by definition), we arrive at a contradiction.
\end{proof}

\newpage
\section{Loss function}\label{app:loss_function}

This appendix presents the details of the loss function in \equref{eq:loss_function_mps}.
In the first subsection, we provide some background for maximum likelihood estimation and derive its contribution as the first term in our loss function. In the next subsection, we explicitly state the physical observables that we use for regularization in our numerical results. While keeping most of the section self-contained, some of the notations and terminologies used have been set up in \appref{app:xz_shadow}.

\subsection{Maximum likelihood estimation: Randomized Kullback–Leibler divergence}\label{app:random_KL}
First, we review the standard framework of \emph{maximum likelihood estimation} (MLE), before applying such a loss function to our specific setting of quantum measurements. In general, we observe data from an underlying probability distribution $\mathcal{D}: X \rightarrow \mathbb{R}$ on the dataset domain $X$, and our goal is to find the most likely model with parameters $A$ denoted as $\mathcal{Q}_{A}: X \rightarrow \mathbb{R}$ that is consistent with our observations. To find such a model, we would want to variationally optimize for the model's parameters $A$ by minimizing the Kullback–Leibler (KL) divergence
\begin{align}\label{eq:kl_general}
    K(\mathcal{D}_{}| \mathcal{Q}_{A}) &= -\sum_{x\in X} \mathcal{D}_{}(x) \log {\frac{\mathcal{Q}_{A}(x)}{\mathcal{D}_{}(x)}}.
\end{align}
Since the data distribution $\mathcal{D}$ is a constant, we just need to minimize the cross-entropy contribution to the KL divergence
\begin{align}\label{eq:cross_entropy_general}
    \mathds{H}(\mathcal{D}_{}| \mathcal{Q}_{A}) &= -\sum_{x \in X} \mathcal{D}_{}(x) \log {\mathcal{Q}_{A}(x)}. 
\end{align}

More specifically, for our purposes, the dataset consists of bitstring measurements of a target state $\dyad{\phi}$ in the computational basis after applying a unitary $U$. In a randomized measurement scheme, such as the $XZ$-shadow measurements discussed in \appref{app:xz_shadow}, we first draw a unitary from an ensemble $\mathcal{U}$ with a uniform probability over a set $\{U\}$. Then, we measure the bitstring $b$ in the computational basis. Taken together, we sample a data point, which is a tuple of the sampled unitary and the measured bitstring $x_i=(U_i, b_i)$, with probability 
\begin{equation}\label{eq:data_distribution_bitstring_unitary}
\mathcal{D}[(U, b)] = \frac{1}{\abs{\{U\}}} \times \expval{U \dyad{\phi} U^\dagger}{b}.
\end{equation}
Similarly, we define the probability of observing this data point from a model parametrized by $A$ as $\mathcal{Q}_{A}$.
Then, the \textit{randomized} KL divergence is 
\begin{subequations}
    \begin{align}\label{eq:kl_randomized_bitstring}
    K(\mathcal{D}| \mathcal{Q}_{A}) &= - \Expect^{}_{U \sim \mathcal{U}} \sum_{b \in \{0, 1\}^n} \mathcal{D}(U, b) \log {\frac{\mathcal{Q}_{A}(U, b)}{\mathcal{D}(U, b)}}.
    \end{align}
\end{subequations}
As in \equref{eq:cross_entropy_general}, we can drop the constant factor from the data distribution and obtain the randomized cross entropy or the infinite-sample loss function
\begin{equation}\label{eq:cross_entropy_bitstring}
    \mathds{H}(\mathcal{D}| \mathcal{Q}_{A}) = - \Expect^{}_{U \sim \mathcal{U}} \sum_{b \in \{0, 1\}^n} \mathcal{D}(U, b) \log \left[\mathcal{Q}_{A}(U, b)  \right].
\end{equation}
 For $N$-samples i.i.d.\ drawn from $\mathcal{D}$ as our dataset denoted by $D=\{(U_i, b_i)\}_{i=1}^{N}$, this leads to the negative log-likelihood (NLL) contribution to our loss function which is
\begin{equation}\label{eq:loss_finite_bitstring}
L^{D}_{\textrm{NLL}}(A) = -\frac{1}{N}\sum_{i=1}^{N} \log \mathcal{Q}_{A}(b_i, U_i).
\end{equation}
In this work, we focus on a model distribution given by a MPS $\ket{\psi_{\chi}(A)}$ with tensors $A$ and bond dimension $\chi$ such that
\begin{equation}\label{eq:model_distribution_bitstring_unitary}
\mathcal{Q}_{A}(U, b) = \frac{1}{\abs{\{U\}}} \times \abs{\bra{b}U \ket{\psi_{\chi}(A)}}^2.
\end{equation}
Dropping the constant normalization factor from the uniform distribution over the unitary ensemble gives us the MLE contribution described in \equref{eq:loss_function_mps} of the main text.

\subsection{Regularization}\label{app:regularization}

In this section, we discuss the second contribution to our loss function, which additionally regularizes the optimization landscape of our training process. Note that these estimations from finite-$N$ samples have deviations from the true physical expectation values; in general, we do not want our regularization strength to be dominant. 

In the main text, we numerically explore different regularization strengths. Moreover, we have focused on two measurement schemes: global-$XZ$ measurements, and random-$XZ$ measurements. For the former, we measure bitstrings in two fixed bases $X^{\otimes n}$ and $Z^{\otimes n}$, and we can estimate certain physical observables using empirical means. For the latter, using the toolbox of randomized measurements, we apply random unitaries from an ensemble and measure bitstrings in the computational basis. Depending on the choice of the unitary architectures, different physical properties can be estimated. Performing classical shadow tomography using random-$XZ$ measurements---as detailed in \appref{app:xz_shadow}---allows us to estimate any observables in the visible space of our measurement. 
We now discuss our choices of regularizers for the two systems considered in the main text: 

\textit{Perturbed surface code.} Here, we use the stabilizers $\mathcal{S}$ of the exact surface code defined in \appref{app:state_surface_code}. The intuition behind such a choice is that we know that the stabilizer expectations will change continuously with the perturbation and that these operators would have large weight in the density matrix. The regularization contribution to the loss function is
\begin{equation}\label{eq:reg_surface_code}
    R(A) = \sum_{\substack{S \in  \mathcal{S}}} \abs{\ev{S}_{N} - \expval{S}{\psi(A)} }^2,
\end{equation}
where $\ev{S}_{N}$ are estimates from the measurements and the second term is the model's prediction. The regularizer $R(A)$ would be modulated by $\beta$, which is a hyperparameter controlling its strength. The knowledge of these specific observables allows us to estimate their expectation values in two measurement schemes.

With global-$XZ$ measurements in the bases $X^{\otimes n}$ and $Z^{\otimes n}$, we estimate the stabilizer expectations using empirical means. For instance, to estimate $S_{Z, 3}=Z_1 Z_2$ from fixed $Z^{\otimes n}$ basis measurements $\{Z_1^{(i)} Z_2^{(i)} \cdots Z_n^{(i)}\}_{i=1}^N$, we average over the measurement outcomes
\begin{equation}
    \ev{Z_1 Z_2}_{N,\,\text{Global}} = - \frac{1}{N} \sum_{i=1}^{N} Z_1^{(i)}Z_2^{(i)}.
\end{equation}
In the scheme of random-$XZ$ measurements, we use classical shadow tomography for such estimation. In particular, we can estimate any $k$-local observable $S_{Z,\, l}$ using the classical shadows $\hat{\rho}_N$ defined in \equref{eq:xz_shadow_Nshot} as
\begin{equation}
    \ev{S_{Z,\, l}}_{N,\,\text{Random}} = \tr(S_{Z,\, l} \hat{\rho}_N).
\end{equation}
As discussed in \appref{app:xz_shadow}, the number of samples required to estimate such observables accurately scales exponentially as $\mathcal{O}(2^k)$. Here, a key surprise of our protocol is that even though we only regularize with physically local observables $k \ll n$, we are able to see an improvement in global quantities as the infidelity is reduced.

\textit{Rydberg atom arrays.} More generally, if we do not know \textit{a priori} which operators are important---as is indeed the case for non-fixed-point states realizable in quantum simulators such as Rydberg atom arrays---we would need to consider all the operators that one can estimate using a finite number of samples. The intuition behind the regularization contribution is that capturing more physical observables accurately is consistent with improvement of the state fidelity. In fact, describing a subsystem with a size comparable to the correlation length guarantees an efficient learning of a one-dimensional injective quantum state using MPS tomography~\cite{cramerEfficientQuantumState2010, baumgratzScalableReconstructionDensity2013}. Here, using random-$XZ$ measurements, the space of the operators we can estimate is spanned by all combinations of $\mathds{1},X,Z$ strings normalized with respect to the Hilbert-Schmidt inner product $\text{\rm Pauli}_{\mathds{1},X,Z}(n)$.  This motivates us to define the \textit{projected density matrix} 
\begin{equation}\label{eq:projected_rdm}
    \rho^{}_{XZ} = \sum_{P \in \text{\rm Pauli}_{\mathds{1},X,Z}(n)}\tr(P \rho) P.
\end{equation}
Now, the classical shadows learnable using random-$XZ$ measurements 
[cf.\ \equref{eq:xz_shadow_m_inverse}] converge to this projection in the infinite-data limit, i.e.,
\begin{equation}
    \rho_{XZ} = \lim_{N \rightarrow \infty}\hat{\rho}^{}_m.
\end{equation}
In the finite-sample regime, we regularize on observables that are supported on subsystems $B$ of $6$ qubits, described by a classical shadow
\begin{equation}
    \hat{\rho}_{N}^B = \otimes^{}_{i\in B} \left(2 U^{a_i \dagger}  \dyad{b_i} U^{a_i} - \frac{\mathbb{I}}{2} \right).
\end{equation}
Such a projection can also be computed from our MPS model $\ket{\psi(A)}$. First, we trace out the complement $\Bar{B}$ 
\begin{equation}
    \tilde{\rho}^B(A) = \tr_{\Bar{B}}(\dyad{\psi(A)}).
\end{equation}
Next, we project the reduced density matrix onto  $\textnormal{\textsf{VisibleSpace}}(\mathcal{U}_{XZ})$ as
\begin{equation}
    \tilde{\rho}_{XZ}^B(A) = \sum_{P \in \text{\rm Pauli}_{\mathds{1},X,Z}(n)} \tr(P \tilde{\rho}^B(A)) P.
\end{equation}
Putting all the pieces together, we regularize over these projected density matrices---supported on all the unit cells that cover the lattice---using the second term in our loss function
\begin{align}\label{reg:projected_rdm_shcatten2norm}
            R(A) = \norm{\hat{\rho}_{N}^B -\tilde{\rho}_{XZ}^B(A)}_{2}.
\end{align}
Here, the matrix norm is given by the Schatten-$2$ norm (or Frobenius norm)
\begin{equation}
    \norm{O}_2 = \left[ \sum_{i,j} O_{i,j}^2 \right]^{\frac{1}{2}}.
\end{equation}

\newpage
\section{Dataset and state generation with DMRG}\label{app:dmrg_state}
In this appendix, we discuss the generation of our numerical dataset using Gibbs sampling from the target MPS. The target MPS is variationally optimized using the \emph{density-matrix renormalization group} (DMRG) algorithm. Here, we outline a snippet of  the numerical code written in the Python language.  The calculations are performed using an open source library quimb~\cite{grayQuimbPythonPackage2018}.
All of the code used to generate our numerical results is available at: \href{https://github.com/teng10/tn-shadow-qst}{https://github.com/teng10/tn-shadow-qst}.

\subsection{Perturbed surface code}\label{app:state_surface_code}
In this section, we discuss the details of our surface code numerics and provide the relevant code snippets used to generate our dataset. The perturbed surface code Hamiltonian for a system of $n$ qubits is defined as
\begin{equation}
    H_{\text{sc}} = - \sum_{l}S_{Z,\, l} - \sum_{l}S_{X,\, l} - h_z \sum_{i}Z_{i}, \quad S_{Z,\, l}=\prod_{i\in\square_l} Z_i, \quad S_{X,\, l}=\prod_{i\in\square_l} X_i,
\end{equation}
where $\mathcal{S} = \{S_{Z,\, l}, S_{X,\, l}, \, l = 1, \cdots, \frac{n-1}{2} \}$ is the set of $n-1$ stabilizers around a square, as illustrated in \figref{fig:2_surface_code}(a). To construct the Hamiltonian for an $(L_x, L_y) = (3, 3)$ system, the following lines create an object called \textit{PhysicalSystem} and generate its matrix product operator representation.
\begin{lstlisting}[language=Python]
surface_code = physical_systems.SurfaceCode(3, 3)
surface_code_mpo = surface_code.get_ham()
\end{lstlisting}
To run the DMRG algorithm, first, a random MPS is selected as an initial state. Then, the MPO and this initial state are used to solve for an MPS approximation of the ground state with a given bond dimension. 
\begin{lstlisting}[language=Python]
import quimb.tensor as qtn
size = 9
bond_dim = 10
mps = qtn.MPS_rand_state(size, bond_dim)
dmrg = qtn.DMRG2(surface_code_mpo, bond_dims=bond_dim, p0=mps)
dmrg.solve()
result_mps = dmrg.state
\end{lstlisting}
Finally, to generate the dataset for training, we first select a measurement scheme (e.g., random-$XZ$ [see \appref{app:reality_random_xz_measurements}]), and specify the number of samples required.
\begin{lstlisting}[language=Python]
init_seed = 42 # for reproducibility
sampling_method = `x_or_z_basis_sampler` # random XZ measurements
num_samples = 1000
ds = run_data_generation._run_data_generation(init_seed, num_samples, sampling_method, mps)
\end{lstlisting}

\subsection{Ruby-lattice quantum spin liquid}
In this section, we include the details of our DMRG calculation for preparing ground states of the Rydberg Hamiltonian on a ruby lattice~\cite{semeghiniProbingTopologicalSpin2021}. Specifically, we consider a system of atoms arrayed on a ruby lattice with an aspect ratio of $\rho=\sqrt{3}$ [\supfigref{supfig:rydberg_cylinder}] and cylindrical boundary conditions. The Hamiltonian is given by
\begin{align}\label{eq:rydberg_ham}
    H_{\rm Ryd} = \frac{\Omega}{2} \sum_{\ell}X_{\ell} - \Delta \sum_{\ell} \frac{1 + Z_{\ell}}{2}  + \frac{1}{2}\sum_{\ell, \ell^\prime} \frac{V_{\ell, \ell^\prime}}{4} (1 + Z_{\ell})(1 + Z_{\ell^\prime}).
\end{align}
Denoting by $\ket{b} = \ket{\pm}$ the eigenstate of the Pauli $Z$ operator in the computational basis (i.e., $Z\ket{b} = b\ket{b}$), we work with a sign convention such that $b=+1$ ($-1$) corresponds to an atom being in the Rydberg (ground) state.

Depending on the ratio of the detuning to the Rabi frequency, $\Delta/\Omega$, different quantum phases can be prepared~\cite{samajdarQuantumPhasesRydberg2021,verresenPredictionToricCode2021}.
The native van der Waals interaction potential decays rapidly with the distance as $V_{\ell, \ell^\prime} = {V_0}/{\abs{r_{\ell} - r_{\ell^\prime}}^6}$. Choosing the interaction strength such that $\Omega \ll V_{\ell, \ell^\prime}$, the strong repulsion between neighboring atoms implements the Rydberg blockade effect, such that no two atoms are allowed to be simultaneously excited within a blockade radius of $R_b = ({V_0}/{\Omega})^{\frac{1}{6}}$. While, in principle, the interactions act between all pairs of atoms, in practice, an approximated potential obtained by truncating the long-range tails of the interaction is known to capture the essential physics of the spin-liquid phase~\cite{verresenPredictionToricCode2021}. In our DMRG simulations, we only retain interactions up to third-nearest neighbors ($r = 2a$, where $a$ is the lattice spacing), as illustrated in \figref{fig:3_rydberg_ruby}(a). Setting $\Omega=1$, we approximate the potential as a step function
\begin{equation}
\label{eq:int}
    V_{\ell, \ell^\prime} = \begin{cases}
        47 \Omega, & \quad \abs{r_{\ell} - r_{\ell^\prime}} \leq 2a, \\
        0, & \quad \text{otherwise}.
    \end{cases}
\end{equation}
Note that unlike the so-called PXP model~\cite{verresenPredictionToricCode2021}, in which the Hilbert space is constrained to forbid simultaneous excitations of neighboring atoms, here, the blockade is only imposed as an energetic penalty. The  interaction scale in Eq.~\eqref{eq:int} follows from the choice of the blockade radius, $R_b=3.8a$. Moreover, the Rabi frequency $\Omega$ is positive, so the present Hamiltonian is nonstoquastic~\cite{bravyiComplexityStoquasticFrustrationfree2008}, without guarantees of positive wavefunction amplitudes in the computational basis. 

In the present work, we focus on 
the ruby-lattice Rydberg Hamiltonian with periodic boundary conditions along the $y$-direction and open boundary conditions along the $x$-axis. We perform numerical simulations for a lattice of size $(L_x, L_y) = (4, 2)$; the unit cell has $6$ basis atoms, so the total number of qubits is $n=48$. To mitigate finite-size effects, we compensate for the fewer neighboring atoms of any site along the boundary by adding a local field $h_{\text{bd}}=-0.6 \Omega$ to the regions shaded by the blue boxes in \supfigref{supfig:rydberg_cylinder}:
\begin{equation}\label{eq:rydbgerm_ham_boundary}
    H_{\text{Ryd, bd}} = H_{\text{Ryd}} - h_{\text{bd}} \sum_{\ell \in {\partial}} \frac{1 + Z_{\ell}}{2}.
\end{equation}
\begin{lstlisting}[language=Python]
delta = 1.7 # detuning frequency for spin liquid state
boundary_z_field = -0.6 # adding an onsite field to mitigate boundary effects
rydberg = physical_systems.RubyRydbergPXP(4, 2)
rydberg_mpo = rydberg.get_ham(delta, boundary_z_field)
\end{lstlisting}

\begin{figure}[h!]
    \centering
    \includegraphics[width=0.6\textwidth]{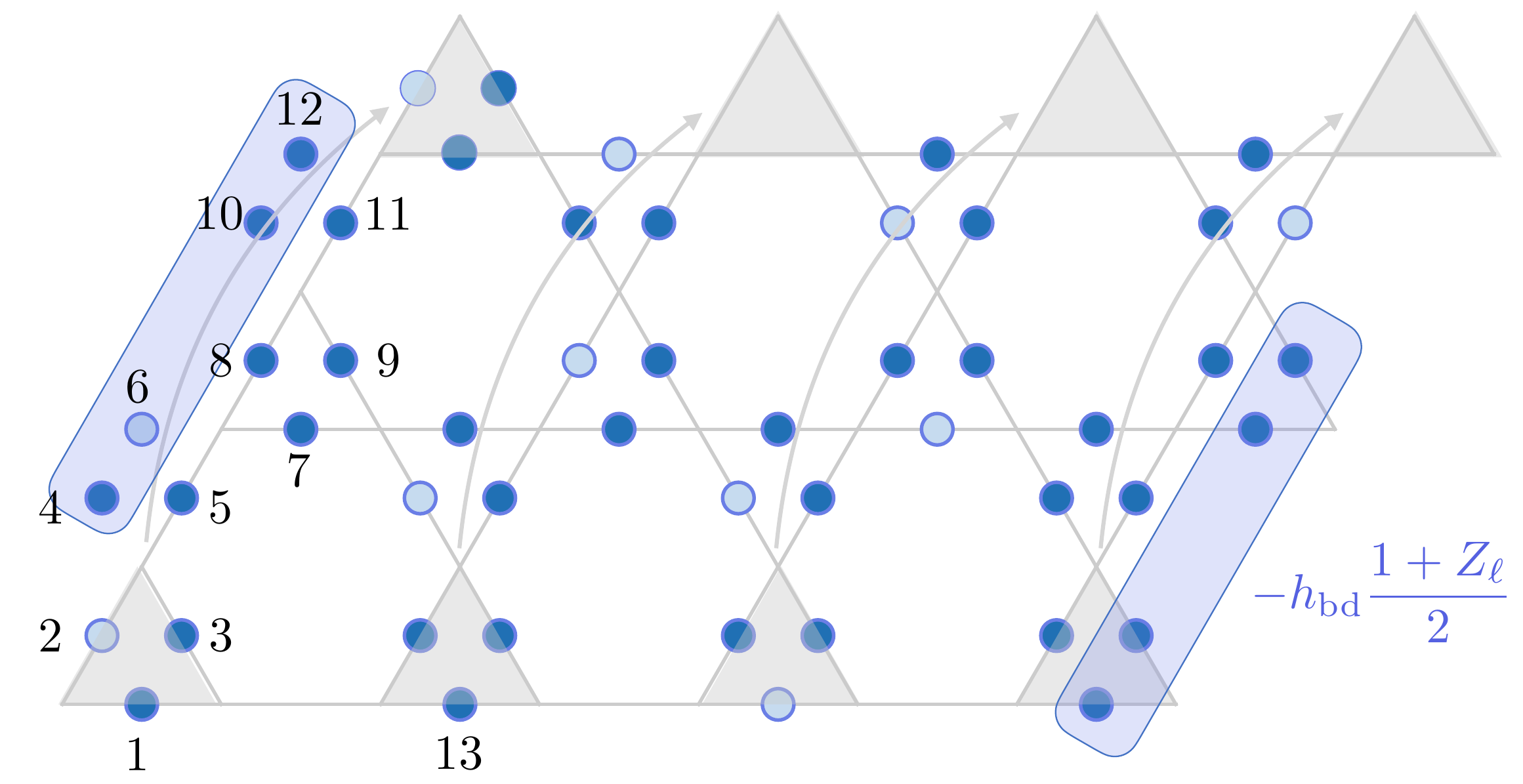}
    \caption{Rydberg atom array on a cylinder. The ruby lattice is placed on a cylinder, imposing periodic (open) boundary conditions along the $y$ ($x$)-axis. The numbers label the snake-like path for the tensors in the MPS representation of the ground state. An onsite field applied to the sites in the blue boxes compensates for the reduced coordination number of atoms situated along the open boundaries, thereby preventing them from getting pinned.}
    \label{supfig:rydberg_cylinder}
\end{figure}

\newpage
\section{Training details}\label{app:training}
In this appendix, we discuss our approach to optimizing \emph{matrix product state} (MPS) tensors through simultaneous updates of all tensor components by gradient descent. We then highlight the selection process for models likely to converge in the absence of direct access to the infidelity. 
This can be done by partitioning datasets into training and test datasets and postselecting models that successfully generalize on the test dataset. This method's efficacy is illustrated through the analysis of training and test \emph{negative log-likelihood} (NLL) metrics. As evidenced in the surface code example, we evaluate the model performance's with sample size to ensure robust generalization.

\subsection{Optimization scheme via gradient descent}
We approach the optimization of \emph{matrix product state} (MPS) tensors by updating all tensors simultaneously. While the target states we are learning are real states, we use MPS tensors with complex variables because we have found that optimization over complex tensors is more robust, as also observed in Ref.\ ~\cite{gomezReconstructingQuantumStates2022}. The minimization of our loss function involves the logarithm of sums over expectation values of $U_i^\dagger \dyad{b_i} U_i$ defined by the measured unitary-bitstrings $D = \{(U_i, b_i)\}_{i=1}^N$,  
\begin{align}\label{eq:loss_function_mpo}
   L_{\text{NLL}}^D(A) &= -\frac{1}{N}\sum_{i=1}^{N} \log \left(\matrixel{\psi(A^*)}{U^{\dagger}_i}{b_i}\matrixel{b_i}{U_i}{\psi(A)} \right).
\end{align}   
For simplicity, we have considered the loss function without a regularization with $\beta=0$. While $U_i^\dagger \dyad{b_i} U_i$ can be represented as a \emph{matrix product operator} (MPO),  
this loss function is different from the traditional eigenvalue problem of finding the ground state of a local Hamiltonian typically considered in DMRG calculations.
Given the presence of the logarithmic term in our cost function, previous work has considered a ``DMRG-like'' sweeping algorithm~\cite{wangScalableQuantumTomography2020}. However, to simplify the training process, we have adopted global updates of MPS tensors over local sweeping updates and find that this strategy suffices for our purpose. Future improvements are possible by adopting optimization techniques on the MPS manifold~\cite{luchnikov2021riemannian, hauru2021riemannian} as discussed in \ref{app:mle-discussions}, particularly focusing on the projection of gauge degrees of freedom to refine tensor updates. 

Note that in order to minimize the loss function, we calculate gradients for all MPS tensors in the direction opposite to that dictated by the derivative of the complex conjugate \footnote{To see this, let $f(z, \bar{z}): \cc \rightarrow \rr$ be a real-valued function of variables $z=x + i y$ and $\bar{z}=x - i y$. Note that the loss function could be nonholomorphic. An update to minimize the loss function is given by $(x^{\prime} + i  y^{\prime}) = x - \delta \pdv{f}{x} + i (y - \delta \pdv{f}{x})$. This is the same as $z^{\prime} = z - \frac{\delta}{2} \pdv{f}{\bar{z}}$. Numerically, we compute $\pdv{L}{A}$ using JAX and take the direction given by its complex conjugate $(\pdv{L}{A})^*$.}
    \begin{align}
        A^{\prime} = A - \delta \frac{\dd L}{\dd \bar{A}} .
    \end{align}    
To integrate with machine-learning optimization techniques, our procedure employs a two-step strategy: initial descent towards an approximate global minimum using first-order \emph{stochastic gradient descent} (SGD), followed by refinement through a second-order quasi-Newton's method, the \emph{limited-memory Broyden–Fletcher–Goldfarb–Shanno} (L-BFGS) algorithm. This combination allows for efficient preliminary optimization followed by detailed adjustments.


\subsection{Model selection}\label{app:training_schemes}
In applying our protocol to a laboratory  state, in general, we would not have access to the infidelity. Here, we outline the steps to select models that are likely to converge without access to the fidelity. In the context of machine learning, a common approach is to divide the full dataset into the training and test datasets. First, we train the MPS model on only the training dataset partition and then evaluate the \emph{negative log-likelihood} (NLL) in \equref{eq:loss_finite_bitstring} using the test dataset. For the model to not be ``overfitting'' the training dataset, we would want both the training and test NLLs to be small. Specifically, we consider the surface code numerics as an example below. In \supfigref{fig:appedix_model_selection}, we plot both the training NLL and the test NLL for different numbers of samples. We observe that for a small number of samples, while we achieve a lower training NLL, the test NLL is higher, indicating overfitting of the MPS to the training measurements. On the other hand, an MPS trained with a large number of samples achieves a lower infidelity and hence, better generalization. 

A different approach is to estimate the fidelity of the learned MPS model with the laboratory state~\cite{wangScalableQuantumTomography2020} by using classical shadow tomography with global random unitary ensembles, which could be hard to implement in practice. Finally, we remark that a recent protocol would also allow us to efficiently certify the fidelity using single-qubit measurements given a query model~\cite{huang2024certifying}, in our case, an MPS.
\begin{figure}[h!]
    \centering
    \includegraphics[width=\textwidth]{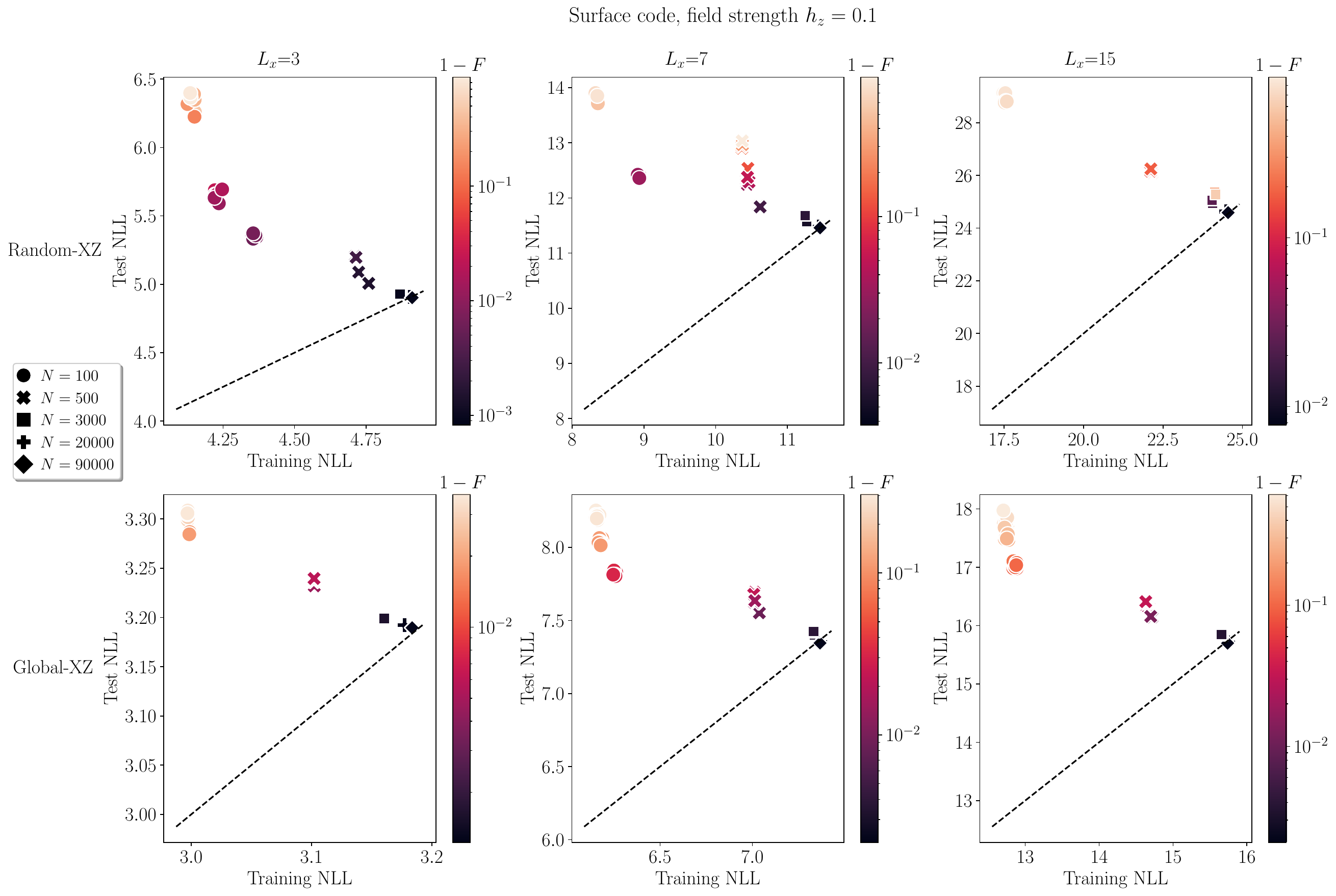}
    \caption{Model selection by comparing training and test losses for a surface code with different system sizes and measurement schemes. On the dashed lines, the training NLL $=$ test NLL; ideally, the model should be selected along the dashed line \textit{and} with a small training NLL. The models which have small NLLs and are close to the dashed line yield high fidelities (darker colors).}
    \label{fig:appedix_model_selection}
\end{figure}

\newpage

\section{Measurement noise}\label{app:noise}
In this section, we discuss the effect of measurement noise in our tomographic protocol. As we have emphasized in the main text, the motivations for choosing our random-$XZ$ measurements are twofold: 1) such measurements are sufficient and tomographically complete for the real pure states of interests; 2) these measurements only require single-qubit rotations before measurements, suggesting their experimental feasibility. Nevertheless, there are imperfections in realistic measurements even in the most well-developed systems. A rigorous treatment of such errors is left for future work. Here, we numerically study how the fidelity of the reconstructed state is affected by measurement noise.

Rydberg atom arrays have seen significant recent progress, achieving impressive gate fidelities. In digital simulations, qubits are typically encoded in the hyperfine states $\{\ket{0}, \ket{1}\}$ of $^{87}$Rb atoms. In this setting, the most significant errors originate from many-qubit gates such as the controlled-phase entangling gate. 
Currently, the state-of-the-art two-qubit entangling gate fidelity is $99.5\%$ and single-qubit gate fidelity reaches $99.9\%$~\cite{evered2023high}.

In contrast, the states studied in this work are prepared through analog evolution, where interactions are mediated by the strong repulsive forces between atoms in the Rydberg state $\ket{r}$ in a two-level system of $\{\ket{1}, \ket{r}\}$. Since the Rydberg state is short-lived and hard to measure directly, a ``coherent mapping'' between the Rydberg and hyperfine manifolds is necessary. The mapping can be achieved by first mapping the hyperfine excited state to the ground state $\ket{1} \rightarrow \ket{0}$ via a Raman $\pi$ pulse, and then mapping the Rydberg state to the hyperfine excited state $\ket{r} \rightarrow \ket{1}$ via a Rydberg $\pi$ pulse. Beyond the imperfections of the pulses (which could be neglected), there are additional sources of errors from the coherent mapping. Such a protocol is analogous to a hybrid simulation presented in Fig.~4 of Ref.~\cite{bluvsteinQuantumProcessorBased2022}, wherein a state is prepared via analog simulation and thereafter manipulated by digital evolution. The main sources 
of errors during such a coherent mapping are~\cite{bluvsteinQuantumProcessorBased2022} the following: 
\begin{enumerate}
    \item \textit{Failure of the Rydberg $\pi$ pulse.} With a small probability, two neighboring $^{87}$Rb atoms both occupy the Rydberg state $\ket{r}$, and such a violation of the Rydberg blockade would dramatically shift the energy level. When this occurs, the pair of Rydberg states cannot be converted to the hyperfine states, leading to an incomplete sample.
    \item \textit{Off-resonance of Rydberg $\pi$ pulse.} Even without such a violation, the long-range interaction between the Rydberg states would slightly detune the Rydberg $\pi$ pulse, leading to a different phase factor in the mapped state.
    \item \textit{Dephasing during Raman $\pi$ pulse (Doppler shifts).} When the atoms are idling, their atomic motions lead to off-resonance of the Raman pulse, called the Doppler shift. As the Raman pulse has a small frequency, and is therefore slow compared to the Rydberg pulse, the atoms can accumulate a more substantial random phase. 
\end{enumerate}

In summary, because a single-qubit gate of an analog-prepared state requires coherent mappings between the Rydberg state and the hyperfine state, additional errors during such mappings contribute to the errors in these single-qubit gates. It is therefore important to carefully consider these measurement errors in future work. As a first step, we consider a simple bit-flip noise model where $\ket{0}$ and $\ket{1}$ are misclassified with each other with a probability of $\xi$. We compare the fidelity achieved from training with a noisy dataset with probability $\xi$ for the $3\times 3$ surface code state in \supfigref{supfig:noise}. As expected, higher noise levels lead to a reduced fidelity. The infidelity \wwc{scales continuously with $N$---even in the presence of noise---}at an initial stage of training, and exhibits a plateau with a different scaling rate after a certain threshold number of measurements. Such a scaling is worth exploring systematically in future work.

\begin{figure}[b]
    \centering
    \includegraphics[width=0.32\linewidth]{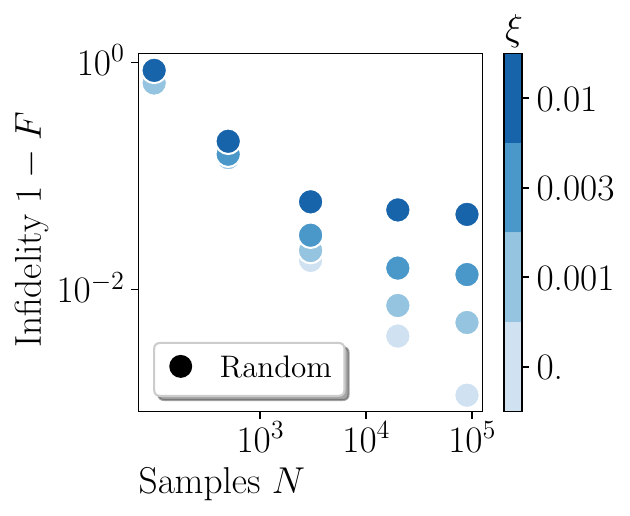}
    \caption{\wwc{Robustness of the tomographic protocol to measurement noise.
We study the effect of measurement noise modeled by a symmetric bit-flip channel, where measurement outcomes $\ket{0}$ and $\ket{1}$ are flipped with probability $\xi$. The plot shows the infidelity $1-F$ of the reconstructed $3 \times 3$ surface code state as a function of the number of samples $N$, for varying noise rate $\xi$. Results are shown for the random-$XZ$ measurements.}}
    \label{supfig:noise}
\end{figure}
\end{appendix}
\end{document}